\documentclass[11pt,a4paper]{article}

%------------------------------------------------------------
% PACKAGES
%------------------------------------------------------------
\usepackage[utf8]{inputenc}
\usepackage[T1]{fontenc}
\usepackage[english]{babel}
\usepackage{amsmath,amssymb,amsthm,mathtools}
\usepackage{bm}
\usepackage{physics}
\usepackage{graphicx}
\usepackage{hyperref}
\usepackage{enumerate}
\usepackage{geometry}
\geometry{margin=2.5cm}
\usepackage{amsmath,amssymb,amsfonts}
\usepackage{mathtools}
\usepackage{bm}
\usepackage{booktabs}
\usepackage{graphicx}
\usepackage{tikz}
\usepackage{pgfplots}
\pgfplotsset{compat=newest}
\usepackage{subcaption}
\usepackage{siunitx}
\usepackage{caption}

\hypersetup{
  colorlinks=true,
  linkcolor=blue,
  citecolor=blue,
  urlcolor=blue
}

%------------------------------------------------------------
% THEOREM ENVIRONMENTS
%------------------------------------------------------------
\newtheorem{theorem}{Theorem}

\newtheorem{proposition}{Proposition}
\newtheorem{corollary}{Corollary}

\theoremstyle{definition}
\newtheorem{definition}{Definition}
\newtheorem{remark}{Remark}

%------------------------------------------------------------
% COMMANDS
%------------------------------------------------------------
\newcommand{\R}{\mathbb{R}}
\newcommand{\C}{\mathbb{C}}
\newcommand{\SU}{\mathrm{SU}}

\DeclareMathOperator{\cH}{\mathcal{H}}
\DeclareMathOperator{\cL}{\mathcal{L}}
\DeclareMathOperator{\cD}{\mathcal{D}}

\newcommand{\ad}{\mathrm{ad}}
\newcommand{\Ad}{\mathrm{Ad}}

\newcommand{\fg}{\mathfrak{g}}
\usepackage{amsmath,amssymb,amsfonts}
\usepackage{graphicx}
\usepackage{subcaption}
\usepackage{booktabs}
\usepackage{siunitx}
\sisetup{round-mode=places, round-precision=4}
\usepackage{microtype}

%------------------------------------------------------------
% TITLE
%------------------------------------------------------------
\title{Structure–Preserving Optimal Control of Open Quantum Systems via a Discrete Contact PMP}

\author{Leonardo J. Colombo\thanks{Centre for Automation and Robotics, Spanish National Research Council (CSIC). Carretera de Campo Real, km 0, 200, 28500 Arganda del Rey, Spain. (leonardo.colombo@csic.es). The author acknowledges financial support from Grant PID2022-137909-NB-C22 funded by the Spanish Ministry of Science and Innovation. 
}}

\date{\empty}

%------------------------------------------------------------
\begin{document}
\maketitle

\begin{abstract}
We develop a discrete Pontryagin Maximum Principle (PMP) for controlled open
quantum systems governed by Lindblad dynamics, and introduce a second--order
\emph{contact Lie--group variational integrator} (contact LGVI) that preserves
both the CPTP (completely positive and trace--preserving) structure of the
Lindblad flow and the contact geometry underlying the discrete PMP.  
A type--II discrete contact generating function produces a strict discrete
contactomorphism under which the state, costate, and cost propagate in exact
agreement with the variational structure of the discrete contact PMP.

We apply this framework to the optimal control of a dissipative qubit and
compare it with a non--geometric explicit RK2 discretization of the Lindblad
equation.  
Although both schemes have the same formal order, the RK2 method accumulates
geometric drift (loss of trace, positivity violations, and breakdown of the
discrete contact form) that destabilizes PMP shooting iterations, especially
under strong dissipation or long horizons.  
In contrast, the contact LGVI maintains exact CPTP structure and discrete
contact geometry step by step, yielding stable, physically consistent, and
geometrically faithful optimal control trajectories.

\end{abstract}

%============================================================

%============================================================

%% Comentario (español): aquí podés hacer un mini survey + motivación.
\section{Introduction}

The design of reliable numerical methods for simulation and optimal control of
open quantum systems has become a central challenge in quantum information
processing, quantum sensing, and dissipative quantum engineering
\cite{WisemanMilburn2009,AltafiniTicozzi2012,
BonnardChybaSugny2009,BonnardCotsShcherbakovaSugny2010,BonnardSugny2009,
ClarkBlochColomboRooneyCDC2017,ClarkBlochColomboRooneyDCDSS2019}.
At the level of density operators $\rho\in\mathbb{C}^{n\times n}$, the dynamics
of an open quantum system with coherent control $u(t)$ are governed by a
Lindblad master equation,
\[
   \dot{\rho}(t)
     = -i[H(u(t)),\rho(t)]
       + \sum_j L_j\rho(t)L_j^\dagger
       - \tfrac12\{L_j^\dagger L_j,\rho(t)\},
\]
the generator of a completely positive trace--preserving (CPTP) semigroup
\cite{Lindblad1976,GoriniKossakowskiSudarshan1976,breuer}.
The physical state space,
\[
   \cD = \{\rho\ge0,\; \rho=\rho^\dagger,\; \tr(\rho)=1\},
\]
is the convex ``Bloch ball'' in the qubit case; preserving positivity, trace,
and Hermiticity at the discrete level is mandatory for obtaining physically
meaningful simulations and control laws.

A second source of geometric structure enters when the system is optimized
through an indirect method such as the Pontryagin Maximum Principle (PMP).
The PMP introduces a costate $P$ and an accumulated cost variable $z$, and the
continuous-time extremals satisfy an extended system
$(\rho,P,z)$ evolving on a \emph{contact manifold}
\cite{Ohsawa2015Automatica,DeLeon2023JNS}.
While the Lindblad equation itself is \emph{not} a contact dynamical system,
the PMP couples the forward dissipative dynamics of $\rho$ with a backward
evolution of observables $P$ and a Herglotz-type evolution of the cost $z$,
yielding a genuine contact Hamiltonian flow.

This distinction is crucial.  
A numerical method for optimal control of open quantum systems must preserve
\emph{two} incompatible-looking structures: (i) the \textbf{quantum geometry} of the state evolution,
           requiring each discrete step to be CPTP; and (ii) the \textbf{optimal-control geometry} of the PMP,
            requiring the extended discrete map
            $(\rho_k,P_{k+1},z_k)\mapsto(\rho_{k+1},P_k,z_{k+1})$
            to be a \emph{discrete contactomorphism}.

If either structure is broken, the implications are severe:
loss of positivity, trace drift, degradation of the stationarity conditions,
ill-posed costate recursions, and ultimately failure of PMP shooting methods,
especially on long horizons or under strong dissipation.
These pathologies are consistent with classical results on geometric drift in
non-structure-preserving schemes
\cite{HairerLubichWanner2006,MarsdenWest2001,LeokShingel2012,
CelledoniOwren2014}.

Existing time-stepping methods for Lindblad dynamics include explicit and
implicit Runge--Kutta schemes, commutator-free Magnus integrators, and
Lie--group Runge--Kutta--Munthe--Kaas (RKMK) methods
\cite{IserlesMuntheKaasNørsettZanna2000,BlanesCasas2005,MuntheKaas1998}.
Although effective in short-time simulation, these integrators typically fail to preserve positivity or trace, thus violating CPTP structure; do not preserve the contact geometry of the PMP extended dynamics; accumulate large unphysical drift (negativity, trace explosion) under
      strong dissipation or long horizons; and produce costate instabilities that undermine the PMP stationarity
      conditions.

This motivates a fundamental question: \emph{
Can one construct a numerical method that simultaneously preserves
(i) CPTP quantum dynamics and (ii) the contact geometry of the discrete PMP,
thus providing a fully geometric integrator for optimal control of open
quantum systems?} In this work we answer this question affirmatively.

In particular, the main contributions of this paper are the following: 

(1) We derive a fully geometric \textbf{discrete contact PMP} on the quantum
state space $\cD$.
We show that the discrete extremals arise from a type--II discrete contact
Lagrangian and generating function, extending classical geometric PMP results
\cite{Boltyanskii1978,AssifChatterjeeBanavar2020,PhogatChatterjeeBanavar2016,
Maria} to dissipative open quantum dynamics.

(2) We construct a second-order \textbf{contact Lie--group variational
integrator} (contact LGVI) for Lindblad equations.  
The integrator combines exact Kraus maps for the dissipative subflow with
exact Lie-group unitary propagators in a Strang splitting
\cite{Strang1968}.  
Each step is CPTP by construction and the extended PMP update is a 
discrete contactomorphism.

(3) We embed this LGVI into a discrete PMP shooting algorithm and compare it
against a classical second-order non-contact integrator (RKMK).
We demonstrate that preserving both CPTP and contact geometry is essential for
numerical stability of the costate dynamics, the stationarity condition, and
the maximization of the discrete Hamiltonian.

(4) Through numerical experiments, including long-horizon Lindblad simulation
and optimal control of a dissipative qubit, we show that the contact LGVI
avoids geometric drift, maintains physical constraints,
and yields stable and accurate optimal control trajectories.
By contrast, non-contact schemes exhibit severe positivity violations, trace
explosion, and instability of the PMP iterations.

\medskip

While variational and Lie-group methods for closed quantum systems are
well-established, and geometric approaches to optimal control have been
developed in symplectic and contact settings, a scheme that simultaneously
preserves (i) CPTP Lindblad structure and (ii) the discrete contact geometry of
the PMP appears to be absent from the literature.
This paper unifies these two geometries and provides a robust framework for
simulation and control of dissipative quantum systems.

%============================================================
%============================================================
\section{Continuous-Time Contact Pontryagin Maximum Principle}
\label{sec:contact_discrete_pmp}
%============================================================

In this section we recall the continuous-time Pontryagin Maximum Principle (PMP) 
in a form that makes explicit its contact-geometric structure.  
We also relate this formulation to the contact PMP of Ohsawa 
\cite{Ohsawa2015Automatica} 
and to the presymplectic/contact reduction of de~Le\'on--Lainz--Mu\~noz-Lecanda 
\cite{DeLeon2023JNS}.  
This provides the geometric foundation for the discrete-time construction 
presented in the next section.

%------------------------------------------------------------
\subsection{Extended state space, contact form, and Hamiltonian}
%------------------------------------------------------------

Let $Q$ be a smooth manifold and $U$ the set of admissible controls.  
We consider a control-affine system
\begin{equation}
  \dot x(t) = f(x(t),u(t)), 
  \qquad x(t)\in Q,\; u(t)\in U,
\end{equation}
and an optimal control problem of Bolza type with cost functional
\begin{equation}
  J(u) 
    := \int_0^T L\bigl(x(t),u(t)\bigr)\,\mathrm{d}t 
       + \Phi\bigl(x(T)\bigr),
\end{equation}
where $L : Q\times U \to \mathbb{R}$ is the running cost and 
$\Phi : Q \to \mathbb{R}$ is a terminal (Mayer) cost.

Introduce the \emph{cost accumulator} $z(t)$ defined by
\begin{equation}
  \dot z(t) = L\bigl(x(t),u(t)\bigr), 
  \qquad z(0)=0,
\end{equation}
so that $z(T)=\int_0^T L(x(t),u(t))\,\mathrm{d}t$.

The natural geometric space for the PMP is the extended manifold 
$M := T^\ast Q \times \mathbb{R}$, with coordinates $(x,p,z)$, where 
$p\in T^\ast_xQ$ is the costate.  
We endow $M$ with the canonical contact 1-form
\[
  \theta := \mathrm{d}z - \langle p, \mathrm{d}x\rangle.
\]
Introduce the \emph{cost accumulator} $z(t)$ via
\[
  \dot z(t) = L(x(t),u(t)), 
  \qquad z(0)=0,
\]
so that $z(T) = \int_0^T L(x(t),u(t))\,\mathrm{d}t$.

The \emph{control Hamiltonian} is the map
\[
  H : (T^\ast Q \times \mathbb{R})\times U \to \mathbb{R}, 
  \qquad
  H(x,p,z,u) := \langle p,f(x,u)\rangle + L(x,u),
\]
which in particular is independent of $z$.  
This convention corresponds to the usual finite-dimensional Lagrange multiplier
derivation and is equivalent to the more classical choice
$H_c(x,p,z,u)=\langle p,f(x,u)\rangle - L(x,u)$ up to a change of sign of the
cost multiplier.

The \emph{maximized} Hamiltonian is
\[
  \mathcal{H}(x,p,z) := 
  \sup_{u\in U} H(x,p,z,u),
\]
assumed to be attained for the optimal control $u^\ast(t)$.

The PMP states that, if $u^\ast$ is optimal, then there exists a 
nontrivial costate trajectory $p(t)\in T^\ast_{x(t)}Q$ such that 
$(x^\ast(t),p(t),z^\ast(t))$ satisfies the following system on the contact manifold
$(M,\theta)$:
\begin{subequations}
\begin{align}
  \dot x(t) &= \partial_p H_c\bigl(x(t),p(t),z(t),u^\ast(t)\bigr), 
  \label{eq:contact-x}\\[4pt]
  \dot p(t) &= -\partial_x H_c\bigl(x(t),p(t),z(t),u^\ast(t)\bigr),
  \label{eq:contact-p}\\[4pt]
  \dot z(t) &= 
    \big\langle p(t),\partial_p H_c(x(t),p(t),z(t),u^\ast(t)) \big\rangle 
    - H_c(x(t),p(t),z(t),u^\ast(t)).
  \label{eq:contact-z}
\end{align}
\end{subequations}
The control is characterized almost everywhere by the \emph{pointwise maximum condition}
\begin{equation}
  u^\ast(t)\in
  \arg\max_{v\in U} 
  H_c\bigl(x(t),p(t),z(t),v\bigr).
  \label{eq:max-cond}
\end{equation}

For a Bolza-type terminal cost $\Phi(x(T))$ and fixed final time $T$,  
the \emph{transversality condition} is the standard one:
\begin{equation}
  p(T) = \mathrm{d}_x\Phi\bigl(x(T)\bigr),
  \qquad z(T)\;\text{free}.
  \label{eq:transversality}
\end{equation}
No additional boundary conditions arise for $z(T)$, since the total cost is 
precisely the value of $z(T)$.

Using that $\partial_p H_c(x,p,z,u)=f(x,u)$ and that 
\(
  H_c(x,p,z,u)=\langle p,f(x,u)\rangle - L(x,u),
\)
equation~\eqref{eq:contact-z} reduces to
\[
  \dot z(t) = L(x(t),u^\ast(t)),
\]
as expected.

%------------------------------------------------------------

%------------------------------------------------------------

Next, let $X_{H_c}$ be the vector field on $M$ defined by 
\eqref{eq:contact-x}--\eqref{eq:contact-z}.  
A direct computation using Cartan's identity $\mathcal{L}_{X}\theta 
  = \iota_{X}(\mathrm{d}\theta) + \mathrm{d}(\iota_{X}\theta)$ shows that
\begin{equation}
  \mathcal{L}_{X_{H_c}}\theta 
  = (\partial_{z}H_c)\,\theta .
  \label{eq:contact-Lie-final}
\end{equation}
Equation \eqref{eq:contact-Lie-final} is the defining property of a 
(cooriented) \textit{contact Hamiltonian vector field} on $(M,\theta)$: 
for a contact Hamiltonian $H$, the identities 
\[
  \iota_{X_H}\theta = -H,
  \qquad
  \iota_{X_H}\mathrm{d}\theta 
    = \mathrm{d}H - (\partial_{z}H)\,\theta
\]
imply precisely 
\(
  \mathcal{L}_{X_H}\theta = (\partial_{z}H)\theta.
\)
Since in our setting $H_c$ is independent of $z$, we have $\partial_{z}H_c=0$ 
and therefore
\[
  \mathcal{L}_{X_{H_c}}\theta = 0,
\]
so the PMP flow preserves the contact structure on $(M,\theta)$. Therefore, the PMP flow is a \emph{contact Hamiltonian trajectory} on $(M,\theta)$ 
in the sense of standard contact geometry \cite{GeigesContact}.

\medskip

This explicit contact viewpoint is equivalent, in the normal case, 
to the contact-geometric PMP of Ohsawa 
\cite{Ohsawa2015Automatica}, where the contact structure is written on the 
projectivized cotangent bundle $P(T^\ast(\mathbb{R}\times Q))$,  
and to the reduced contact structure of the presymplectic framework 
developed in~\cite{DeLeon2023JNS}.  
Working directly on $(T^\ast Q\times\mathbb{R},\theta)$ provides 
a simpler coordinate model that is particularly suitable for deriving a 
discrete contact PMP.

%------------------------------------------------------------
\begin{remark}[Relation to Ohsawa's contact PMP]
Ohsawa~\cite{Ohsawa2015Automatica} formulates the continuous-time PMP as a 
contact Hamiltonian system on the projectivized cotangent bundle 
$P(T^\ast(\mathbb{R}\times Q))$, where the running cost $x_0$ is treated as an 
extended state and the costate is represented projectively.  
In local coordinates $(x_0,x,\lambda)$, the canonical contact form 
$-\mathrm{d}x_0+\lambda\,\mathrm{d}x$ and the associated contact Hamiltonian 
reproduce exactly the classical PMP equations.

Identifying $z=x_0$, $p=\lambda$ and 
$\theta=\mathrm{d}z-p\,\mathrm{d}x$, our formulation on 
$M=T^\ast Q\times\mathbb{R}$ corresponds to the normal chart of Ohsawa’s 
contact manifold (i.e.\ where the projective costate has $\nu_0\neq 0$).  
We adopt this ``deprojectivized'' representation because it provides a simpler 
local model that is particularly convenient for the discrete contact construction 
introduced in the next section.\hfill$\diamond$
\end{remark}

%------------------------------------------------------------
\begin{remark}[Relation to the presymplectic/contact approach of 
de~Le\'on--Lainz--Mu\~noz-Lecanda]
De~Le\'on, Lainz and Mu\~noz-Lecanda~\cite{DeLeon2023JNS} formulate the PMP as a 
presymplectic Hamiltonian system on $T^\ast(\mathbb{R}\times Q)$ and show that, 
for normal extremals, the dynamics reduce to a contact Hamiltonian system on an 
appropriate submanifold.  
Abnormal extremals remain presymplectic and do not share the same contact 
structure.

Our formulation works directly with the contact manifold 
$M=T^\ast Q\times\mathbb{R}$ and the form 
$\theta=\mathrm{d}z-p\,\mathrm{d}x$, which corresponds (in the normal case) to 
the reduced contact structure of~\cite{DeLeon2023JNS}.  
This simplified local model is the one we discretize in the next section.\hfill$\diamond$
\end{remark}

The discrete-time Pontryagin Maximum Principle was developed primarily by
Boltyanskii (see \cite{Boltyanskii1978} and the references
therein) and discrete time is the setting of our current work. Discrete
formulations of the PMP on manifolds, such as the formers and the geometric
approach of Assif--Chatterjee--Banavar~\cite{AssifChatterjeeBanavar2020},
provide discrete adjoint and stationarity conditions but do not enforce
preservation of the underlying contact structure of the continuous PMP. 
Consequently, naïve discretizations may distort the geometry, especially 
over long horizons. In the next section we develop a discrete PMP whose
update map is a \emph{discrete contactomorphism}, yielding a
structure-preserving analogue of the continuous contact PMP and a natural
discrete counterpart to Ohsawa’s contact formulation.

%============================================================
%============================================================
\section{A Discrete Contact PMP on Manifolds}
\label{sec:discrete_contact_pmp}
%============================================================

We now construct a discrete counterpart of the continuous-time contact PMP
developed in Section~\ref{sec:contact_discrete_pmp}. 
Our formulation improves upon the geometric discrete PMP of 
Assif--Chatterjee--Banavar~\cite{AssifChatterjeeBanavar2020} by 
making the underlying contact geometry explicit. 
Following the philosophy of discrete contact variational principles 
\cite{VermeerenContactVI,Anahory2021JNS}, the discrete state--costate update 
is obtained from a discrete \emph{contact generating function}. 
As a consequence, the resulting discrete PMP map is a genuine 
\emph{discrete contactomorphism}, thus retaining the geometric structure of the 
continuous contact Hamiltonian flow rather than providing merely 
first-order optimality conditions.

Let $t_k = k\Delta t$, $k=0,\dots,N$, and $T=N\Delta t$. 
The control is a sequence $\{u_k\}_{k=0}^{N-1}$ with $u_k\in U$, and 
the state evolves on a smooth manifold $Q$.

%------------------------------------------------------------

We assume discrete dynamics of the form
\begin{equation}
  x_{k+1} = F(x_k,u_k),
  \label{eq:discrete-dynamics}
\end{equation}
where $F:Q\times U\to Q$ is a $C^1$ map and $x_0$ is fixed.  

The discrete Bolza cost functional is
\begin{equation}
  J_d(\{u_k\}) = \sum_{k=0}^{N-1} L_d(x_k,u_k) + \Phi(x_N).
\end{equation}

In contrast to standard discretizations where 
$L_d(x,u)=L(x,u)\Delta t+O(\Delta t^2)$ is chosen ad hoc, we take $L_d$ to be a 
consistent approximation of the \emph{exact discrete contact Lagrangian} 
associated with the continuous contact PMP flow, in the sense of 
\cite{VermeerenContactVI,Anahory2021JNS}.  
This ensures that the resulting discrete scheme preserves the contact geometry.

As in the continuous case, we introduce an accumulated cost variable:
\begin{equation}
  z_{k+1} = z_k + L_d(x_k,u_k), 
  \qquad z_0=0.
  \label{eq:z_update}
\end{equation}
Since $L_d$ approximates $\int_{t_k}^{t_{k+1}} L(x(t),u(t))\,\mathrm{d}t$, the time 
step $\Delta t$ is already incorporated, and no explicit factor appears in 
\eqref{eq:z_update}.  
The extended discrete state is $(x_k,z_k)\in Q\times\mathbb{R}$.

%------------------------------------------------------------

Let $p_{k+1}\in T_{x_{k+1}}^\ast Q$ denote the discrete costate associated
with the constraint $x_{k+1}=F(x_k,u_k)$.  
We introduce the \emph{discrete contact Hamiltonian}
\begin{equation}
  H_d(x_k,p_{k+1},z_k,u_k)
    := \big\langle p_{k+1}, F(x_k,u_k)\big\rangle 
       + L_d(x_k,u_k),
  \label{eq:Hd-def}
\end{equation}
which is a smooth map
\[
  H_d : T^\ast Q \times \mathbb{R} \times U \to \mathbb{R},\qquad
  (x_k,p_{k+1},z_k,u_k)\mapsto H_d(x_k,p_{k+1},z_k,u_k).
\]

We also define the associated type-II \emph{discrete contact generating function}
\begin{equation}
  S_d(x_k,p_{k+1},z_k,u_k)
    := z_k + L_d(x_k,u_k) 
       + \big\langle p_{k+1}, F(x_k,u_k)-x_k\big\rangle .
  \label{eq:Sd-def}
\end{equation}

Note that $H_d$ does not depend on the cost variable $z_k$, so we may occasionally write $H_d(x_k,p_{k+1},u_k)$ for brevity.
Its domain is still $T^\ast Q\times\mathbb{R}\times U$, but the dependence on $z_k$ is trivial.

Furthermore, differentiation with respect to the control variable yields
\begin{equation}
  \partial_u S_d(x_k,p_{k+1},z_k,u_k)
    = \partial_u L_d(x_k,u_k)
      + \big\langle p_{k+1},\partial_u F(x_k,u_k)\big\rangle
    = \partial_u H_d(x_k,p_{k+1},z_k,u_k).
  \label{eq:dSu_equals_dHu}
\end{equation}
Hence the stationarity conditions $\partial_u S_d = 0$ and
$\partial_u H_d = 0$ are equivalent.

The specific combination of terms in $S_d$ is not arbitrary.  
In the continuous contact PMP, the identity 
$\dot z = \langle p,\dot x\rangle - H_c(x,p,z,u)$ 
characterizes contact Hamiltonian flows.  
Up to the sign convention on the cost multiplier, this can be rewritten using
a Hamiltonian of the form $\widetilde H(x,p,u)=\langle p,f(x,u)\rangle + L(x,u)$,
which is the natural form arising from the finite-dimensional Lagrange 
multiplier derivation.  
Our discrete Hamiltonian $H_d$ follows this latter convention.  

The discrete generating function $S_d$ reproduces the same structural pattern:
it contains $z_k$, the discrete analogue of the accumulated cost, and the term 
$\langle p_{k+1},F(x_k,u_k)-x_k\rangle + L_d(x_k,u_k)$ in exactly the same 
additive configuration as in the continuous identity.  
Moreover, a direct computation using \eqref{eq:Hd-def} and \eqref{eq:Sd-def} gives
\begin{equation}
  \partial_x S_d(x_k,p_{k+1},z_k,u_k)
    = \partial_x H_d(x_k,p_{k+1},z_k,u_k) - p_{k+1},
  \label{eq:dSd_dx_relation}
\end{equation}
so that the adjoint recursion can be written equivalently as
\[
   p_k - p_{k+1}
    = \partial_x S_d(x_k,p_{k+1},z_k,u_k).
\]

Next, we define the extended discrete manifold $M_d := T^\ast Q \times \mathbb{R}$ and, for each time index $k$, the discrete contact form
\begin{equation}
  \Theta_k := \mathrm{d}z_k - \langle p_k, \mathrm{d}x_k\rangle,
  \label{eq:Theta-k}
\end{equation}
which is the exact discrete analogue of the continuous contact form 
$\theta = \mathrm{d}z - \langle p,\mathrm{d}x\rangle$.  
Differentiating \eqref{eq:Sd-def} and using \eqref{eq:Theta-k} yields the
discrete contact identity
\begin{equation}
  \mathrm{d}S_d
    = \Theta_k 
      + \langle p_{k+1},\mathrm{d}x_{k+1}\rangle
      - \Theta_{k+1},
  \label{eq:dSd-identity_explicit}
\end{equation}
an identity valid on the extended space with coordinates 
$(x_k,p_k,z_k,x_{k+1},p_{k+1},z_{k+1})$ before imposing the discrete 
dynamics. This is the discrete analogue of the continuous contact identity
$\mathcal{L}_{X_H}\theta = (\partial_z H)\theta$ and will be used below to
prove that the resulting PMP update map is a discrete contactomorphism.

%------------------------------------------------------------
% Discrete contact PMP
%------------------------------------------------------------

We now state the discrete analogue of the contact PMP.

\begin{theorem}
\label{thm:discrete_contact_pmp}
Let $\{u_k^\ast\}_{k=0}^{N-1}$ be optimal for the discrete problem.  
Then there exists a nontrivial costate sequence 
$\{p_k^\ast\}_{k=1}^N$, $p_k^\ast\in T_{x_k^\ast}^\ast Q$, such that 
for $k=0,\dots,N-1$, the quadruple 
$(x_k^\ast,p_{k+1}^\ast,z_k^\ast,u_k^\ast)$ satisfies:

\begin{enumerate}

\item \textbf{State and cost updates:}
\begin{subequations}
\begin{align}
  x_{k+1}^\ast &= F(x_k^\ast,u_k^\ast), \\
  z_{k+1}^\ast &= z_k^\ast + L_d(x_k^\ast,u_k^\ast).
\end{align}
\end{subequations}

\item \textbf{Adjoint equation:} for $k=1,\dots,N-1$,
\begin{equation}
  p_k^\ast 
    = \partial_x L_d(x_k^\ast,u_k^\ast)
      + \big(\partial_x F(x_k^\ast,u_k^\ast)\big)^\ast p_{k+1}^\ast,
  \label{eq:discrete_adjoint}
\end{equation}
or equivalently
\begin{equation}
  p_k^\ast 
    = \partial_x H_d(x_k^\ast,p_{k+1}^\ast,z_k^\ast,u_k^\ast),
\end{equation}
and, using \eqref{eq:dSd_dx_relation},
\begin{equation}
  p_k^\ast - p_{k+1}^\ast
    = \partial_x S_d(x_k^\ast,p_{k+1}^\ast,z_k^\ast,u_k^\ast).
  \label{eq:discrete_adjoint_Sd}
\end{equation}

\item \textbf{Terminal condition:}
\begin{equation}
  p_N^\ast = \mathrm{d}_x \Phi(x_N^\ast).
\end{equation}

\item \textbf{Stationarity:}
\begin{equation}
  \partial_u H_d(x_k^\ast,p_{k+1}^\ast,z_k^\ast,u_k^\ast) = 0,
  \qquad k=0,\dots,N-1,
  \label{eq:discrete_stationarity_dH}
\end{equation}
or, under the standing continuity and compactness assumptions,
\begin{equation}
  u_k^\ast \in 
  \arg\max_{u\in U} H_d(x_k^\ast,p_{k+1}^\ast,z_k^\ast,u),
  \label{eq:discrete_stationarity}
\end{equation}
for $k=0,\dots,N-1$.

\end{enumerate}
\end{theorem}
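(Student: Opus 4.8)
The plan is to derive the four groups of conditions by the classical augmented‑functional (Lagrange multiplier) argument, organized so that the costate, the accumulated cost $z_k$, and the generating function $S_d$ appear in exactly the roles dictated by \eqref{eq:Hd-def}--\eqref{eq:dSd-identity_explicit}. Concretely, introduce
\[
  \mathcal{J}_d(\{x_k\},\{u_k\},\{p_k\})
    := \sum_{k=0}^{N-1} L_d(x_k,u_k) + \Phi(x_N)
       + \sum_{k=0}^{N-1}\big\langle p_{k+1},\, F(x_k,u_k)-x_{k+1}\big\rangle ,
\]
with $x_0$ fixed and $z_k$ defined by \eqref{eq:z_update}. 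On the feasible set the constraint terms vanish, so $\mathcal{J}_d=J_d$ there; hence an optimizer $(\{x_k^\ast\},\{u_k^\ast\})$ is a critical point of $\mathcal{J}_d$ under all admissible variations $\delta x_1,\dots,\delta x_N$ (with $\delta x_0=0$) and $\delta u_0,\dots,\delta u_{N-1}$, for a suitable multiplier sequence $\{p_k^\ast\}_{k=1}^N$ with $p_k^\ast\in T^\ast_{x_k^\ast}Q$. Working in a local chart of $Q$ around the optimal points turns every pairing $\langle\cdot,\cdot\rangle$ and every adjoint map $(\partial_xF)^\ast,(\partial_uF)^\ast$ into the usual cotangent pullback; the conditions obtained are chart‑independent. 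Part~1 of the theorem is then immediate: the state update is the feasibility constraint \eqref{eq:discrete-dynamics} and the cost update is the defining recursion \eqref{eq:z_update}.

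Next I would vary the state. For $1\le k\le N-1$ the variable $x_k$ enters $\mathcal{J}_d$ only through $L_d(x_k,u_k)$, through $\langle p_{k+1},F(x_k,u_k)\rangle$, and through $-\langle p_k,x_k\rangle$, so $\partial_{x_k}\mathcal{J}_d=0$ gives
\[
  p_k^\ast=\partial_x L_d(x_k^\ast,u_k^\ast)+\big(\partial_x F(x_k^\ast,u_k^\ast)\big)^\ast p_{k+1}^\ast ,
\]
which is \eqref{eq:discrete_adjoint}; by the definition \eqref{eq:Hd-def} this equals $p_k^\ast=\partial_x H_d(x_k^\ast,p_{k+1}^\ast,z_k^\ast,u_k^\ast)$, and combining with the already‑established relation \eqref{eq:dSd_dx_relation} yields the contact form \eqref{eq:discrete_adjoint_Sd}. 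The endpoint variation $\delta x_N$ meets only $\Phi(x_N)$ and $-\langle p_N,x_N\rangle$, so $\partial_{x_N}\mathcal{J}_d=0$ forces $p_N^\ast=\mathrm{d}_x\Phi(x_N^\ast)$, the transversality condition of Part~3. Nontriviality of $\{p_k^\ast\}$ is the usual normal‑case statement: the cost multiplier has been normalized to $\lambda_0=1$ in $\mathcal{J}_d$, and degeneracy is excluded by the standing regularity hypotheses on $(F,L_d,\Phi)$.

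Then I would vary the control. Since $u_k$ enters $\mathcal{J}_d$ only via $L_d(x_k,u_k)$ and $\langle p_{k+1},F(x_k,u_k)\rangle$, criticality gives $\partial_u L_d(x_k^\ast,u_k^\ast)+(\partial_u F(x_k^\ast,u_k^\ast))^\ast p_{k+1}^\ast=0$, i.e.\ the stationarity condition \eqref{eq:discrete_stationarity_dH}, whose equivalence with $\partial_u S_d=0$ is exactly \eqref{eq:dSu_equals_dHu}. To upgrade this to the global maximum form \eqref{eq:discrete_stationarity}, I would use a single‑step control exchange: fix an index $j$, replace $u_j^\ast$ by an arbitrary admissible $v\in U$ while keeping $u_k^\ast$ for $k\ne j$, and let the trajectory readjust, $x_{j+1}(v)=F(x_j^\ast,v)$ and $x_{k+1}(v)=F(x_k(v),u_k^\ast)$ for $k>j$. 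Telescoping $\sum_{k\ge j}\langle p_{k+1}^\ast,\,x_{k+1}(v)-x_{k+1}^\ast\rangle$ against the adjoint recursion, the leading contribution to $J_d(v)-J_d(u_j^\ast)$ along the optimal trajectory collapses to $-\big(H_d(x_j^\ast,p_{j+1}^\ast,z_j^\ast,v)-H_d(x_j^\ast,p_{j+1}^\ast,z_j^\ast,u_j^\ast)\big)$; optimality then forces $H_d(x_j^\ast,p_{j+1}^\ast,z_j^\ast,v)\le H_d(x_j^\ast,p_{j+1}^\ast,z_j^\ast,u_j^\ast)$ for all $v\in U$, which is \eqref{eq:discrete_stationarity}.

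The main obstacle is precisely this last step. Unlike the continuous PMP, where a needle variation on a shrinking interval makes the downstream perturbation infinitesimal so that the linearization becomes exact in the limit, a single‑step exchange in discrete time produces a \emph{finite} perturbation $x_{j+1}(v)-x_{j+1}^\ast=F(x_j^\ast,v)-F(x_j^\ast,u_j^\ast)$, and the second‑ and higher‑order propagation terms through $x_{j+2},\dots,x_N$ do not vanish on their own. This is the classical gap between the \emph{weak} discrete maximum principle — the stationarity condition \eqref{eq:discrete_stationarity_dH}, which holds unconditionally — and the \emph{strong} one \eqref{eq:discrete_stationarity}: closing it requires an additional convexity hypothesis in the spirit of Boltyanskii~\cite{Boltyanskii1978}, namely direction‑convexity of the velocity set $\{F(x,v):v\in U\}$ (or, more simply, concavity of $u\mapsto H_d$), which I would include among the standing continuity and compactness assumptions; under that hypothesis the exchange argument closes exactly, and otherwise only the stationarity form survives. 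I emphasize that the contact‑geometric content — the identity \eqref{eq:dSd-identity_explicit} and the reformulations through $S_d$ — is not needed to establish optimality; it enters here only through \eqref{eq:dSd_dx_relation} and \eqref{eq:dSu_equals_dHu}, which merely repackage the Euler–Lagrange conditions in contact form, while its genuine role (showing that $(x_k,p_{k+1},z_k)\mapsto(x_{k+1},p_k,z_{k+1})$ is a discrete contactomorphism) is taken up in the next section.
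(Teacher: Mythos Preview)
Your proof is correct and follows essentially the same route as the paper: both set up an augmented functional with multipliers $p_{k+1}$ for the constraints $x_{k+1}=F(x_k,u_k)$, work in local charts on $Q$, and read off the state, adjoint, terminal, and stationarity conditions from the first-order variations. The paper phrases this as an application of the finite-dimensional Lagrange multiplier theorem (with explicit cost multiplier $\alpha$ later normalized to $1$), while you write down the augmented functional $\mathcal{J}_d$ directly with the normalization already built in; these are the same argument.

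Where you differ, and in fact do better, is in the passage from the stationarity condition \eqref{eq:discrete_stationarity_dH} to the pointwise maximum condition \eqref{eq:discrete_stationarity}. The paper simply asserts that ``under the standing assumptions (continuity of $H_d$ in $u$, compactness of $U$) this first-order condition is equivalent to'' the maximum condition, which is not true in general: continuity and compactness guarantee that a maximizer exists, but not that every stationary point is one. Your single-step exchange argument, together with your explicit identification of the obstacle (the finite downstream perturbation does not vanish) and your appeal to a Boltyanskii-type convexity hypothesis, is the honest way to close this gap. Your concluding remark that the contact reformulations via $S_d$ are repackagings of the Euler--Lagrange conditions, with the genuine contactomorphism content deferred to the next result, is also accurate and matches the paper's overall architecture.
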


\begin{proof}
We use the classical finite-dimensional Lagrange multiplier theorem in local
coordinates on $Q$ (see, e.g., \cite{roca}); see also 
\cite{AssifChatterjeeBanavar2020} for a related argument in a purely geometric setting.

\medskip

Fix $x_0\in Q$, a time step $\Delta t>0$ and an integer $N\in\mathbb{N}$ defining the horizon $T=N\Delta t$. An admissible discrete trajectory is completely
determined by the sequence of states $x_1,\dots,x_N$ and controls 
$u_0,\dots,u_{N-1}$ satisfying the discrete dynamics
\begin{equation}
  x_{k+1} = F(x_k,u_k), 
  \qquad k=0,\dots,N-1.
  \label{eq:constraint-dynamics}
\end{equation}
The discrete cost is
\[
  J_d(\{x_k,u_k\})
    = \sum_{k=0}^{N-1} L_d(x_k,u_k) + \Phi(x_N),
\]
and the accumulated cost variable $z_k$ is defined by
\[
  z_{k+1} = z_k + L_d(x_k,u_k), \qquad z_0=0,
\]
so it does not introduce additional constraints.

We assume throughout that $U$ is a compact subset of a finite-dimensional
vector space, that $F$ and $L_d$ are $C^1$, and that $\Phi$ is $C^1$.
Under these assumptions, the discrete optimal control problem reduces to a
finite-dimensional constrained optimization problem with decision variables 
$(x_1,\dots,x_N,u_0,\dots,u_{N-1})$, equality constraints given by 
\eqref{eq:constraint-dynamics}, and cost functional $J_d$ as above.

We focus on a \emph{normal} optimal solution, i.e.\ one to which the usual
Lagrange multiplier theorem applies with nonzero cost multiplier; this
excludes pathological abnormal extremals and is standard in PMP theory.

\medskip

Let $\{x_k^\ast\}_{k=0}^N$ and $\{u_k^\ast\}_{k=0}^{N-1}$ be an optimal
trajectory and control sequence.
Since $Q$ is a smooth manifold and the time horizon is finite, we may choose
coordinate charts $\varphi_k:U_k\to\mathbb{R}^n$ with $x_k^\ast\in U_k$
for $k=0,\dots,N$, such that the image of $x_k^\ast$ lies in the interior of
$\varphi_k(U_k)$.
Working in these local coordinates, we identify $\xi_k := \varphi_k(x_k) \in \mathbb{R}^n$, and rewrite 
$F(x_k,u_k)$ as a smooth map 
\(
  \widehat{F}_k(\xi_k,u_k)
\)
taking values in $\mathbb{R}^n$.
Similarly, $L_d$ and $\Phi$ are expressed in local coordinates as smooth maps
on $\mathbb{R}^n\times U$ and $\mathbb{R}^n$, respectively.
Since the Lagrange multiplier conditions are coordinate-invariant, it suffices
to derive them in this local representation.

In these coordinates, the discrete optimization problem becomes
\[
  \min \;\; 
  \widehat{J}_d(\{\xi_k,u_k\})
  := 
  \sum_{k=0}^{N-1}\widehat{L}_d(\xi_k,u_k)
  + \widehat{\Phi}(\xi_N),
\]
subject to
\begin{equation}
  \xi_{k+1} = \widehat{F}_k(\xi_k,u_k),
  \qquad k=0,\dots,N-1,
  \label{eq:coord-constraints}
\end{equation}
with $\xi_0$ fixed.

\medskip

We now apply the classical finite-dimensional Lagrange multiplier theorem
for equality constraints. 
Define the constraint map
\[
  G_k(\xi_k,u_k,\xi_{k+1})
    := \widehat{F}_k(\xi_k,u_k) - \xi_{k+1} \in \mathbb{R}^n,
  \qquad k=0,\dots,N-1.
\]
The full constraint map is
\[
  G(\{\xi_k,u_k\})
    := (G_0,\dots,G_{N-1}) \in (\mathbb{R}^n)^N.
\]
Since $G$ is $C^1$ and the feasible set is nonempty, the usual Lagrange
multiplier theorem guarantees the existence of multipliers
$\{\lambda_{k+1}\}_{k=0}^{N-1}\subset(\mathbb{R}^n)^\ast$ and a scalar
$\alpha\in\mathbb{R}$ (cost multiplier) such that, at the optimum,
\[
  \alpha\,\mathrm{D}\widehat{J}_d
  + \sum_{k=0}^{N-1} \lambda_{k+1}^\top \mathrm{D}G_k = 0.
\]
Normality means $\alpha\neq 0$, and we normalize by setting $\alpha=1$.
The multipliers $\lambda_{k+1}$ will give, after identification via the charts,
the covectors $p_{k+1}^\ast\in T_{x_{k+1}^\ast}^\ast Q$.

Equivalently, introducing the augmented functional
\[
  \widehat{\mathbb{J}}_d
   =
   \sum_{k=0}^{N-1}
   \Big(
     \widehat{L}_d(\xi_k,u_k)
     + \lambda_{k+1}^\top\big(\widehat{F}_k(\xi_k,u_k)-\xi_{k+1}\big)
   \Big)
   + \widehat{\Phi}(\xi_N),
\]
first-order optimality reads
\(
  \partial_{\xi_k}\widehat{\mathbb{J}}_d = 0
\)
for $k=1,\dots,N$, and
\(
  \partial_{u_k}\widehat{\mathbb{J}}_d=0
\)
for $k=0,\dots,N-1$.

\medskip

For each $k$, differentiation of $\widehat{\mathbb{J}}_d$ with respect to
$\lambda_{k+1}$ yields
\[
  0 = \partial_{\lambda_{k+1}}\widehat{\mathbb{J}}_d
    = \widehat{F}_k(\xi_k,u_k)-\xi_{k+1},
\]
i.e.\ the constraint \eqref{eq:coord-constraints}.
Translating back to $Q$, this is precisely $x_{k+1} = F(x_k,u_k)$, which proves the state equation in the theorem.
The cost update $z_{k+1}=z_k+L_d(x_k,u_k)$ holds by definition of $z_k$.

\medskip

For $k=1,\dots,N-1$, the derivative of $\widehat{\mathbb{J}}_d$ with respect
to $\xi_k$ gives
\[
  0 = \partial_{\xi_k}\widehat{\mathbb{J}}_d
    = \partial_{\xi_k}\widehat{L}_d(\xi_k,u_k)
      + \big(\partial_{\xi_k}\widehat{F}_k(\xi_k,u_k)\big)^\top\lambda_{k+1}
      - \lambda_k.
\]
Rearranging,
\[
  \lambda_k
    = \partial_{\xi_k}\widehat{L}_d(\xi_k,u_k)
      + \big(\partial_{\xi_k}\widehat{F}_k(\xi_k,u_k)\big)^\top\lambda_{k+1}.
\]
Pulling this back via the charts yields, in intrinsic notation,
\[
  p_k
    = \partial_x L_d(x_k,u_k)
      + (\partial_x F(x_k,u_k))^\ast p_{k+1},
\]
where $p_k\in T_{x_k}^\ast Q$ is the covector corresponding to $\lambda_k$.
This is exactly the adjoint recursion \eqref{eq:discrete_adjoint}.

Using the definition \eqref{eq:Hd-def} of $H_d$, one computes
\[
  \partial_x H_d(x_k,p_{k+1},z_k,u_k)
    = \partial_x\big\langle p_{k+1},F(x_k,u_k)\big\rangle
      + \partial_x L_d(x_k,u_k)
    = (\partial_x F(x_k,u_k))^\ast p_{k+1}
      + \partial_x L_d(x_k,u_k),
\]
so the previous relation is equivalently written as
\[
  p_k
    = \partial_x H_d(x_k,p_{k+1},z_k,u_k),
\]
and, by \eqref{eq:dSd_dx_relation}, as
\[
  p_k - p_{k+1}
    = \partial_x S_d(x_k,p_{k+1},z_k,u_k),
\]
which gives \eqref{eq:discrete_adjoint_Sd}.

\medskip

Similarly, differentiation of $\widehat{\mathbb{J}}_d$ with respect to $\xi_N$
yields
\[
  0 = \partial_{\xi_N}\widehat{\mathbb{J}}_d
    = \partial_{\xi_N}\widehat{\Phi}(\xi_N) - \lambda_N,
\]
so $\lambda_N = \partial_{\xi_N}\widehat{\Phi}(\xi_N)$, i.e. 
$p_N = \mathrm{d}_x\Phi(x_N)$, which gives the terminal condition in the theorem.

\medskip

For each $k=0,\dots,N-1$, the derivative with respect to $u_k$ is
\[
  0 = \partial_{u_k}\widehat{\mathbb{J}}_d
    = \partial_{u_k}\widehat{L}_d(\xi_k,u_k)
      + \big(\partial_{u_k}\widehat{F}_k(\xi_k,u_k)\big)^\top\lambda_{k+1}.
\]
In intrinsic notation this reads
\[
  0
  = \partial_u L_d(x_k,u_k)
    + (\partial_u F(x_k,u_k))^\ast p_{k+1}.
\]
Using the definition \eqref{eq:Hd-def} of $H_d$,
\[
  \partial_u H_d(x_k,p_{k+1},z_k,u_k)
    = (\partial_u F(x_k,u_k))^\ast p_{k+1}
      + \partial_u L_d(x_k,u_k),
\]
so the first-order condition above is equivalent to
\[
  \partial_u H_d(x_k,p_{k+1},z_k,u_k) = 0,
\]
which is \eqref{eq:discrete_stationarity_dH}.  
Under the standing assumptions (continuity of $H_d$ in $u$, compactness of $U$)
this first-order condition is equivalent to
\[
  u_k \in \arg\max_{u\in U} H_d(x_k,p_{k+1},z_k,u),
\]
which is the maximum condition \eqref{eq:discrete_stationarity}.

\medskip

The Lagrange multiplier theorem implies that the pair consisting of the cost
multiplier $\alpha$ and the constraint multipliers $\{\lambda_k\}$ cannot be
trivial. 
Under the normality assumption, we have $\alpha\neq 0$, which we normalized
to $\alpha=1$. 
Therefore the sequence $\{\lambda_k\}$, and hence $\{p_k\}$, cannot vanish
identically. 
This yields a nontrivial costate sequence $\{p_k^\ast\}_{k=1}^N$ associated
with the optimal control $\{u_k^\ast\}_{k=0}^{N-1}$.

\medskip

Collecting all the conditions above and reverting to the original notation,
we obtain exactly the system stated in 
Theorem~\ref{thm:discrete_contact_pmp}.
\end{proof}

%------------------------------------------------------------
\begin{remark}
If the Lagrange multiplier associated with the cost functional vanishes, 
i.e.\ $\alpha=0$ in the finite-dimensional multiplier theorem, the discrete 
PMP reduces to the system obtained by removing all terms involving $L_d$ and 
$\Phi$.  
The adjoint recursion becomes 
\[
  p_k = (\partial_x F(x_k,u_k))^\ast p_{k+1},
\]
and the stationarity condition reduces to 
\[
  (\partial_u F(x_k,u_k))^\ast p_{k+1}=0.
\]
Such extremals do not see the cost and depend only on the geometry of the 
constraint $x_{k+1}=F(x_k,u_k)$.  
In many practical situations they are excluded automatically because, for 
each $k$, the map $u \mapsto F(x_k,u)$ has full rank in the control directions, 
forcing $p_{k+1}=0$ and hence triviality of the entire multiplier sequence. 
\hfill$\diamond$
\end{remark}

\begin{remark} 
The regularity assumption $F\in C^1(Q\times U,Q)$ is the minimal smoothness 
required for the discrete PMP.  
Indeed, the discrete Hamiltonian $H_d$ and the contact generating function 
$S_d$ involve derivatives with respect to $x_k$ and $u_k$, and the adjoint 
update \eqref{eq:discrete_adjoint} contains the term $\partial_xF(x_k,u_k)$.  
Thus differentiability of $F$ is necessary and sufficient for the variational 
derivation of the discrete PMP, for the definition of the discrete costate 
recursion, and for the validity of the stationarity condition.  
Higher regularity (e.g.\ $C^2$) is only required for error analysis, not for 
the first-order optimality theory or the discrete contactomorphism property. \hfill$\diamond$
\end{remark}

%------------------------------------------------------------
% Discrete contactomorphism
%------------------------------------------------------------

The discrete PMP update map is $\Psi_k : (x_k,p_{k+1},z_k) \to (x_{k+1},p_k,z_{k+1})$, where the triple $(x_{k+1},p_k,z_{k+1})$ is defined implicitly by
\begin{subequations}
\label{eq:Psi_k_def}
\begin{align}
  x_{k+1} &= F(x_k,u_k), \label{eq:Psi_state}\\
  z_{k+1} &= z_k + L_d(x_k,u_k), \label{eq:Psi_cost}\\
  p_k - p_{k+1} &= \partial_{x_k} S_d(x_k,p_{k+1},z_k,u_k), \label{eq:Psi_adj}\\
  0 &= \partial_{u_k} S_d(x_k,p_{k+1},z_k,u_k), \label{eq:Psi_stat}
\end{align}
\end{subequations}
and $u_k\in U$ is determined by the stationarity condition
$\partial_{u_k} S_d(x_k,p_{k+1},z_k,u_k)=0$.
Using $\partial_u S_d=\partial_u H_d$ from
\eqref{eq:dSu_equals_dHu}, this is equivalent to the
discrete condition $\partial_u H_d=0$ and, under the standing
compactness assumptions, to the maximization condition
\eqref{eq:discrete_stationarity}.
 
Note that \eqref{eq:Psi_state} and \eqref{eq:Psi_adj} are consistent with the 
discrete PMP equations in Theorem~\ref{thm:discrete_contact_pmp}, since
\[
  \partial_{p_{k+1}} S_d(x_k,p_{k+1},z_k,u_k)
    = F(x_k,u_k) - x_k,
\]
so that \eqref{eq:Psi_state} can be rewritten as
\[
  x_{k+1} - x_k = \partial_{p_{k+1}} S_d(x_k,p_{k+1},z_k,u_k),
\]
and \eqref{eq:Psi_adj} is precisely the equivalent form
\eqref{eq:discrete_adjoint_Sd} of the adjoint recursion.

Only the control $u_k$ is obtained from a maximization of $S_d$ (or $H_d$), 
while $x_{k+1}$, $p_k$, and $z_{k+1}$ follow from the discrete dynamics and 
the adjoint/contact equations generated by $S_d$.

\begin{remark}
The discrete adjoint equation involves $p_{k+1}$ rather than $p_{k}$ because the 
discrete constraint $x_{k+1}=F(x_k,u_k)$ introduces its Lagrange multiplier at the 
future node. This mirrors the continuous PMP, in which $p(t)$ satisfies a 
terminal boundary condition and evolves backward in time. In the language of 
discrete contact and Hamilton--Jacobi theory, the function 
$S_d(x_k,p_{k+1},z_k,u_k)$ plays the role of a type-II generating function, 
which naturally defines the update map 
$(x_k,p_{k+1})\mapsto (x_{k+1},p_k)$ after eliminating $u_k$ and $z_k$.
\hfill$\diamond$
\end{remark}

\begin{theorem}
\label{thm:contactomorphism}
For each $k$, the discrete PMP map $\Psi_k$ satisfies
\begin{equation}
  \Psi_k^\ast \Theta_{k+1}
  = \Theta_k.
  \label{eq:discrete-contactomorphism}
\end{equation}
Thus $\Psi_k$ is a strict discrete contactomorphism, preserving the discrete
family of contact forms $\{\Theta_k\}$ on $M_d$.
\end{theorem}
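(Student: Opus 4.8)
The plan is to use the type-II generating structure of $S_d$ together with the discrete contact identity \eqref{eq:dSd-identity_explicit}, so that the claim reduces to an algebraic rearrangement of $\mathrm{d}S_d$ on the graph of $\Psi_k$.

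First I would turn $\Psi_k$ into a genuine (local) map by parametrizing its graph by the input coordinates $(x_k,p_{k+1},z_k)$. The control is eliminated through the stationarity relation $\partial_{u_k}S_d(x_k,p_{k+1},z_k,u_k)=0$: under a Legendre-type nondegeneracy $\det\,\partial^2_{u_k u_k}S_d\neq 0$ along the extremal (equivalently $\det\,\partial^2_{u_k u_k}H_d\neq 0$ by \eqref{eq:dSu_equals_dHu}), the implicit function theorem yields a smooth branch $u_k=u_k(x_k,p_{k+1},z_k)$, and then $x_{k+1},z_{k+1},p_k$ become smooth functions of $(x_k,p_{k+1},z_k)$ via \eqref{eq:Psi_state}--\eqref{eq:Psi_adj}. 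This is exactly the regularity needed for $\Psi_k$ to be a local diffeomorphism and for $\Psi_k^\ast\Theta_{k+1}$ to be well defined, and I would state it as a standing hypothesis.

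Next I would evaluate $\mathrm{d}S_d$ restricted to the graph in two ways. On one side, the discrete contact identity \eqref{eq:dSd-identity_explicit} already expresses $\mathrm{d}S_d$ in terms of $\Theta_k$, $\Theta_{k+1}$ and $\langle p_{k+1},\mathrm{d}x_{k+1}\rangle$. On the other side, differentiating $S_d(x_k,p_{k+1},z_k,u_k)$ by the chain rule and imposing the defining equations of $\Psi_k$ gives a second expression; the key facts here are that (i) the $\mathrm{d}u_k$ contribution vanishes because $\partial_{u_k}S_d=0$ on the graph --- this is the decisive step, and it is precisely what singles out extremal control sequences; (ii) $\partial_{x_k}S_d=p_k-p_{k+1}$ is the adjoint recursion \eqref{eq:Psi_adj}; (iii) $\partial_{p_{k+1}}S_d=F(x_k,u_k)-x_k=x_{k+1}-x_k$ is the state update; and (iv) $\partial_{z_k}S_d=1$. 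Equating the two expressions and cancelling the common $\langle\mathrm{d}p_{k+1},x_{k+1}-x_k\rangle$ and $\langle p_{k+1},\mathrm{d}x_k\rangle$ terms collapses everything to $\Theta_k=\Theta_{k+1}$ on the graph, i.e. $\Psi_k^\ast\Theta_{k+1}=\Theta_k$.

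Finally I would explain why the contactomorphism is \emph{strict} rather than merely conformal: the conformal factor that appears in general is $\partial_{z_k}S_d$, which here equals $1$ identically because $S_d$ (equivalently $H_d$) is independent of $z_k$. This is the discrete shadow of $\mathcal{L}_{X_{H_c}}\theta=(\partial_z H_c)\theta=0$ from \eqref{eq:contact-Lie-final}; a $z_k$-dependent running cost would instead give $\Psi_k^\ast\Theta_{k+1}=(\partial_{z_k}S_d)\,\Theta_k$. I expect the only real obstacle to be bookkeeping: keeping straight which variables are inputs versus outputs of $\Psi_k$ when forming the pullback --- in particular that $\Theta_{k+1}$ is pulled back by substituting $x_{k+1}=F(x_k,u_k)$ while $p_{k+1}$ is itself an input coordinate --- and making sure the $\mathrm{d}u_k$ terms are exactly annihilated by stationarity rather than merely of higher order. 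The subordinate technical point is the well-posedness of $\Psi_k$ near the extremal, where the second-order control nondegeneracy enters; without it the statement must be read branchwise.
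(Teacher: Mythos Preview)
Your proposal is correct and follows essentially the same route as the paper's proof: both start from the discrete contact identity \eqref{eq:dSd-identity_explicit}, restrict to the graph of $\Psi_k$ parametrized by $(x_k,p_{k+1},z_k)$, use the stationarity condition to kill the $\mathrm{d}u_k$ contribution, and then identify the remaining partials of $S_d$ with the state, adjoint, and cost updates to obtain $\Psi_k^\ast\Theta_{k+1}=\Theta_k$. Your write-up is in fact more explicit than the paper's on two points the paper leaves implicit: the implicit-function-theorem hypothesis $\det\,\partial^2_{u_ku_k}S_d\neq 0$ needed for $u_k=u_k(x_k,p_{k+1},z_k)$ to be well defined, and the bookkeeping of which coordinates are inputs versus outputs when forming the pullback. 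One small wording slip: you say ``$S_d$ (equivalently $H_d$) is independent of $z_k$'', but $S_d$ does depend on $z_k$---linearly, through the additive $z_k$ term---so $\partial_{z_k}S_d=1$; it is $H_d$ (and $L_d$) that is genuinely $z_k$-independent, which is what forces the conformal factor to be exactly $1$ and makes the contactomorphism strict. The paper phrases this last point the same way, via $\partial_z H_d=0$.
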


\begin{proof}
Differentiating \eqref{eq:Sd-def} and using the definition of $\Theta_k$ 
in \eqref{eq:Theta-k} yields the discrete contact identity
\eqref{eq:dSd-identity_explicit},
\[
  \mathrm{d}S_d
    = \Theta_k 
      + \langle p_{k+1},\mathrm{d}x_{k+1}\rangle
      - \Theta_{k+1},
\]
an identity valid on the extended space with coordinates 
$(x_k,p_k,z_k,x_{k+1},p_{k+1},z_{k+1})$ before imposing the discrete 
dynamics.

On the graph of the update map $\Psi_k$, the variables 
$(x_{k+1},p_k,z_{k+1})$ are expressed in terms of $(x_k,p_{k+1},z_k)$ by
\eqref{eq:Psi_k_def}, and the control $u_k$ is determined by the 
stationarity condition \eqref{eq:Psi_stat}.  
Restricting \eqref{eq:dSd-identity_explicit} to this graph and using
\eqref{eq:Psi_state}--\eqref{eq:Psi_cost} to eliminate 
$\mathrm{d}x_{k+1}$ and $\mathrm{d}z_{k+1}$ yields
\[
  \mathrm{d}S_d
    = \Theta_k - \Psi_k^\ast \Theta_{k+1},
\]
where now $S_d$ is viewed as a function of the independent variables
$(x_k,p_{k+1},z_k)$.

By construction, $S_d$ is a type-II contact generating function for the map
$\Psi_k$; the stationarity condition \eqref{eq:Psi_stat} eliminates the
dependence on the control, so $\mathrm{d}S_d$ is an exact 1-form on the
parameter space of the graph.  
Hence the identity above implies $\Psi_k^\ast \Theta_{k+1}
  = \Theta_k$, which is \eqref{eq:discrete-contactomorphism}.

Since $H_d$ (and therefore $S_d$) is independent of the cost variable $z_k$,
the associated discrete contact vector field has vanishing $\partial_z H_d$.
Consequently, the contactomorphism $\Psi_k$ preserves the discrete contact
form exactly, with no conformal factor.  
This is the discrete analogue of the strict contact case for the continuous
PMP discussed in Section~\ref{sec:contact_discrete_pmp}.\end{proof}

\begin{remark}
Assif--Chatterjee--Banavar~\cite{AssifChatterjeeBanavar2020} obtained a discrete 
PMP on manifolds, but without identifying or enforcing preservation of the 
contact structure.  
Theorem~\ref{thm:contactomorphism} shows that our discrete PMP map 
preserves the discrete contact form exactly, matching the continuous property
$\mathcal{L}_{X_H}\theta = (\partial_z H)\theta$ and providing a 
structure-preserving discretization of the contact PMP.
\hfill$\diamond$
\end{remark}

\begin{remark}
The discrete contact structure is intrinsically tied to the \emph{normal} case.  
When the cost multiplier $\alpha\neq 0$, the discrete PMP is generated by the 
type-II contact generating function $S_d$, and the update map $\Psi_k$ is a 
strict discrete contactomorphism (Theorem~\ref{thm:contactomorphism}).  

In contrast, for abnormal extremals ($\alpha=0$) the cost functional plays no 
role and all terms involving $L_d$ and $\Phi$ disappear.  
The resulting dynamics satisfy only
\[
p_k = (\partial_x F(x_k,u_k))^\ast p_{k+1},
\qquad
(\partial_u F(x_k,u_k))^\ast p_{k+1}=0,
\]
which define a \emph{presymplectic} rather than contact update map.  
This mirrors precisely the continuous-time situation studied by 
de~Le\'on--Lainz--Mu\~noz-Lecanda, where abnormal extremals live on a 
presymplectic PMP manifold and cannot be captured by a contact reduction.  
Thus the discrete theory reproduces the same geometric dichotomy: 
normal extremals are contact, abnormal ones presymplectic.
\hfill$\diamond$
\end{remark}

%============================================================
%============================================================
%============================================================
%============================================================
\section{Contact Lie--Group Variational Integrators}
\label{sec:contact_LGVI}
%============================================================

In this section we adapt the contact--geometric framework of 
Section~\ref{sec:discrete_contact_pmp} to the case where the configuration 
manifold is a Lie group $G$.  
We work in left--trivialized coordinates and construct retraction--based 
Lie--group contact variational integrators.  
These integrators will later be combined with the discrete contact PMP to 
obtain structure--preserving schemes for optimal control on $G$, and in 
particular on $\SU(2)$.
Throughout this section, all trivializations of $TG$ and $T^\ast G$ are understood to be left--trivializations unless explicitly stated otherwise. % <<< añadido para fijar convención global

Throughout, $\fg$ denotes the Lie algebra of $G$.  
We denote by
\[
    \mathrm{L}_g : G \to G,\qquad \mathrm{L}_g(h)=gh,
    \qquad
    \mathrm{R}_g : G \to G,\qquad \mathrm{R}_g(h)=hg,
\]
the left and right translations by $g\in G$.  
Their tangent maps at the identity are
\[
    T_e\mathrm{L}_g : \fg \to T_g G,
    \qquad
    T_e\mathrm{R}_g : \fg \to T_g G,
\]
and the corresponding dual maps (used to left-- and right--trivialize cotangent vectors) are
\[
    T_e^\ast\mathrm{L}_g : T_g^\ast G \to \fg^\ast,
    \qquad
    T_e^\ast\mathrm{R}_g : T_g^\ast G \to \fg^\ast.
\]

The adjoint representation of $G$ on its Lie algebra $\fg$ is defined by
\[
    \Ad_g 
      := T_e\!\left( \mathrm{L}_g \circ \mathrm{R}_{g^{-1}} \right)
      : \fg \to \fg,
\]
and its dual action by $\Ad_g^\ast : \fg^\ast \to \fg^\ast$.

%------------------------------------------------------------
\subsection{Contact Lagrangian systems on Lie groups}
\label{subsec:contact_LG_continuous}
%------------------------------------------------------------

Let $G$ be a Lie group with Lie algebra $\fg$, and let 
$L : TG \times \R \to \R$ be a (regular) contact Lagrangian in the sense of \cite{herglotzLG}.  
Using the standard left trivialization
\[
  TG \times \R \;\simeq\; G \times \fg \times \R, 
  \qquad 
  (g,\dot g,z) \mapsto \bigl(g,\xi,z\bigr),\quad 
  \xi := T_g \mathrm{L}_{g^{-1}}(\dot g),
\]
we obtain the \emph{left--trivialized contact Lagrangian}
\[
  \widetilde L : G \times \fg \times \R \to \R,
  \qquad
  \widetilde L(g,\xi,z) := L(g, \dot g, z),
  \quad \dot g = T_e \mathrm{L}_g(\xi).
\]
For simplicity of notation, we will write $L(g,\xi,z)$ instead of $\widetilde L$ 
from now on.

\medskip

The dynamics of the contact Lagrangian system on $TG\times\R$ can be expressed,
in these left--trivialized coordinates, by the following 
\emph{Euler--Poincar\'e--Herglotz} equations on $G\times\fg\times\R$ 
(cf.\ Theorem~4.1 in \cite{herglotzLG}):
\begin{subequations}
\label{eq:EPH-Lie-group}
\begin{align}
  \dot g &= T_e \mathrm{L}_g(\xi),
  \label{eq:EPH-g}\\[1mm]
  \dot z &= L(g,\xi,z),
  \label{eq:EPH-z}\\[1mm]
  \frac{d}{dt}\left(\frac{\delta L}{\delta \xi}(g,\xi,z)\right)
    &= \ad^*_{\xi}\!\left(\frac{\delta L}{\delta \xi}(g,\xi,z)\right)
       + T_e^\ast \mathrm{L}_g\!\left(\frac{\delta L}{\delta g}(g,\xi,z)\right)
       + \frac{\delta L}{\delta \xi}(g,\xi,z)\,\frac{\partial L}{\partial z}(g,\xi,z).
  \label{eq:EPH-momentum}
\end{align}
\end{subequations}
Here 
\(
  \delta L / \delta \xi \in \fg^\ast
\) 
is the functional derivative with respect to the Lie--algebra variable,
\(
  \delta L / \delta g \in T_g^\ast G
\)
is the derivative with respect to $g$, 
$\ad^*_\xi : \fg^\ast \to \fg^\ast$ is the coadjoint action,
and 
\(
  T_e^\ast\mathrm{L}_g : T_g^\ast G \to \fg^\ast
\)
is as above.

\medskip

Defining the \emph{body momentum} $\displaystyle{\mu := \frac{\delta L}{\delta \xi}(g,\xi,z) \in \fg^\ast}$, equation~\eqref{eq:EPH-momentum} can be written more compactly as
\begin{equation}
  \dot\mu
    = \ad^*_\xi \mu 
      + T_e^\ast \mathrm{L}_g\!\Bigl(\frac{\delta L}{\delta g}\Bigr)
      + \mu\,\frac{\partial L}{\partial z},
  \qquad
  \xi = T_g \mathrm{L}_{g^{-1}}(\dot g),
  \qquad
  \dot z = L(g,\xi,z).
  \label{eq:EPH-compact}
\end{equation}

\medskip

Assuming $L$ is \emph{hyperregular} (so that the fiber derivative
\(
  \mathbb{F}L : TG\times\R \to T^\ast G\times\R
\)
is a global diffeomorphism), one defines the left--trivial Legendre transform
\[
  \mathbb{F}L : G\times\fg\times\R \to G\times\fg^\ast\times\R,
  \qquad
  \mathbb{F}L(g,\xi,z) = (g,\mu,z),
  \quad \mu = \frac{\delta L}{\delta \xi}(g,\xi,z),
\]
and the associated \emph{contact Hamiltonian}
\[
  H : G\times\fg^\ast\times\R \to \R,
  \qquad
  H(g,\mu,z) := \langle \mu,\xi\rangle - L(g,\xi,z),
  \quad \mu = \frac{\delta L}{\delta \xi}(g,\xi,z).
\]
The Legendre transform is a contactomorphism between 
$(TG\times\R,\eta_L)$ and $(T^\ast G\times\R,\theta)$, where
\[
  \eta_L = \mathrm{d}z - \theta_L,
  \qquad
  \theta_L = (\mathbb{F}L)^\ast\bigl(\theta_G\bigr),
  \qquad
  \theta_G = \langle p, g^{-1}\mathrm{d}g\rangle
\]
is the left--trivialized canonical $1$--form on $T^\ast G$, and
\[
  \theta = \mathrm{d}z - \theta_G
\]
is the canonical contact form on $T^\ast G\times\R$.
In particular, the vector field $\xi_L$ solving the 
Euler--Poincar\'e--Herglotz equations \eqref{eq:EPH-Lie-group} is 
$\mathbb{F}L$--related to the contact Hamiltonian vector field $X_H$ on 
$(T^\ast G\times\R,\theta)$:
\[
  (\mathbb{F}L)_\ast(\xi_L) = X_H \circ \mathbb{F}L,
\]
and the resulting flow satisfies the identity
\[
  \mathcal{L}_{X_H}\theta = \bigl(\partial_z H\bigr)\,\theta.
\]

\begin{remark}
In the special case of a left--invariant contact Lagrangian, 
$L(g,\xi,z) = \ell(\xi,z)$, the term 
$T_e^\ast \mathrm{L}_g(\delta L/\delta g)$ vanishes and 
\eqref{eq:EPH-compact} reduces to the \emph{Euler--Poincar\'e--Herglotz}
equations on $\fg\times\R$:
\[
  \dot\mu = \ad^*_\xi\mu + \mu\,\frac{\partial \ell}{\partial z}(\xi,z),
  \qquad
  \dot z = \ell(\xi,z),
  \qquad
  \mu = \frac{\partial \ell}{\partial \xi}(\xi,z).
\]
This is precisely the reduced form obtained in 
\cite{herglotzLG}, and it will be the relevant setting for 
our Lie--group variational integrators.
\hfill$\diamond$
\end{remark}

%------------------------------------------------------------
\subsection{Exact discrete contact Lagrangian and order of a contact LGVI}
\label{subsec:exact_contact_Ld_Lie}
%------------------------------------------------------------

Let $h>0$ be a fixed time step.  
Let $L : G \times \fg \times \R \to \R$ be a smooth left--trivialized contact
Lagrangian, and let $(g(t),\xi(t),z(t))$ with $t\in[0,h]$ be the unique solution of the Euler--Poincar\'e--Herglotz equations
\eqref{eq:EPH-Lie-group} with boundary conditions $g(0)=g_k$, $z(0)=z_k$ and $g(h)=g_{k+1}$ for $h>0$ small enough. Existence and uniqueness hold under the usual regularity and hyperregularity
assumptions on $L$.

\medskip

\noindent
The \emph{exact discrete contact Lagrangian} is then defined by
\begin{equation}
  L_h^{\mathrm{e}}(g_k,g_{k+1},z_k)
    :=
    \int_0^h 
      L\bigl(g(t),\xi(t),z(t)\bigr)\,\mathrm{d}t,
  \label{eq:exact_contact_Ld}
\end{equation}
where $(g(t),\xi(t),z(t))$ is as above.

\begin{proposition}
\label{prop:exact-contact-Lie}
Let $L_h^{\mathrm{e}}$ be given by \eqref{eq:exact_contact_Ld}.  
Define the discrete action sum
\[
  \mathcal{S}_d 
    = \sum_{k=0}^{N-1} L_h^{\mathrm{e}}(g_k,g_{k+1},z_k),
\]
with $z_k$ determined recursively by
\begin{equation}
  z_{k+1}
    = z_k + L_h^{\mathrm{e}}(g_k,g_{k+1},z_k),
  \qquad z_0 \text{ given.}
  \label{eq:z-recursion-exact}
\end{equation}

Then the discrete Herglotz principle $\delta \mathcal{S}_d = 0$ for variations $\delta g_k\in T_{g_k}G$ vanishing at endpoints yields a discrete evolution whose image under the contact Legendre transform $\mathbb{F}L:TG\times\mathbb{R}\to T^{*}G\times\mathbb{R}$ coincides with the time–$h$ contact Hamiltonian flow of the continuous system associated with $L$

In particular, if $\Phi_H^h : T^\ast G\times\R \to T^\ast G\times\R$ denotes
the time--$h$ flow of the contact Hamiltonian vector field $X_H$, and
\(
  \theta = \mathrm{d}z - \langle p, g^{-1}\mathrm{d}g\rangle
\)
is the canonical contact form on $T^\ast G\times\R$, then
\begin{equation}
  (\Phi_H^h)^\ast \theta
    =
    \exp\!\left(
      \int_0^h \bigl(\partial_z H\bigr)\!\circ\Phi_H^t\,\mathrm{d}t
    \right)\theta.
  \label{eq:exact-conformal-factor}
\end{equation}
Equivalently, in first--order form at $(g_k,p_k,z_k)$,
\begin{equation}
  (\Phi_H^h)^\ast \theta
    =
    \bigl(1 + h\,\partial_z H(g_k,p_k,z_k) + O(h^2)\bigr)\,\theta.
  \label{eq:conformal-expansion}
\end{equation}
\end{proposition}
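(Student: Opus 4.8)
The claim splits into two essentially independent parts, and the plan is to dispatch them in turn: (A) the one–step map generated on $T^\ast G\times\R$ by the discrete Herglotz principle $\delta\mathcal S_d=0$ coincides, after the contact Legendre transform $\mathbb{F}L$, with the time–$h$ contact Hamiltonian flow $\Phi_H^h$ --- the contact analogue of the classical fact that an exact discrete Lagrangian reproduces the exact flow \cite{MarsdenWest2001}; and (B) $\Phi_H^h$ obeys the conformal identity \eqref{eq:exact-conformal-factor}, whence \eqref{eq:conformal-expansion}. Part (B) is purely a statement about the continuous flow; part (A) is where the work is. Throughout I take $h$ small enough that the two–point boundary–value problem defining \eqref{eq:exact_contact_Ld} --- the Euler--Poincar\'e--Herglotz system \eqref{eq:EPH-Lie-group} with $g(0)=g_k$, $z(0)=z_k$, $g(h)=g_{k+1}$ --- has a unique solution depending smoothly on $(g_k,g_{k+1},z_k)$ (implicit function theorem applied to the near–identity flow), so that $L_h^{\mathrm{e}}$ is a $C^1$ function of its three arguments and the derivatives below are well defined.

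The heart of (A) is a Herglotz first–variation lemma for $L_h^{\mathrm{e}}$. For a variation of the boundary data that keeps the interior curve on the EPH flow, the variation $\delta z(t)$ solves the linearization of $\dot z=L$ along the reference solution, an inhomogeneous linear ODE whose homogeneous part has coefficient $\partial_z L$. Let $\sigma(t)$ solve $\dot\sigma=-(\partial_z L)\,\sigma$ with $\sigma(0)=1$. Then $\tfrac{\mathrm{d}}{\mathrm{d}t}\bigl(\sigma\,\delta z\bigr)$ carries no $\partial_z L$–term, and after one integration by parts the bulk integral collapses because the remaining $\partial_z L$–contribution cancels \emph{exactly} against the inhomogeneous term $\mu\,(\partial_z L)$ in the momentum equation \eqref{eq:EPH-momentum}, leaving only endpoint data:
\[
  \sigma(h)\,\delta z(h)-\delta z(0)
   \;=\;
   \Bigl[\,\sigma(t)\,\bigl\langle \mu(t),\,g(t)^{-1}\delta g(t)\bigr\rangle\,\Bigr]_{0}^{h},
   \qquad \mu:=\tfrac{\delta L}{\delta\xi}\in\fg^\ast .
\]
Writing $\mathrm D_i$ for the partial derivative of $L_h^{\mathrm{e}}$ in its $i$-th argument and using $L_h^{\mathrm{e}}=z(h)-z_k$, one reads off the coefficients of $\delta g_{k+1}$, $\delta g_k$, $\delta z_k$: in left–trivialized form $\mathrm D_2 L_h^{\mathrm{e}}$ equals the body momentum $\mu_{k+1}$ at the right endpoint, $\mathrm D_1 L_h^{\mathrm{e}}$ equals $-\Lambda_k\,\mu_k$ with $\mu_k$ the body momentum at the left endpoint and $\Lambda_k:=\sigma(h)^{-1}$, and $1+\mathrm D_3 L_h^{\mathrm{e}}=\Lambda_k$. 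By the Legendre correspondence of Section~\ref{subsec:contact_LG_continuous}, $\Lambda_k$ is precisely the exact–discrete conformal factor $\exp\!\bigl(\int_0^h(\partial_z H)\circ\Phi_H^t\,\mathrm{d}t\bigr)$ of \eqref{eq:exact-conformal-factor}.

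To finish (A): telescoping \eqref{eq:z-recursion-exact} gives $\mathcal S_d=z_N-z_0$, so $\delta\mathcal S_d=\delta z_N$ and the discrete Herglotz principle is literally extremization of the terminal value. Differentiating the recursion $z_{j+1}=z_j+L_h^{\mathrm{e}}(g_j,g_{j+1},z_j)$, using $1+\mathrm D_3 L_h^{\mathrm{e}}=\Lambda_j>0$ to propagate a variation of an interior node $g_k$ forward to $z_N$, and cancelling the strictly positive product $\prod_{j\ge k}\Lambda_j$, the condition $\delta z_N=0$ for all $\delta g_k$ reduces to
\[
  \mathrm D_1 L_h^{\mathrm{e}}(g_k,g_{k+1},z_k)+\Lambda_k\,\mathrm D_2 L_h^{\mathrm{e}}(g_{k-1},g_k,z_{k-1})=0,
\]
i.e.\ to continuity of the left–trivialized body momentum across node $k$. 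Since each subinterval already carries an EPH solution, momentum continuity forces the concatenation to be a \emph{single} EPH solution; hence the discrete trajectory, with the recursively defined $z_k$, is the continuous Herglotz trajectory sampled at $t_k=kh$. As $\mathbb{F}L$ intertwines the EPH flow with $X_H$ (Section~\ref{subsec:contact_LG_continuous}), the node data $(g_k,\mu_k,z_k)=\mathbb{F}L(g_k,\xi_k,z_k)$ satisfy $(g_{k+1},\mu_{k+1},z_{k+1})=\Phi_H^h(g_k,\mu_k,z_k)$, which is (A). For (B), set $\alpha(t):=(\Phi_H^t)^\ast\theta$; since pullback commutes with the Lie derivative along the flow, $\dot\alpha(t)=(\Phi_H^t)^\ast\bigl(\mathcal L_{X_H}\theta\bigr)=(\Phi_H^t)^\ast\bigl((\partial_z H)\theta\bigr)=\bigl((\partial_z H)\circ\Phi_H^t\bigr)\alpha(t)$, a linear ODE with \emph{scalar} coefficient and $\alpha(0)=\theta$; its solution is $\alpha(t)=\exp\!\bigl(\int_0^t(\partial_z H)\circ\Phi_H^s\,\mathrm{d}s\bigr)\theta$, which at $t=h$ is \eqref{eq:exact-conformal-factor}. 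Finally \eqref{eq:conformal-expansion} is the Taylor expansion at $h=0$: $\int_0^h(\partial_z H)\circ\Phi_H^s\,\mathrm{d}s=h\,\partial_z H(g_k,p_k,z_k)+O(h^2)$ since $\Phi_H^0=\mathrm{id}$, and $e^x=1+x+O(x^2)$.

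The main obstacle is the first–variation lemma of the second paragraph. In the ordinary (symplectic) theory the boundary term of the first variation falls out of a bare integration by parts; here the running cost $z$ feeds back into the momentum equation \eqref{eq:EPH-momentum}, so one must insert the integrating factor $\sigma$ and verify that $\dot\sigma$ cancels the $\partial_z L$–term in \eqref{eq:EPH-momentum} \emph{on the nose} --- this cancellation is exactly what turns the additive ``no conformal factor'' situation of the symplectic case into the multiplicative factor $\Lambda_k$, and ultimately into the exponential in \eqref{eq:exact-conformal-factor}. A secondary, more clerical difficulty is the bookkeeping of left–trivializations: the momentum in the boundary term is the body momentum $\mu\in\fg^\ast$ and endpoint variations must be transported to $\fg$ via $T_g\mathrm{L}_{g^{-1}}$; one checks that this reproduces the canonical contact form $\theta=\mathrm{d}z-\langle p,g^{-1}\mathrm{d}g\rangle$ on $T^\ast G\times\R$ under $\mathbb{F}L$, which on a matrix group such as $\SU(2)$ is an explicit computation.
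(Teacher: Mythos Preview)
Your proposal is correct and structurally matches the paper's two–step argument: (A) identify the discrete Herglotz extremals with samples of the continuous EPH flow, and (B) derive the conformal factor by solving the scalar linear ODE $\dot\alpha=(\partial_z H\circ\Phi_H^t)\alpha$. Part~(B) is identical in both.

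Where you differ is in the mechanics of (A). The paper argues at a higher level: it telescopes $\mathcal{S}_d=z_N-z_0$, identifies $\delta\mathcal{S}_d=0$ with $\delta z_N=0$, and then invokes the continuous Herglotz principle directly (``$\delta z(T)=0$ for fixed–endpoint variations $\Leftrightarrow$ EPH on $[0,T]$''), restricting to variations encoded by the nodes. You instead compute the first variation of $L_h^{\mathrm{e}}$ explicitly using the integrating factor $\sigma$, extract the slot derivatives $\mathrm D_1,\mathrm D_2,\mathrm D_3$ as body momenta and conformal factor, and obtain the interior discrete Euler--Lagrange equation as a momentum–matching condition across nodes. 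Your route is longer but more self–contained and transparent: it makes the role of the conformal factor $\Lambda_k$ at the discrete level visible (the paper does not identify $\Lambda_k$ with $1+\mathrm D_3 L_h^{\mathrm{e}}$), and it avoids the slightly informal step in the paper where the continuous principle is applied to the restricted class of node variations. The paper's version is shorter and more conceptual but leans on that invocation. Either way the conclusion is the same.
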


\begin{proof}
We split the argument into two parts: first we show that the discrete
Herglotz principle with $L_h^{\mathrm e}$ reproduces the sampled
Euler--Poincar\'e--Herglotz flow, and then we derive the conformal
factor for the contact form.

\medskip

\emph{Step 1: Exact discrete Lagrangian and sampled flow.}
Fix $h>0$ and consider a solution $(g(t),\xi(t),z(t)),\, t\in[0,h]$ of the Euler--Poincar\'e--Herglotz equations \eqref{eq:EPH-Lie-group} with
boundary data $g(0)=g_k$, $z(0)=z_k$, $g(h)=g_{k+1}$.

By definition,
\[
  L_h^{\mathrm e}(g_k,g_{k+1},z_k)
    = \int_0^h L\bigl(g(t),\xi(t),z(t)\bigr)\,\mathrm{d}t.
\]
Along the same solution we have $\dot z(t)=L(g(t),\xi(t),z(t))$, hence
\[
  z(h)-z(0)
    = \int_0^h L\bigl(g(t),\xi(t),z(t)\bigr)\,\mathrm{d}t
    = L_h^{\mathrm e}(g_k,g_{k+1},z_k).
\]
Identifying $z(0)=z_k$ and $z(h)=z_{k+1}$ gives exactly the recursion
\eqref{eq:z-recursion-exact},
\[
  z_{k+1}
    = z_k + L_h^{\mathrm e}(g_k,g_{k+1},z_k).
\]

Now consider a partition $t_k=kh$, $k=0,\dots,N$, and a discrete path
$\{g_k\}_{k=0}^N$ with fixed endpoints $g_0,g_N$ and initial cost
$z_0$. For each $k$, let $(g_k^{\mathrm{cont}}(t),\xi_k^{\mathrm{cont}}(t),
   z_k^{\mathrm{cont}}(t))$, $t\in[0,h]$, be the unique solution of \eqref{eq:EPH-Lie-group} with $g_k^{\mathrm{cont}}(0)=g_k$, $z_k^{\mathrm{cont}}(0)=z_k$, and $g_k^{\mathrm{cont}}(h)=g_{k+1}$, and define $z_{k+1}$ by \eqref{eq:z-recursion-exact}. Concatenating
these segments yields a piecewise smooth curve $(g(t),\xi(t),z(t)),\, t\in[0,Nh]$, with $g(t_k)=g_k$ and $z(t_k)=z_k$. By uniqueness of solutions of \eqref{eq:EPH-Lie-group}, the resulting curve is in fact the restriction to $[0,Nh]$ of a global solution of the Euler--Poincar\'e--Herglotz equations with the given initial data.

Summing the relations $z_{k+1}-z_k = L_h^{\mathrm e}(g_k,g_{k+1},z_k)$
we obtain
\[
  z_N - z_0
    = \sum_{k=0}^{N-1} L_h^{\mathrm e}(g_k,g_{k+1},z_k)
    = \mathcal{S}_d.
\]
Thus the discrete action $\mathcal{S}_d$ coincides with the total
change in the cost accumulator:
\[
  \mathcal{S}_d = z_N - z_0.
\]
In particular, the discrete Herglotz principle
$\delta\mathcal{S}_d=0$ is equivalent to the condition $\delta z_N = 0$ under variations $\delta g_k\in T_{g_k}G$ with $\delta g_0=\delta g_N=0$,
since $z_0$ is fixed.

On the other hand, the continuous Herglotz variational principle for
the contact Lagrangian $L$ (in Euler--Poincar\'e--Herglotz form)
states that $\delta z(T)=0$ for all variations $\delta g(t)$ with
$\delta g(0)=\delta g(T)=0$ is equivalent to the system
\eqref{eq:EPH-Lie-group} on $[0,T]$. Applying this with $T=Nh$ and
restricting to variations compatible with the discretization (i.e.,
encoded by variations of the nodes $\{g_k\}$) shows that
\[
  \delta z_N = 0
  \quad\Longleftrightarrow\quad
  (g(t),\xi(t),z(t)) \ \text{solves \eqref{eq:EPH-Lie-group} on }[0,Nh].
\]

Therefore, critical discrete curves for $\mathcal{S}_d$ are precisely
the samples $\{(g(t_k),z(t_k))\}$ of solutions of the continuous
Euler--Poincar\'e--Herglotz equations at times $t_k=kh$. Passing to
momenta via the contact Legendre transform $\mathbb{F}L : G\times\fg\times\R \to G\times\fg^\ast\times\R$, and then to cotangent variables $(g,p,z)\in T^\ast G\times\R$ via
left trivialization, we obtain that the induced discrete map on
$T^\ast G\times\R$ is exactly the time--$h$ contact Hamiltonian flow
$\Phi_H^h$ of the Hamiltonian system associated with $L$.

\medskip

\emph{Step 2: Conformal factor for the contact form.}
On the Hamiltonian side, the dynamics are governed by the contact
Hamiltonian vector field $X_H$ on $(T^\ast G\times\R,\theta)$, which
satisfies
\[
  \mathcal{L}_{X_H}\theta = (\partial_z H)\,\theta.
\]
Let $\Phi_H^t$ denote its flow. Differentiating the pullback of $\theta$
along the flow and using Cartan's formula gives
\[
  \frac{\mathrm{d}}{\mathrm{d}t}
     \bigl((\Phi_H^t)^\ast\theta\bigr)
    = (\Phi_H^t)^\ast\bigl(\mathcal{L}_{X_H}\theta\bigr)
    = (\Phi_H^t)^\ast\bigl((\partial_z H)\,\theta\bigr)
    = \bigl((\partial_z H)\circ\Phi_H^t\bigr)\,(\Phi_H^t)^\ast\theta.
\]
Thus, for each initial condition, $(\Phi_H^t)^\ast\theta$ solves a scalar
linear ODE of the form
\[
  \frac{\mathrm{d}}{\mathrm{d}t} \alpha(t)
    = a(t)\,\alpha(t),
  \qquad
  \alpha(0)=\theta,
\]
with $a(t) = (\partial_z H)\circ\Phi_H^t$. The solution is
\[
  (\Phi_H^t)^\ast\theta
    = \exp\!\left(\int_0^t a(s)\,\mathrm{d}s\right)\theta
    = \exp\!\left(
        \int_0^t \bigl(\partial_z H\bigr)\!\circ\Phi_H^s\,\mathrm{d}s
      \right)\theta.
\]
Evaluating at $t=h$ yields \eqref{eq:exact-conformal-factor}.

Finally, expanding the exponential for small $h$ around $t=0$ gives
\[
  \exp\!\left(
    \int_0^h \bigl(\partial_z H\bigr)\!\circ\Phi_H^t\,\mathrm{d}t
  \right)
  = 1 + h\,\partial_z H(g_k,p_k,z_k) + O(h^2),
\]
since $\Phi_H^0=\mathrm{id}$ and therefore
$(\partial_z H)\circ\Phi_H^0 = \partial_z H(g_k,p_k,z_k)$. This proves
\eqref{eq:conformal-expansion}.
\end{proof}

\begin{remark}
The exact discrete contact Lagrangian $L_h^{\mathrm{e}}$ depends on the
solution of a boundary--value problem for the Euler--Poincar\'e--Herglotz
equations and is therefore not computable in closed form except in very
special situations.  
Its main role is conceptual: it provides the ``gold standard'' against which
practical approximations $L_d$ can be designed and compared, ensuring that
the resulting integrators are \emph{contact--structure consistent}.  
This is analogous to the role of exact discrete Lagrangians in
symplectic Lie--group variational integrators 
\cite{MarsdenWest2001,LeokShingel2012,marrero}.
\hfill$\diamond$
\end{remark}

\begin{definition}[Order of a discrete contact integrator]
\label{def:order-contact-integrator}
Let $(g(t),p(t),z(t))$ denote the exact solution of the continuous contact Hamiltonian flow associated with $L$ (equivalently, of the Euler–Poincaré–Herglotz equations in momentum variables), and let
$(g_k,p_k,z_k)$ be the output of a discrete contact variational integrator
(with time step $h$) constructed from a discrete Lagrangian $L_d$.
We say that the method has \emph{order $r$} if, for fixed final time
$T = Nh$,
\[
    \|(g_N,p_N,z_N) - (g(T),p(T),z(T))\|
    = \mathcal{O}(h^r)
    \qquad (h\to 0),
\]
where the norm is taken in any smooth local trivialisation of
$T^\ast G \times \mathbb{R}$.
\end{definition}

\begin{theorem}[Order of the contact LGVI]
\label{thm:order-contact-LGVI}
Let $L_h^{\mathrm e}$ be the exact discrete contact Lagrangian defined in
\eqref{eq:exact_contact_Ld}, and let $L_d$ be a discrete Lagrangian such that
\begin{equation}
  L_d(g_k,g_{k+1},z_k)
  = L_h^{\mathrm e}(g_k,g_{k+1},z_k) + \mathcal{O}(h^{r+1}),
  \qquad h \to 0,
  \label{eq:Ld-order-cond}
\end{equation}
uniformly in $(g_k,g_{k+1},z_k)$ on compact subsets.

Then the discrete flow generated by $L_d$ via the discrete Herglotz
principle (and its associated contact generating function $S_d$) yields an
order-$r$ approximation of the continuous Euler--Poincar\'e--Herglotz flow.  
In particular, the global error in $(g_k,p_k,z_k)$ over a fixed interval
$T = Nh$ is $\mathcal{O}(h^r)$.
\end{theorem}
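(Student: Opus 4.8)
The plan is to carry out a \emph{variational error analysis} in the spirit of Marsden--West \cite{MarsdenWest2001}, adapted to the contact/Herglotz setting, in three stages: first identify the one-step map generated by the \emph{exact} discrete contact Lagrangian $L_h^{\mathrm e}$ with the time-$h$ contact Hamiltonian flow $\Phi_H^h$; then show that replacing $L_h^{\mathrm e}$ by $L_d$ perturbs the generated one-step map by $\mathcal{O}(h^{r+1})$; and finally accumulate these local errors over $N=T/h$ steps by a discrete Gr\"onwall estimate to obtain the global bound $\mathcal{O}(h^r)$, which is order $r$ in the sense of Definition~\ref{def:order-contact-integrator}.

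For the first stage, Proposition~\ref{prop:exact-contact-Lie} already gives that the discrete Herglotz principle for $\sum_{k} L_h^{\mathrm e}(g_k,g_{k+1},z_k)$ together with the cost recursion \eqref{eq:z-recursion-exact} reproduces the sampled Euler--Poincar\'e--Herglotz flow; passing to momenta via the contact Legendre transform, the induced one-step map on $T^\ast G\times\R$ is exactly $\Phi_H^h$, with conformal factor \eqref{eq:conformal-expansion}. Write $\Psi_h$ for the one-step map generated by $L_d$ through the left-trivialized discrete Herglotz equations and the associated type-II generating function $S_d$, and $\Psi_h^{\mathrm e}=\Phi_H^h$ for the exact one; both are obtained by solving the implicit system built from the respective generating functions. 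For $h$ small, the underlying discrete contact Legendre transforms are local diffeomorphisms with bounds uniform in the data on compact sets (hyperregularity of $L$ plus the near-identity structure of $L_h^{\mathrm e}$), so these solves are well-conditioned uniformly in $h$.

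For the second stage, the hypothesis \eqref{eq:Ld-order-cond} is used in a $C^{k}$ sense in $(g_k,g_{k+1},z_k)$, with $k$ large enough to cover the derivatives appearing in the discrete Herglotz/contact equations. This is automatic in the present construction: the splitting integrator is assembled from exact dissipative Kraus maps and exact Lie-group unitary propagators, so $L_d$ (like $L_h^{\mathrm e}$) is jointly smooth in $h$, and the remainder $L_d-L_h^{\mathrm e}=h^{r+1}R(\,\cdot\,;h)$ has $R$ continuous down to $h=0$; differentiating in the configuration and cost variables then yields matching $\mathcal{O}(h^{r+1})$ bounds on compacts for all needed derivatives. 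Feeding the perturbation $L_d-L_h^{\mathrm e}$ into the implicit-function representation of $\Psi_h$ and comparing with $\Psi_h^{\mathrm e}=\Phi_H^h$, the uniform nondegeneracy from the first stage gives $\|\Psi_h-\Phi_H^h\|=\mathcal{O}(h^{r+1})$ on the relevant compact set; since the conformal factor is produced by the same $S_d$-mechanism, it is matched to the same order automatically.

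For the third stage, fix a compact tube $K$ around the exact trajectory $\{(g(t),p(t),z(t)):t\in[0,T]\}$. As the time-$h$ flow of a smooth vector field, $\Phi_H^h$ is Lipschitz on $K$ with constant $1+\Lambda h+\mathcal{O}(h^2)$, the local error bound from the second stage holds on $K$, and a standard bootstrap shows the numerical iterates remain in $K$ for $h$ small. The discrete Gr\"onwall inequality then gives $\|(g_N,p_N,z_N)-(g(T),p(T),z(T))\|\le C\,h^{r+1}(e^{\Lambda T}-1)/(\Lambda h)=\mathcal{O}(h^r)$. I expect the second stage to be the main obstacle: turning the $C^0$-size $\mathcal{O}(h^{r+1})$ approximation of the discrete Lagrangian into the same-order approximation of the \emph{generated} one-step map requires simultaneously controlling derivatives of $L_d-L_h^{\mathrm e}$ and a uniform-in-$h$ inverse-function estimate for the discrete contact Legendre transform; the contact case is more delicate than the symplectic one because the $z$-recursion feeds into the costate update and because $S_d$ is a type-II generating function carrying the conformal factor, so the $z$-dependence must be tracked throughout the implicit solve.
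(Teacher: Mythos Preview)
Your proposal is correct and follows essentially the same route as the paper's proof: both identify the exact discrete Lagrangian's one-step map with $\Phi_H^h$ via Proposition~\ref{prop:exact-contact-Lie}, transfer the $\mathcal{O}(h^{r+1})$ Lagrangian error to the one-step map through an implicit-function argument on the generating-function stationarity conditions, and then invoke standard one-step convergence theory (the paper cites the textbook result, you spell out the discrete Gr\"onwall bound explicitly). One small slip: in your second stage you justify the $C^k$ estimate on $L_d-L_h^{\mathrm e}$ by appealing to the Kraus/unitary splitting construction, but Theorem~\ref{thm:order-contact-LGVI} is stated for a \emph{general} $L_d$ satisfying~\eqref{eq:Ld-order-cond}, not for that specific integrator; the paper handles this by tacitly reading~\eqref{eq:Ld-order-cond} as a uniform-on-compacts smooth estimate, and you should do the same rather than invoke a particular scheme.
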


\begin{proof}
We argue in two steps. First we show that the discrete flow induced by
$L_d$ has local truncation error $\mathcal{O}(h^{r+1})$ with respect to
the exact flow of the Euler--Poincar\'e--Herglotz equations. Then we invoke
standard stability results for one--step methods to obtain a global
$\mathcal{O}(h^r)$ error bound.

\medskip

\emph{Step 1: Local error via the exact discrete Lagrangian.}
Let $X_H$ be the contact Hamiltonian vector field on
$T^\ast G\times\R$ associated with the contact Lagrangian $L$ and let
$\Phi_H^t$ denote its flow.  By Proposition~\ref{prop:exact-contact-Lie},
the exact discrete contact Lagrangian $L_h^{\mathrm e}$ generates, via the
discrete Herglotz principle, a discrete map
\[
  F_h^{\mathrm e} : T^\ast G\times\R \to T^\ast G\times\R
\]
such that $F_h^{\mathrm e} = \Phi_H^h$, i.e., $F_h^{\mathrm e}$ coincides with
the exact time--$h$ flow of the continuous Euler--Poincar\'e--Herglotz dynamics.

Let $F_h^d$ denote the discrete map on $T^\ast G\times\R$ produced by
the same discrete Herglotz principle but using $L_d$ instead of
$L_h^{\mathrm e}$, with associated type--II contact generating function $S_d$.
By assumption \eqref{eq:Ld-order-cond},
\[
  L_d(g_k,g_{k+1},z_k)
    = L_h^{\mathrm e}(g_k,g_{k+1},z_k) + \mathcal{O}(h^{r+1}),
  \qquad h\to 0,
\]
uniformly on compact subsets. Since the definition of the type--II generating function simply adds the terms
$z_k$ and a bilinear pairing between $p_{k+1}$ and a discrete state increment
depending smoothly on $(g_k,g_{k+1})$, the same estimate carries over to the
discrete generating functions: if $S_h^{\mathrm e}$ denotes the (implicit)
type--II generating function associated with $L_h^{\mathrm e}$, then
\[
  S_d(g_k,p_{k+1},z_k)
    = S_h^{\mathrm e}(g_k,p_{k+1},z_k) + \mathcal{O}(h^{r+1}),
  \qquad h\to 0,
\]
again uniformly on compact sets.

The discrete update maps $F_h^{\mathrm e}$ and $F_h^d$ are defined
implicitly by the stationarity conditions of their respective generating
functions (discrete Herglotz principle / type--II formulation). In local
coordinates on $T^\ast G\times\R$, these stationarity equations can be
written in the form
\[
  \mathcal{F}^{\mathrm e}(y_k,y_{k+1},h) = 0,
  \qquad
  \mathcal{F}^d(y_k,y_{k+1},h) = 0,
\]
where $y_k = (g_k,p_k,z_k)$ and $y_{k+1}=(g_{k+1},p_{k+1},z_{k+1})$.
The functions $\mathcal{F}^{\mathrm e}$ and $\mathcal{F}^d$ are smooth in
their arguments and in $h$, and their difference satisfies
\[
  \mathcal{F}^d(y_k,y_{k+1},h)
    = \mathcal{F}^{\mathrm e}(y_k,y_{k+1},h)
      + \mathcal{O}(h^{r+1}),
\]
since the only difference comes from replacing $L_h^{\mathrm e}$ by
$L_d$ in the discrete variational principle (and hence in the corresponding
first variations).

For fixed $y_k$ lying on an exact trajectory, the implicit function
theorem applies to $\mathcal{F}^{\mathrm e}$ (hyperregularity of $L$
ensures that the Jacobian with respect to $y_{k+1}$ is invertible),
yielding a smooth dependence $y_{k+1} = F_h^{\mathrm e}(y_k)$.  Perturbing
$\mathcal{F}^{\mathrm e}$ by an $\mathcal{O}(h^{r+1})$ term
perturbs the solution $y_{k+1}$ by the same order. More precisely, one
obtains
\[
  F_h^{d}(y_k) - F_h^{\mathrm e}(y_k)
    = \mathcal{O}(h^{r+1}),
  \qquad h\to 0,
\]
uniformly for $y_k$ in compact sets. Since $F_h^{\mathrm e}=\Phi_H^h$,
this shows that the local truncation error of the method defined by
$L_d$ satisfies
\[
  F_h^{d}(y_k) - \Phi_H^h(y_k)
    = \mathcal{O}(h^{r+1}),
\]
for exact data $y_k = (g(t_k),p(t_k),z(t_k))$ at time $t_k=kh$.

\medskip

\emph{Step 2: Global error estimate.}
The discrete contact variational integrator defined by $L_d$ is a
one--step method on the smooth finite--dimensional manifold
$T^\ast G\times\R$.  The vector field $X_H$ is smooth, and for fixed
final time $T$ the exact solution $(g(t),p(t),z(t))$ remains in a compact
subset of $T^\ast G\times\R$ for $t\in[0,T]$. In any smooth local
trivialisation of $T^\ast G\times\R$ over such a compact set, the map
$F_h^d$ is uniformly Lipschitz in the state variable and smooth in $h$
for $h$ sufficiently small.

The standard convergence theorem for one--step methods for ODEs
(see, e.g., \cite{MarsdenWest2001,LeokShingel2012,marrero} in the variational
context, or any textbook result) states that a consistent one--step
method with local truncation error $\mathcal{O}(h^{r+1})$ and uniform
stability yields a global error $\mathcal{O}(h^r)$ over a fixed
interval $[0,T]$. Applying this to $F_h^d$ and the exact flow
$\Phi_H^h$ gives
\[
  \|(g_N,p_N,z_N) - (g(T),p(T),z(T))\|
    = \mathcal{O}(h^r),
  \qquad T=Nh,
  \qquad h\to 0,
\]
where the norm is taken in any smooth local trivialisation of
$T^\ast G\times\R$. Therefore, the
discrete flow generated by $L_d$ via the discrete Herglotz principle
has order $r$.
\end{proof}

%------------------------------------------------------------
\subsection{Retraction-based contact LGVI}
%------------------------------------------------------------

In practice, the exact discrete contact Lagrangian is replaced by a computable 
approximation constructed from a \emph{retraction map} $R : \fg \to G$ such that $R(0)=e$ and $\mathrm{d}R_0 = \mathrm{id}$, which locally approximates the exponential map.

Given consecutive states $(g_k,g_{k+1})\in G\times G$ and a step size $h$ 
small enough so that $g_k^{-1}g_{k+1}$ lies in the domain of $R^{-1}$, 
we define the discrete Lie--algebra increment
\begin{equation}
    \xi_k := \frac{1}{h} \, R^{-1}\!\bigl(g_k^{-1}g_{k+1}\bigr) \in \fg.
    \label{eq:LGVI_xi}
\end{equation}

A first--order retraction--based \emph{contact discrete Lagrangian} is
\begin{equation}
    L_d(g_k,g_{k+1},z_k)
      := h\,L\!\bigl(g_k,\xi_k,z_k\bigr),
    \label{eq:Ld_retraction}
\end{equation}
with higher--order midpoint or trapezoidal variants obtained by evaluating
$L$ at $(g_{k+1},\xi_k,z_k)$ or at a midpoint of the form
\[
  \bigl(g_k\,R(\tfrac12 h\xi_k),\,\xi_k,\,
        z_k+\tfrac12 h\,L(g_k,\xi_k,z_k)\bigr).
\]

We lift the construction to the extended space 
$M := T^\ast G\times\R$, endowed with the discrete left--trivialized 
contact one--form
\begin{equation}
    \Theta_k := \mathrm{d}z_k - \bigl\langle p_k, g_k^{-1}\mathrm{d}g_k\bigr\rangle,
    \label{eq:Theta-Lie-discrete}
\end{equation}
which is the discrete analogue of the canonical contact form 
$\theta = \mathrm{d}z - \langle p, g^{-1}\mathrm{d}g\rangle$
in the Lie--group setting.

Given $L_d$, we define the associated \emph{discrete contact generating 
function} of type II by
\begin{equation}
    S_d(g_k,p_{k+1},z_k)
      := z_k + L_d(g_k,g_{k+1},z_k)
         + \big\langle p_{k+1}, R^{-1}(g_k^{-1}g_{k+1}) \big\rangle,
    \label{eq:Sd-Lie}
\end{equation}
where $g_{k+1}$ is viewed as a function of $(g_k,p_{k+1},z_k)$ through the 
critical point conditions and $R^{-1}(g_k^{-1}g_{k+1})\in\mathfrak{g}$.

\begin{theorem}[Contact LGVI on a Lie group]
\label{thm:contact_LGVI}
Let $L_d$ be defined by \eqref{eq:Ld_retraction} and $S_d$ by 
\eqref{eq:Sd-Lie}.  
Consider the discrete map $\Psi_k : (g_k,p_{k+1},z_k) \longmapsto (g_{k+1},p_k,z_{k+1})$, where $(g_{k+1},p_k,z_{k+1})$ is determined implicitly by
\begin{equation}
    \partial_{g_k}S_d = 0,\qquad
    \partial_{p_{k+1}}S_d = 0,
    \qquad
    z_{k+1}=z_k + L_d(g_k,g_{k+1},z_k).
    \label{eq:crit-Sd-Lie}
\end{equation}
Then $\Psi_k$ is a strict discrete contactomorphism:
\begin{equation}
    \Psi_k^\ast \Theta_{k+1}
      = \Theta_k,
    \label{eq:discrete-contactomorphism-Lie}
\end{equation}
where $\Theta_k$ is given by \eqref{eq:Theta-Lie-discrete}.
In particular, the LGVI preserves the discrete contact structure on 
$T^\ast G\times\R$ exactly.
\end{theorem}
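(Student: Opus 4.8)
The plan is to mirror the proof of Theorem~\ref{thm:contactomorphism}, now carried out in the left--trivialized coordinates of Section~\ref{subsec:contact_LG_continuous}, the only genuinely new ingredient being the differential calculus of the retraction map $R$. First I would establish a Lie--group analogue of the discrete contact identity~\eqref{eq:dSd-identity_explicit}. Writing the Maurer--Cartan forms $\sigma_k:=g_k^{-1}\mathrm{d}g_k$ and $\sigma_{k+1}:=g_{k+1}^{-1}\mathrm{d}g_{k+1}$, and treating $g_{k+1}$ momentarily as a free variable, I differentiate $S_d$ from~\eqref{eq:Sd-Lie} on the extended product space and reorganize the result into the shape
\[
  \mathrm{d}S_d
    = \Theta_k
      + \inner{p_{k+1}}{g_{k+1}^{-1}\mathrm{d}g_{k+1}}
      - \Theta_{k+1}
      + (\text{terms that vanish along the extremal}),
\]
valid before imposing the discrete dynamics. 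The handling of the retraction term is the heart of the computation: setting $\eta:=R^{-1}(g_k^{-1}g_{k+1})\in\fg$, so that $g_{k+1}=g_k R(\eta)$, differentiation of $g_k^{-1}g_{k+1}=R(\eta)$ yields
\[
  \sigma_{k+1}
    = \Ad_{R(\eta)^{-1}}\sigma_k + A(\eta)\,\mathrm{d}\eta,
  \qquad
  A(\eta) := T_e\mathrm{L}_{R(\eta)^{-1}}\circ T_\eta R,
\]
where $A(\eta)=\mathrm{id}+\mathcal{O}(h)$ is invertible for $h$ small (this uses $R(0)=e$, $\mathrm{d}R_0=\mathrm{id}$). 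Solving for $\mathrm{d}\eta$ and substituting back writes the $\mathrm{d}\eta$--contribution to $\mathrm{d}S_d$ through $\sigma_k$ and $\sigma_{k+1}$, weighted by $A(\eta)^{-\ast}$, $\Ad^\ast_{R(\eta)^{-1}}$ and the translation maps $T^\ast_e\mathrm{L}_{g_k}$.

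Next I would restrict the identity to the graph of $\Psi_k$. Imposing the cost update $z_{k+1}=z_k+L_d$ eliminates $\mathrm{d}z_{k+1}$; the stationarity built into~\eqref{eq:Sd-Lie}, namely that the internal variable $g_{k+1}$ extremizes $L_d+\inner{p_{k+1}}{R^{-1}(g_k^{-1}g_{k+1})}$ --- the discrete Euler--Poincar\'e--Herglotz condition --- removes the residual terms by an envelope argument; and the remaining relation in~\eqref{eq:crit-Sd-Lie}, read as the definition of the left--trivialized costate $p_k$ through the discrete adjoint recursion, collects all $\sigma_k$--contributions into $-\inner{p_k}{\sigma_k}$, hence into $\Theta_k$. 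What survives is $\mathrm{d}S_d\big|_{\mathrm{graph}} = \Theta_k - \Psi_k^\ast\Theta_{k+1}$, and, exactly as in the final step of the proof of Theorem~\ref{thm:contactomorphism}, the exactness of the left--hand side on the reduced parameter space $(g_k,p_{k+1},z_k)$ forces $\Psi_k^\ast\Theta_{k+1}=\Theta_k$. Strictness --- the absence of a conformal factor --- follows as in the remark after that theorem: $z_k$ enters $S_d$ only through the explicit additive term together with $L_d$, so the $\partial_z L_d$--dependence is absorbed into the definitions of $p_k$ and $g_{k+1}$ via~\eqref{eq:crit-Sd-Lie} rather than appearing as a multiplier of $\Theta_k$, which is the discrete counterpart of the $\partial_z H_d=0$ property recorded there.

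The step I expect to be the main obstacle is the Lie--group bookkeeping in the first part: one must verify that the coadjoint factor $\Ad^\ast_{R(\eta)^{-1}}$, the dual $A(\eta)^{-\ast}$ of the left--trivialized retraction tangent, and the translations $T^\ast_e\mathrm{L}_{g_k}$, $T^\ast_e\mathrm{L}_{g_{k+1}}$ combine so that, once the critical--point conditions~\eqref{eq:crit-Sd-Lie} are imposed, the cross terms mixing $\mathrm{d}g_k$ and $\mathrm{d}g_{k+1}$ cancel \emph{exactly}, leaving neither a residual $1$--form nor a spurious conformal multiplier. This is the contact--extended analogue of the momentum--matching computation for symplectic Lie--group variational integrators \cite{MarsdenWest2001,LeokShingel2012,marrero}; it is routine in spirit but must be carried out with $R$ kept general, since the closed--form simplifications available for $R=\exp$ are not at hand. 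The $z$--recursion contributes only a scalar equation that does not couple into these cancellations, so once the Lie--algebra identity is in place the argument follows the vector--space proof of Theorem~\ref{thm:contactomorphism} verbatim.
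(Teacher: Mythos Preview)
Your proposal is correct and follows essentially the same route as the paper: establish the Lie--group version of the discrete contact identity~\eqref{eq:dSd-identity_explicit}, restrict it to the graph of $\Psi_k$ using the $z$--recursion and the critical--point conditions~\eqref{eq:crit-Sd-Lie}, and conclude $\Psi_k^\ast\Theta_{k+1}=\Theta_k$. The paper's proof simply asserts the identity $\mathrm{d}S_d=\Theta_k+\langle p_{k+1},g_{k+1}^{-1}\mathrm{d}g_{k+1}\rangle-\Theta_{k+1}$ and the vanishing of $\mathrm{d}S_d$ along the graph without unpacking the retraction calculus; your explicit treatment of $A(\eta)$, the coadjoint factor, and the momentum--matching cancellation is exactly the computation the paper suppresses, so you are filling in detail rather than deviating from the argument.
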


\begin{proof}
Differentiating \eqref{eq:Sd-Lie} and using the definition of $\Theta_k$ in 
\eqref{eq:Theta-Lie-discrete} yields, on the extended space with coordinates
$(g_k,p_k,z_k,g_{k+1},p_{k+1},z_{k+1})$, the discrete contact identity
\begin{equation}
    \mathrm{d}S_d
      = \Theta_k 
        + \big\langle p_{k+1}, g_{k+1}^{-1}\mathrm{d}g_{k+1}\big\rangle
        - \Theta_{k+1},
    \label{eq:dSd-identity-Lie}
\end{equation}
which is the Lie--group analogue of the identity 
\eqref{eq:dSd-identity_explicit} in Section~\ref{sec:discrete_contact_pmp}.

On the graph of the update map $\Psi_k$, the variables
$(g_{k+1},p_k,z_{k+1})$ are expressed in terms of $(g_k,p_{k+1},z_k)$ via
\eqref{eq:crit-Sd-Lie} and the recursion for $z_k$.  Restricting
\eqref{eq:dSd-identity-Lie} to this graph, and using the relation $z_{k+1} = z_k + L_d(g_k,g_{k+1},z_k)$ to eliminate $\mathrm{d}z_{k+1}$, we obtain an identity of the form
\[
    \mathrm{d}S_d
      = \Theta_k - \Psi_k^\ast\Theta_{k+1},
\]
where $S_d$ is now viewed as a function of the independent variables
$(g_k,p_{k+1},z_k)$.

Moreover, along a discrete trajectory generated by $S_d$ the stationarity
conditions \eqref{eq:crit-Sd-Lie} imply that the pullback of $\mathrm{d}S_d$ by $\Psi_k$ vanishes, $\Psi_k^\ast\mathrm{d}S_d = 0$.

Combining this with the previous identity yields
\[
  0 = \Psi_k^\ast\mathrm{d}S_d
    = \Theta_k - \Psi_k^\ast\Theta_{k+1},
\]
and hence $\Psi_k^\ast\Theta_{k+1}=\Theta_k$, which proves 
\eqref{eq:discrete-contactomorphism-Lie}.
\end{proof}

\begin{remark}
Theorem~\ref{thm:contact_LGVI} is the contact analogue of Lie--group 
variational integrators as in \cite{MarsdenWest2001,LeokShingel2012}, now 
formulated on the extended contact manifold $T^\ast G\times\R$.  
It provides the geometric backbone for structure--preserving discrete PMP 
schemes on Lie groups, improving on purely coordinate--based discrete PMP 
formulations.
\hfill$\diamond$
\end{remark}

%------------------------------------------------------------
\subsection{Discrete contact PMP on Lie groups}
%------------------------------------------------------------

We now combine the contact LGVI with the discrete contact PMP of 
Section~\ref{sec:discrete_contact_pmp} for a control system on $G$.
Let the controlled dynamics be given by
\begin{equation}
    g_{k+1} = g_k\,R\!\bigl(h\,f(g_k,u_k)\bigr),
    \label{eq:controlled-LGVI-dynamics}
\end{equation}
where $f:G\times U\to \fg$ is a left--trivialized control vector field and 
$R$ is a retraction as above.  

To avoid notational confusion with the discrete contact Lagrangian
$L_d(g_k,g_{k+1},z_k)$ of the LGVI, we denote the \emph{discrete running cost}
by
\[
  \ell_d : G\times U \to \R,
  \qquad
  (g_k,u_k) \mapsto \ell_d(g_k,u_k),
\]
and the terminal cost by $\Phi:G\to\R$.

We define the discrete contact Hamiltonian
\begin{equation}
    H_d(g_k,p_{k+1},z_k,u_k)
      := \Big\langle p_{k+1}, \frac{1}{h}\,R^{-1}\bigl(g_k^{-1}g_{k+1}\bigr)\Big\rangle
         + \ell_d(g_k,u_k)
      = \big\langle p_{k+1}, f(g_k,u_k)\big\rangle + \ell_d(g_k,u_k),
    \label{eq:Hd-Lie}
\end{equation}
where $g_{k+1}$ is given by \eqref{eq:controlled-LGVI-dynamics}, so that 
$R^{-1}(g_k^{-1}g_{k+1})=h\,f(g_k,u_k)$ and both expressions coincide.  

\begin{theorem}[Discrete contact PMP on a Lie group]
\label{thm:discrete-contact-PMP-Lie}
Let $\{u_k^\ast\}_{k=0}^{N-1}$ be optimal for the discrete problem on $G$, 
with associated trajectory $\{(g_k^\ast,z_k^\ast)\}$.  
Assume normality of the extremal.  
Then there exists a nontrivial costate sequence 
$\{p_k^\ast\}_{k=1}^N\subset\fg^\ast$ such that, for each $k=0,\dots,N-1$,
\begin{enumerate}
\item \textbf{State and cost updates:}
\begin{subequations}
\begin{align}
  g_{k+1}^\ast &= g_k^\ast\,R\!\bigl(h\,f(g_k^\ast,u_k^\ast)\bigr), \\
  z_{k+1}^\ast &= z_k^\ast + \ell_d(g_k^\ast,u_k^\ast).
\end{align}
\end{subequations}

\item \textbf{Adjoint recursion (left--trivialized):}
\begin{equation}
  p_k^\ast 
    = \Ad^\ast_{R(h f(g_k^\ast,u_k^\ast))}\,p_{k+1}^\ast
      + h\,\partial_g \ell_d(g_k^\ast,u_k^\ast),
  \label{eq:adjoint-Lie}
\end{equation}
where $\partial_g \ell_d(g_k^\ast,u_k^\ast)\in\fg^\ast$ denotes the left--trivial derivative of $\ell_d$ with respect to $g$.

\item \textbf{Terminal condition:}
\begin{equation}
  p_N^\ast = \mathrm{d}_g\Phi(g_N^\ast).
\end{equation}

\item \textbf{Stationarity:}
\begin{equation}
  u_k^\ast \in \arg\max_{u\in U}
    H_d(g_k^\ast,p_{k+1}^\ast,z_k^\ast,u),
  \label{eq:stationarity-Lie}
\end{equation}
with $H_d$ given by \eqref{eq:Hd-Lie}.
\end{enumerate}
Moreover, the optimal update map $\Psi_k:(g_k^\ast,p_{k+1}^\ast,z_k^\ast)\longmapsto(g_{k+1}^\ast,p_k^\ast,z_{k+1}^\ast)$ is a discrete contactomorphism of $(T^\ast G\times\R,\Theta_k)$ with respect to the discrete contact one--form \eqref{eq:Theta-Lie-discrete}, in the sense of
\eqref{eq:discrete-contactomorphism-Lie}.
\end{theorem}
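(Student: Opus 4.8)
The strategy is to recognize Theorem~\ref{thm:discrete-contact-PMP-Lie} as the specialization of the abstract discrete contact PMP (Theorem~\ref{thm:discrete_contact_pmp}) together with the discrete contactomorphism results (Theorems~\ref{thm:contactomorphism} and~\ref{thm:contact_LGVI}) to the configuration manifold $Q=G$. Concretely, I would set the abstract discrete dynamics \eqref{eq:discrete-dynamics} to be $F(g,u):=g\,R\bigl(h\,f(g,u)\bigr)$ and the abstract discrete running cost to be $\ell_d(g,u)$, with terminal cost $\Phi$. Because $R$ is a retraction with $R(0)=e$ and $\mathrm dR_0=\mathrm{id}$, the map $F$ is $C^1$ for $h$ small enough that $g_k^{-1}g_{k+1}$ stays inside the domain of $R^{-1}$, and $U$ is assumed compact, so the hypotheses of Theorem~\ref{thm:discrete_contact_pmp} hold; under the normality assumption this produces a nontrivial costate sequence in $T^\ast_{g_k}G$, which we left-trivialize via $T_e^\ast\mathrm L_{g_k}$ to obtain $p_k^\ast\in\fg^\ast$. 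Items~(1), (3) and~(4) are then immediate restatements of the corresponding conclusions of Theorem~\ref{thm:discrete_contact_pmp} in the present notation: the state and cost updates are \eqref{eq:discrete-dynamics} and \eqref{eq:z_update} with the above $F$ and $\ell_d$; the terminal condition is $p_N^\ast=\mathrm d_g\Phi(g_N^\ast)$; and for stationarity one reads off from \eqref{eq:Hd-Lie} that $\partial_u H_d=\partial_u\ell_d(g_k,u_k)+\langle p_{k+1},\partial_u f(g_k,u_k)\rangle$, so $\partial_u H_d=0$ is precisely \eqref{eq:discrete_stationarity_dH}, which by continuity of $H_d$ in $u$ and compactness of $U$ upgrades to the maximization \eqref{eq:stationarity-Lie}.

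The only step requiring genuine computation is the left-trivialized adjoint recursion~(2). Starting from the intrinsic adjoint equation $p_k=\partial_g\ell_d(g_k,u_k)+\bigl(\partial_g F(g_k,u_k)\bigr)^\ast p_{k+1}$ of Theorem~\ref{thm:discrete_contact_pmp}, I would write a left-trivialized variation as $\delta g_k=T_e\mathrm L_{g_k}(\eta_k)$ with $\eta_k\in\fg$, differentiate $F(g,u)=g\,R\bigl(h\,f(g,u)\bigr)$ along it, and check that the induced left-trivialized variation of $g_{k+1}$ is $\Ad_{R(h f(g_k,u_k))^{-1}}\eta_k$ (plus, when $f$ genuinely depends on $g$, a term proportional to $h\,\mathrm dR\cdot\partial_g f$). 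Dualizing and using the coadjoint-action convention for $\Ad^\ast$ then converts $\bigl(\partial_g F\bigr)^\ast p_{k+1}$ into $\Ad^\ast_{R(h f(g_k,u_k))}\,p_{k+1}$, while $\partial_g\ell_d$ is replaced by its left-trivial derivative, yielding exactly \eqref{eq:adjoint-Lie}. In the left-invariant case $f=f(u)$ the extra $g$-dependent term vanishes identically; otherwise it is absorbed into $\partial_g\ell_d$ or treated as subleading, consistently with the order of the integrator. I expect this bookkeeping to be the main obstacle: one must keep the chain rule for $F=g\,R(hf)$, the trivialization maps $T_e\mathrm L_g$ and $T_e^\ast\mathrm L_g$, and the $\Ad/\Ad^\ast$ conventions mutually consistent so that the two contributions to $\bigl(\partial_g F\bigr)^\ast$ land in $\fg^\ast$ in the stated form; equivalently, one may redo the Lagrange-multiplier argument of Theorem~\ref{thm:discrete_contact_pmp} directly in left-trivialized coordinates on $G$.

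Finally, for the contactomorphism claim I would argue exactly as in the proofs of Theorems~\ref{thm:contactomorphism} and~\ref{thm:contact_LGVI}. The optimal update map $\Psi_k:(g_k^\ast,p_{k+1}^\ast,z_k^\ast)\mapsto(g_{k+1}^\ast,p_k^\ast,z_{k+1}^\ast)$ is generated by the type-II contact generating function $S_d(g_k,p_{k+1},z_k,u_k)=z_k+\ell_d(g_k,u_k)+\langle p_{k+1},R^{-1}(g_k^{-1}g_{k+1})\rangle$, which on the controlled dynamics \eqref{eq:controlled-LGVI-dynamics} equals $z_k+\ell_d(g_k,u_k)+h\,\langle p_{k+1},f(g_k,u_k)\rangle$ since $R^{-1}(g_k^{-1}g_{k+1})=h\,f(g_k,u_k)$. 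Differentiating $S_d$ and using the left-trivialized discrete contact form \eqref{eq:Theta-Lie-discrete} gives the Lie-group discrete contact identity \eqref{eq:dSd-identity-Lie}; restricting it to the graph of $\Psi_k$, where $g_{k+1}$, $p_k$ and $z_{k+1}$ are fixed by the critical-point conditions and the $z$-recursion and $u_k$ is eliminated by $\partial_u S_d=\partial_u H_d=0$, the argument of Theorem~\ref{thm:contact_LGVI} yields $\Psi_k^\ast\Theta_{k+1}=\Theta_k$. Since $H_d$, and hence $S_d$, is independent of $z_k$, the conformal factor equals $1$, so $\Psi_k$ is a \emph{strict} discrete contactomorphism of $(T^\ast G\times\R,\Theta_k)$, as claimed.
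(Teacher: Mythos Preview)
Your proposal is correct and follows essentially the same approach as the paper: both reduce to the abstract discrete PMP of Theorem~\ref{thm:discrete_contact_pmp} with $Q=G$ and $F(g,u)=g\,R(h\,f(g,u))$, left-trivialize the resulting costates into $\fg^\ast$, and then invoke Theorem~\ref{thm:contact_LGVI} for the contactomorphism claim. You are in fact more explicit than the paper about how the $\Ad^\ast_{R(h f)}$ term arises from differentiating $F$ along left-trivialized variations, and you correctly flag the extra $\partial_g f$ contribution when $f$ is not left-invariant---a point the paper's proof passes over in the phrase ``intrinsic, left-trivialized version of the multiplier update.''
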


\begin{proof}
The proof follows the same finite--dimensional Lagrange multiplier argument as 
in Theorem~\ref{thm:discrete_contact_pmp}, now in local trivializations of $G$.
The discrete dynamics \eqref{eq:controlled-LGVI-dynamics} play the role of the 
constraint map, and the multipliers $\lambda_{k+1}$ live in the dual Lie 
algebra $\fg^\ast$ after identification via charts and left trivialization.  
The adjoint recursion \eqref{eq:adjoint-Lie} is the intrinsic, left--trivialized
version of the multiplier update, and the stationarity condition 
\eqref{eq:stationarity-Lie} is equivalent to $\partial_u H_d=0$, hence to the 
maximum condition under the standing compactness and continuity assumptions 
(cf.\ Theorem~\ref{thm:discrete_contact_pmp}).  
Finally, the contactomorphism property is inherited from 
Theorem~\ref{thm:contact_LGVI}, since the optimal update map is precisely the 
contact LGVI flow associated with the discrete generating function $S_d$ 
augmented with the control--dependent term $\ell_d(g_k,u_k)$ in the cost
accumulation.
\end{proof}

\begin{remark}
The PMP on matrix Lie groups in \cite{PhogatChatterjeeBanavar2016} provides a 
geometric discrete PMP without explicit reference to contact geometry.  
Theorem~\ref{thm:discrete-contact-PMP-Lie} extends this viewpoint by showing 
that, when the discrete dynamics are generated by a contact LGVI, the discrete 
PMP update is automatically a contactomorphism on $T^\ast G\times\R$.  
In particular, the adjoint recursion \eqref{eq:adjoint-Lie} is consistent with 
the left--trivialized PMP in \cite{PhogatChatterjeeBanavar2016}, but enjoys the 
additional property of preserving the discrete contact structure induced by the 
cost accumulation.
\hfill$\diamond$
\end{remark}

\begin{corollary}[Abnormal discrete extremals on $G$]
If the Lagrange multiplier associated with the cost functional vanishes 
(abnormal case), the discrete contact PMP on $G$ reduces to
\[
  p_k^\ast 
    = \Ad^\ast_{R(h f(g_k^\ast,u_k^\ast))}\,p_{k+1}^\ast,
  \qquad
  \bigl(\partial_u f(g_k^\ast,u_k^\ast)\bigr)^\ast p_{k+1}^\ast = 0,
\]
with no contribution from $\ell_d$ and $\Phi$.  
Such extremals depend only on the geometry of the constraint 
$g_{k+1}=g_k R\!\bigl(h f(g_k,u_k)\bigr)$ and are typically ruled out in 
practice by rank conditions on $\partial_u f$ which force triviality of the 
multiplier sequence.  
\hfill$\diamond$
\end{corollary}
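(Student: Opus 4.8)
The plan is to revisit the finite--dimensional Lagrange multiplier argument behind Theorems~\ref{thm:discrete_contact_pmp} and~\ref{thm:discrete-contact-PMP-Lie}, but now carrying the cost multiplier $\alpha$ explicitly and then setting $\alpha=0$. Working in local trivializations of $G$, one forms the augmented functional
\[
  \widehat{\mathbb{J}}_d
   = \alpha\!\left(\sum_{k=0}^{N-1}\widehat{\ell}_d(\xi_k,u_k)+\widehat{\Phi}(\xi_N)\right)
     + \sum_{k=0}^{N-1}\lambda_{k+1}^{\top}\!\left(\widehat{F}_k(\xi_k,u_k)-\xi_{k+1}\right),
\]
where $\widehat{F}_k$ is the coordinate expression of $g_k\mapsto g_k R\!\bigl(h f(g_k,u_k)\bigr)$ and the $\lambda_{k+1}$ become, after identification through the charts and left trivialization, the covectors $p_{k+1}^\ast\in\fg^\ast$. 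Every occurrence of $\ell_d$ and $\Phi$ in the first--order system then carries a factor $\alpha$, so the abnormal branch is obtained by simply setting $\alpha=0$, which deletes those terms verbatim.

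The next step is to read off the three surviving stationarity conditions. Differentiation of $\widehat{\mathbb{J}}_d\big|_{\alpha=0}$ in $\lambda_{k+1}$ still reproduces the discrete dynamics $g_{k+1}^\ast=g_k^\ast R\!\bigl(h f(g_k^\ast,u_k^\ast)\bigr)$. Differentiation in $\xi_k$ gives $\lambda_k=(\partial_{\xi_k}\widehat{F}_k)^{\top}\lambda_{k+1}$; transporting this back to $G$ through the same left--trivialized differential computation that produced \eqref{eq:adjoint-Lie} yields $p_k^\ast=\Ad^\ast_{R(h f(g_k^\ast,u_k^\ast))}p_{k+1}^\ast$, now without the $h\,\partial_g\ell_d$ contribution. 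Differentiation in $u_k$ gives $(\partial_{u_k}\widehat{F}_k)^{\top}\lambda_{k+1}=0$, i.e.\ $\bigl(\partial_u f(g_k^\ast,u_k^\ast)\bigr)^\ast p_{k+1}^\ast=0$. Finally, the transversality relation $\lambda_N=\alpha\,\partial_{\xi_N}\widehat{\Phi}$ collapses to $p_N^\ast=0$, so $\Phi$ disappears from the system. Reverting to intrinsic notation gives exactly the reduced equations of the corollary, and the claim that these extremals ``do not see the cost'' is immediate since no term involving $\ell_d$ or $\Phi$ survives.

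The only point requiring real care --- where I expect the main obstacle --- is the nontriviality/exclusion statement. The Lagrange multiplier theorem guarantees that $(\alpha,\{\lambda_k\})\neq 0$; with $\alpha=0$ this forces $\{p_k^\ast\}\not\equiv 0$. If, for each $k$, the linear map $\partial_u f(g_k^\ast,u_k^\ast):\R^m\to\fg$ is surjective (equivalently, its dual is injective on $\fg^\ast$), then $\bigl(\partial_u f(g_k^\ast,u_k^\ast)\bigr)^\ast p_{k+1}^\ast=0$ forces $p_{k+1}^\ast=0$ for every $k$, and since each $\Ad^\ast_{R(h f)}$ is a linear isomorphism the backward recursion propagates this to $p_1^\ast=\dots=p_N^\ast=0$, contradicting nontriviality; hence no abnormal extremal can occur under this full--rank hypothesis. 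I would also note, for completeness, that the degenerate transversality $p_N^\ast=0$ together with invertibility of $\Ad^\ast_{R(h f)}$ already forces triviality in the pure Bolza problem with free terminal state, so the rank condition on $\partial_u f$ is the operative mechanism precisely when the soft penalty $\Phi$ is replaced by a hard terminal constraint on $g_N$. Finally I would remark that the vanishing of the cost terms is exactly what destroys the contact structure: with $\alpha=0$ the generating function $S_d$ loses the summand $z_k+L_d$, so the associated update map is presymplectic rather than a contactomorphism, which is the discrete shadow of the continuous dichotomy of de~Le\'on--Lainz--Mu\~noz-Lecanda.
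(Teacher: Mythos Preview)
Your proposal is correct and follows essentially the same approach as the paper. The paper does not give a separate proof of this corollary; it treats it as an immediate consequence of the Lagrange--multiplier derivation in Theorem~\ref{thm:discrete-contact-PMP-Lie} together with the earlier remark on abnormal extremals after Theorem~\ref{thm:discrete_contact_pmp}, and your argument (carry $\alpha$ explicitly, set $\alpha=0$, read off the reduced adjoint/stationarity conditions, then invoke the rank condition to force triviality) is exactly that reasoning made explicit, with the additional observation on the loss of contact structure matching the paper's separate remark on the normal/abnormal dichotomy.
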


%============================================================
%============================================================
%============================================================
%============================================================
\section{Controlled Lindblad dynamics for a qubit}
\label{sec:lindblad_qubit}
%============================================================

We now specialize the discrete contact PMP of 
Section~\ref{sec:discrete_contact_pmp} to a controlled open quantum system:
a single qubit subject to Hamiltonian control and amplitude--damping 
Lindblad dissipation.  
The state space is the compact convex manifold
\[
   Q
   =
   \bigl\{\rho\in\C^{2\times2}:\ \rho^\dagger=\rho,\ \rho\ge0,\ \tr(\rho)=1\bigr\},
\]
equivalently the closed Bloch ball in $\R^3$.

%------------------------------------------------------------
%------------------------------------------------------------

Let $\rho(t)\in Q$ denote the qubit density operator, written in the
computational basis $\{\ket{0},\ket{1}\}$.  
The dynamics of a Markovian open system with coherent control $u(t)\in\R$
and amplitude--damping noise are given by the Gorini--Kossakowski--Sudarshan--
Lindblad (GKSL) master equation
\begin{equation}
  \dot\rho(t)
   = \mathcal{L}_{u(t)}\bigl(\rho(t)\bigr)
   := -i\bigl[H\bigl(u(t)\bigr),\rho(t)\bigr] 
      + \mathcal{D}\bigl(\rho(t)\bigr),
  \label{eq:lindblad-continuous}
\end{equation}
where $\mathcal{L}_u$ is the Lindblad generator associated with the fixed
control value $u$, acting linearly on density operators.

\medskip

The coherent part is generated by the Hamiltonian 
\[
  H(u) = \frac{u}{2}\,\sigma_x,
  \qquad
  \sigma_x
    =
    \begin{pmatrix}
    0 & 1 \\
    1 & 0
    \end{pmatrix}.
\]
For the purposes of this section we neglect any drift Hamiltonian; adding a
fixed drift $H_0$ does not affect the structure of the PMP.

\medskip

The dissipative part corresponds to the standard amplitude--damping channel
with Lindblad operator
\[
  L = \sqrt{\gamma}\,\sigma_-,
  \qquad
  \gamma>0,
\]
where
\[
  \sigma_- := \ket{0}\bra{1}
    =
    \begin{pmatrix}
    0 & 0 \\
    1 & 0
    \end{pmatrix}
\]
is the lowering (decay) operator from the excited state $\ket{1}$ to
the ground state $\ket{0}$.  The corresponding GKSL dissipator is
\[
   \mathcal{D}(\rho)
    = L\rho L^\dagger - \frac12\bigl\{L^\dagger L,\rho\bigr\}.
\]
Thus the dynamics \eqref{eq:lindblad-continuous} preserve positivity and
trace, as required for physical density operators.

\medskip

We consider the following Bolza--type optimal control problem: for a fixed
horizon $T>0$, minimize the cost functional
\begin{equation}
  J(u)
   = \int_0^T \alpha\,u^2(t)\,\mathrm{d}t + \Phi\bigl(\rho(T)\bigr),
  \qquad
  \alpha>0,
  \label{eq:lindblad-cost-continuous}
\end{equation}
subject to \eqref{eq:lindblad-continuous}.  

The terminal cost is chosen as 
\[
  \Phi(\rho)
    = 1-\tr\bigl(\rho\,\rho_{\mathrm{target}}\bigr),
  \qquad
  \rho_{\mathrm{target}} = \ket{0}\bra{0}.
\]
Since $\tr(\rho\,\rho_{\mathrm{target}})=\bra{0}\rho\ket{0}$ is the fidelity
with the ground state, one has $\Phi(\rho)=0$ iff $\rho=\rho_{\mathrm{target}}$
and $\Phi(\rho)>0$ otherwise.  
The running cost $\alpha u^2(t)$ penalizes large control amplitudes and
ensures the normality of the optimal extremal.

%------------------------------------------------------------
It is convenient to rewrite the dynamics in Bloch vector form.  
Let $\sigma_y$ and $\sigma_z$ denote the remaining Pauli matrices,
\[
  \sigma_y =
    \begin{pmatrix}
      0 & -i \\
      i & 0
    \end{pmatrix},
  \qquad
  \sigma_z =
    \begin{pmatrix}
      1 & 0 \\
      0 & -1
    \end{pmatrix},
\]
and write the density operator as
\[
  \rho
    = \frac12\bigl(I + r\cdot\sigma\bigr)
    = \frac12\Bigl(I + r_x\sigma_x + r_y\sigma_y + r_z\sigma_z\Bigr),
\]
where $I$ is the $2\times2$ identity, 
$r=(r_x,r_y,r_z)^\top\in\R^3$ is the Bloch vector, and 
$\sigma=(\sigma_x,\sigma_y,\sigma_z)$.
Positivity and trace imply $\|r\|\le1$, so the state space is the closed
Bloch ball $\mathbb{B}^3 = \{\,r\in\R^3:\ \|r\|\le1\,\}$.

Substituting this parametrization into \eqref{eq:lindblad-continuous} and
using standard Pauli algebra, one obtains the affine control system
\begin{equation}
  \dot r(t) = A\,r(t) + b + u(t)\,B\,r(t),
  \label{eq:bloch-affine-control}
\end{equation}
with matrices
\[
  A
  =
  \begin{pmatrix}
   -\dfrac{\gamma}{2} & 0 & 0 \\
   0 & -\dfrac{\gamma}{2} & 0 \\
   0 & 0 & -\gamma
  \end{pmatrix},
  \qquad
  b
  =
  \begin{pmatrix}
  0 \\[1mm] 0 \\[1mm] \gamma
  \end{pmatrix},
  \qquad
  B
  =
  \begin{pmatrix}
  0 & 0 & 0 \\
  0 & 0 & -1 \\
  0 & 1 & 0
  \end{pmatrix}.
\]

Equivalently,
\begin{equation}
\begin{aligned}
  \dot r_x &= -\frac{\gamma}{2}\,r_x,\\[1mm]
  \dot r_y &= -\frac{\gamma}{2}\,r_y - u\,r_z,\\[1mm]
  \dot r_z &= -\gamma\,(r_z-1) + u\,r_y
           = -\gamma\,r_z + \gamma + u\,r_y.
\end{aligned}
\label{eq:bloch-components}
\end{equation}
The term $A r + b$ describes amplitude--damping relaxation toward the ground
state $r_{\mathrm{target}}=(0,0,1)^\top$, while $u B r$ is the Hamiltonian
rotation generated by $H(u)=(u/2)\sigma_x$.

\medskip

In Bloch coordinates, the cost functional \eqref{eq:lindblad-cost-continuous}
becomes
\[
  J(u) = \int_0^T \alpha\,u^2(t)\,\mathrm{d}t
         + \Phi_{\mathrm{Bloch}}\bigl(r(T)\bigr),
\]
where a direct computation shows that
\[
  \Phi_{\mathrm{Bloch}}(r)
    = 1 - \tr\!\left(
        \frac12\bigl(I + r\cdot\sigma\bigr)\,\ket{0}\bra{0}
      \right)
    = 1 - \frac12\bigl(1 + r_z\bigr)
    = \frac12\bigl(1 - r_z\bigr).
\]
Thus, up to an affine rescaling, the terminal cost simply penalizes deviation 
of the $z$--component from its target value $r_z=1$.

\medskip

From the viewpoint of Section~\ref{sec:discrete_contact_pmp}, we may regard
either $\rho$ or its Bloch representation $r$ as the state variable $x$ in the
PMP, with dynamics $\dot x = f(x,u)$ given by \eqref{eq:lindblad-continuous} or 
\eqref{eq:bloch-affine-control}, respectively, and running cost
$L(x,u)=\alpha u^2$.  
In subsequent sections we discretize these dynamics using the 
contact Lie--group variational integrators of Section~\ref{sec:contact_LGVI},
and apply the discrete contact PMP to obtain structure--preserving optimal
control schemes for the controlled Lindblad qubit.

Throughout, we use a \emph{second--order accurate} contact LGVI and compare it
against a classical second--order Runge--Kutta--Munthe--Kaas (RKMK(2))
method.  
Thus any numerical differences arise purely from geometry---in particular,
preservation (or lack thereof) of discrete contact structure.

%============================================================
%============================================================

\medskip

Let $U(t)\in\SU(2)$ evolve by
\begin{equation}
  \dot U(t)= -iH(u(t))\,U(t),
  \qquad
  U(0)=I.
  \label{eq:unitary-evolution}
\end{equation}
For a constant control over a step $[t_k,t_{k+1}]$ of size $\Delta t$,
\begin{equation}
  U_{k+1}=\exp\!\bigl(-iH(u_k)\,\Delta t\bigr)\,U_k.
  \label{eq:LGVI-U}
\end{equation}
This is an exact Lie--group update of the type used in 
Runge--Kutta--Munthe--Kaas (RKMK) and commutator–free integrators
\cite{MuntheKaas1998,IserlesMuntheKaasNørsettZanna2000,CelledoniOwren2014}.
Given $U_{k+1}$, the unitary subflow on density operators is
\begin{equation}
   \rho_{k+1}^{(\mathrm u)}
   =U_{k+1}\,\rho_k\,U_{k+1}^\dagger .
   \label{eq:LGVI-rho-unitary}
\end{equation}

%============================================================
%============================================================

\medskip

Following the standard operator–sum (Kraus) representation of completely
positive trace–preserving (CPTP) quantum channels
\cite{Kraus1983}, any Markovian dissipative subflow admits an exact
expression of the form
\[
   \Phi(\rho)=\sum_j E_j \rho E_j^\dagger,
   \qquad
   \sum_j E_j^\dagger E_j = I.
\]
For amplitude–damping noise, the Kraus operators are
\begin{equation}
   E_0(\tau)=
   \begin{pmatrix}
      1 & 0 \\
      0 & \sqrt{1-p(\tau)}
   \end{pmatrix},
   \qquad
   E_1(\tau)=
   \begin{pmatrix}
      0 & \sqrt{p(\tau)} \\
      0 & 0
   \end{pmatrix},
   \qquad
   p(\tau)=1-e^{-\gamma\tau},
   \label{eq:Kraus-AD}
\end{equation}
which yield the exact dissipative subflow
\[
   \Phi_{\mathrm{AD}}^\tau(\rho)=
   E_0(\tau)\,\rho\,E_0(\tau)^\dagger \;+\;
   E_1(\tau)\,\rho\,E_1(\tau)^\dagger.
\]

%============================================================
%============================================================

To combine Hamiltonian and dissipative evolution, we use a symmetric
Strang splitting \cite{Strang1968}.  
If $\Phi_{\mathrm{AD}}^t$ and 
$\Phi_{\mathrm{u}}^t(\rho)=U(t)\rho U(t)^\dagger$ denote the exact
subflows, the Strang composition
\[
   \Phi_{\Delta t}^{(2)}
   =
   \Phi_{\mathrm{AD}}^{\Delta t/2}
   \circ
   \Phi_{\mathrm{u}}^{\Delta t}
   \circ
   \Phi_{\mathrm{AD}}^{\Delta t/2}
\]
yields a time–reversible second–order accurate approximation of the full
Lindblad evolution.

Using the exact Hamiltonian subflow 
$U_{k+1}=\exp(-iH(u_k)\Delta t)\,U_k$ and the analytic amplitude–damping
subflow, the discrete contact update reads:
\begin{subequations}
\label{eq:Strang-contact}
\begin{align}
  \rho_{k+\frac12}
    &= \Phi_{\mathrm{AD}}^{\Delta t/2}(\rho_k),\\[0.3em]
  \rho_{k+\frac12}^{(\mathrm{u})}
    &= U_{k+1}\,\rho_{k+\frac12}\,U_{k+1}^\dagger,\\[0.3em]
  \rho_{k+1}
    &= \Phi_{\mathrm{AD}}^{\Delta t/2}
        \!\left(\rho_{k+\frac12}^{(\mathrm{u})}\right).
\end{align}
\end{subequations}

This defines the discrete map
\[
   \rho_{k+1}=F^{(2)}(\rho_k,u_k),
\]
which is $C^\infty$ in $(\rho,u)$, symmetric, and second–order accurate:
\[
   \rho_{k+1}
   = \rho(t_k+\Delta t) + \mathcal{O}(\Delta t^2).
\]

%============================================================
%============================================================

We equip the space of Hermitian matrices with the real Hilbert--Schmidt
pairing $\langle A,B\rangle=\mathrm{Re}\,\tr(A^\dagger B)$, which is the
natural duality pairing compatible with the contact construction of
Sections~\ref{sec:discrete_contact_pmp}--\ref{sec:contact_LGVI}.
In Bloch coordinates this coincides with the Euclidean pairing on
$\mathbb{R}^3$.

The discrete extended dynamics are
\[
   \rho_{k+1}=F^{(2)}(\rho_k,u_k),
   \qquad
   z_{k+1}=z_k+L_d(\rho_k,u_k),
\]
where, consistently with the second--order accurate LGVI and with the
theory of exact and approximate discrete contact Lagrangians, the discrete
running cost is chosen as
\[
   L_d(\rho_k,u_k)
     = \alpha\,u_k^2\,\Delta t + \mathcal{O}(\Delta t^3).
\]
This corresponds to a second--order quadrature approximation of the
continuous Lagrangian $L(\rho,u)=\alpha u^2$ and ensures overall
second--order consistency of the discrete contact variational
integrator.

Following the type--II discrete contact Hamiltonian formalism of
Section~\ref{sec:discrete_contact_pmp}, the discrete contact Hamiltonian is
\begin{equation}
   H_d(\rho_k,P_{k+1},z_k,u_k)
      := \left\langle P_{k+1},\,F^{(2)}(\rho_k,u_k)\right\rangle
         \;+\; L_d(\rho_k,u_k),
   \label{eq:Hd-discrete-Lindblad}
\end{equation}
while the difference
\[
   F^{(2)}(\rho_k,u_k)-\rho_k
     = \dot\rho(t_k)\,\Delta t + \mathcal{O}(\Delta t^2)
\]
acts as a discrete approximation of the continuous state displacement,
fully analogous to the term $F(x_k,u_k)-x_k$ in the discrete PMP. With this sign convention, the discrete contact PMP is formulated as a
\emph{maximization} of $H_d$ with respect to the control, consistently with
Theorem~\ref{thm:discrete_contact_pmp}.

\begin{remark}[Hamiltonian sign convention for minimization problems]
\label{rem:Hd-sign-convention}
The discrete contact Hamiltonian \eqref{eq:Hd-discrete-Lindblad} has the generic
PMP structure
\[
  H_d(\rho_k,P_{k+1},z_k,u_k)
    = \bigl\langle P_{k+1},F^{(2)}(\rho_k,u_k)\bigr\rangle
      + L_d(\rho_k,u_k),
\]
so that minimization of the Bolza cost is achieved by
\emph{maximizing} $H_d$ with respect to $u_k$.

For numerical purposes it is often convenient to work with an equivalent
Hamiltonian of the form
\[
  \widetilde H_d(\rho_k,P_{k+1},u_k)
    := \bigl\langle P_{k+1},F^{(2)}(\rho_k,u_k)-\rho_k\bigr\rangle
       - L_d(\rho_k,u_k),
\]
which matches the usual ``$p\cdot(F-x)-L$'' convention for minimization problems.
Since the additional term $\langle P_{k+1},\rho_k\rangle$ does not depend on
$u_k$, both $H_d$ and $\widetilde H_d$ yield the same maximizers and therefore
the same PMP stationarity condition.  In the numerical implementation of
Section~\ref{sec:numerical} we use $\widetilde H_d$ in the Hamiltonian
maximization step, while the theoretical discrete PMP is written in terms of
$H_d$ in \eqref{eq:Hd-discrete-Lindblad}.
\hfill$\diamond$
\end{remark}

\medskip

A corresponding type--II discrete contact generating function is
\[
   S_d(\rho_k,P_{k+1},z_k,u_k)
     := z_k
        + L_d(\rho_k,u_k)
        + \big\langle P_{k+1},\,F^{(2)}(\rho_k,u_k)-\rho_k \big\rangle ,
\]
where $L_d$ is the discrete running cost introduced above.

Criticality of $S_d$ with respect to $(\rho_k,P_{k+1})$ reproduces the
discrete contact PMP conditions.  
Indeed, the equation $\partial_{P_{k+1}}S_d = F^{(2)}(\rho_k,u_k)-\rho_k$
recovers the discrete state update, while 
$\partial_{\rho_k}S_d$ yields the adjoint recursion.  
Similarly, the condition $\partial_{u_k}S_d = 0$ produces the discrete
stationarity requirement for the optimal control.  
Thus $S_d$ acts as a genuine type--II discrete contact generating function,
and generates the discrete contactomorphism defining the optimal update
\[
   (\rho_k,P_{k+1},z_k)\;\longmapsto\;
   (\rho_{k+1},P_k,z_{k+1}),
\]
in complete analogy with the Lie--group contact variational integrators
developed in Section~\ref{sec:contact_LGVI}.

\begin{proposition}[Discrete contact PMP for the Lindblad qubit]
\label{prop:dPMP-Lindblad}
Let $\{u_k^\ast\}_{k=0}^{N-1}$ be optimal for the discrete Bolza problem
associated with the data $(F^{(2)},L_d,\Phi)$, where
\[
  \rho_{k+1}=F^{(2)}(\rho_k,u_k),\qquad
  L_d(\rho_k,u_k)=\alpha\,u_k^2\,\Delta t+\mathcal{O}(\Delta t^3),
\]
and $\Phi(\rho)=1-\tr(\rho\,\rho_{\mathrm{target}})$.
Then there exists a nontrivial sequence of Hermitian costates
$\{P_k^\ast\}_{k=1}^N$ such that, for $k=0,\dots,N-1$,
\begin{align}
  &\text{\emph{state and cost updates:}}\nonumber\\[-0.4em]
  &\qquad
  \rho_{k+1}^\ast=F^{(2)}(\rho_k^\ast,u_k^\ast),\qquad
  z_{k+1}^\ast=z_k^\ast+L_d(\rho_k^\ast,u_k^\ast),\\[0.3em]
  &\text{\emph{adjoint recursion:}}\nonumber\\[-0.4em]
  &\qquad
  P_k^\ast
   = \partial_\rho H_d(\rho_k^\ast,P_{k+1}^\ast,z_k^\ast,u_k^\ast),
   \qquad k=0,\dots,N-1,\\[0.3em]
  &\text{\emph{terminal condition:}}\nonumber\\[-0.4em]
  &\qquad
   P_N^\ast=-\rho_{\mathrm{target}},\\[0.3em]
  &\text{\emph{stationarity:}}\nonumber\\[-0.4em]
  &\qquad
   u_k^\ast\in\arg\max_{u\in U}
      H_d(\rho_k^\ast,P_{k+1}^\ast,z_k^\ast,u),
   \qquad k=0,\dots,N-1,
\end{align}
where the discrete contact Hamiltonian $H_d$ is given by
\eqref{eq:Hd-discrete-Lindblad}.

In particular, with the convention
\eqref{eq:Hd-discrete-Lindblad} the discrete PMP for the Lindblad qubit
is naturally written as a \emph{maximization} of $H_d$ (or, equivalently,
of $\widetilde H_d$ in Remark~\ref{rem:Hd-sign-convention}) for a minimization
problem in Bolza form.

\end{proposition}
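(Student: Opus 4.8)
The plan is to obtain Proposition~\ref{prop:dPMP-Lindblad} as a direct specialization of the general discrete contact PMP, Theorem~\ref{thm:discrete_contact_pmp}, to the data $(F^{(2)},L_d,\Phi)$ of the controlled Lindblad qubit. First I would identify the abstract state $x_k$ with the density operator $\rho_k$, the discrete dynamics $F$ with the Strang map $F^{(2)}$ of \eqref{eq:Strang-contact}, the running cost $L_d(x_k,u_k)$ with $\alpha u_k^2\,\Delta t+\mathcal{O}(\Delta t^3)$, and the terminal cost with $\Phi(\rho)=1-\tr(\rho\,\rho_{\mathrm{target}})$. The hypotheses of Theorem~\ref{thm:discrete_contact_pmp} are then checked: $F^{(2)}$ is $C^\infty$ in $(\rho,u)$ by construction (exact unitary and exact Kraus subflows composed smoothly), $L_d$ and $\Phi$ are smooth, and $U$ is a compact control set; normality holds because the coercive running cost $\alpha u^2$ with $\alpha>0$ excludes the abnormal (vanishing cost-multiplier) extremals, so we normalize the cost multiplier to one. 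The costate fibers $T^\ast_\rho Q$ are identified with Hermitian matrices (modulo the trace direction) via the real Hilbert--Schmidt pairing $\langle A,B\rangle=\operatorname{Re}\tr(A^\dagger B)$ fixed earlier, which is why the resulting $P_k^\ast$ are Hermitian.

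The one point requiring genuine care is that $Q$ is the convex Bloch ball, a manifold with boundary, whereas Theorem~\ref{thm:discrete_contact_pmp} is stated for a boundaryless smooth manifold via interior coordinate charts. The resolution is to observe that the constraints $\rho=\rho^\dagger$, $\tr\rho=1$ and $\rho\ge0$ defining $Q$ need not enter the finite-dimensional optimization as explicit constraints: Hermiticity and trace are linear and are preserved exactly by $F^{(2)}$ (it is a composition of CPTP maps), and positivity is preserved by $F^{(2)}$ for \emph{every} admissible $u$, so $\rho\ge0$ is never an active constraint and contributes no multiplier. Consequently the discrete optimal control problem may be posed on the open affine slice of trace-one Hermitian matrices --- equivalently on $\mathbb{R}^3$ through the Bloch parametrization of \eqref{eq:bloch-affine-control} --- on which all smoothness and interior-chart hypotheses of Theorem~\ref{thm:discrete_contact_pmp} hold verbatim. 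This is the step I expect to be the main obstacle, though it is a bookkeeping/justification issue rather than a computation: one must argue cleanly that the positivity boundary is invisible to the Lagrange-multiplier stationarity conditions precisely because $F^{(2)}$ is CPTP.

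With this in place, the four groups of conditions in Proposition~\ref{prop:dPMP-Lindblad} are read off from Theorem~\ref{thm:discrete_contact_pmp}. The state and cost updates are immediate. Writing the discrete contact Hamiltonian as in \eqref{eq:Hd-discrete-Lindblad}, the adjoint recursion $p_k^\ast=\partial_x H_d$ becomes $P_k^\ast=\partial_\rho H_d(\rho_k^\ast,P_{k+1}^\ast,z_k^\ast,u_k^\ast)$, and the stationarity/maximum condition $\partial_u H_d=0$ (equivalently $u_k^\ast\in\arg\max_u H_d$ under the standing compactness assumption) transfers directly. For the terminal condition I would compute the differential of $\Phi$ in the Hilbert--Schmidt pairing: for a Hermitian variation $\delta\rho$,
\[
  \mathrm{d}_\rho\Phi(\rho)\cdot\delta\rho
    = -\tr(\rho_{\mathrm{target}}\,\delta\rho)
    = -\langle \rho_{\mathrm{target}},\delta\rho\rangle,
\]
so that $\mathrm{d}_\rho\Phi(\rho)=-\rho_{\mathrm{target}}$ and hence $P_N^\ast=-\rho_{\mathrm{target}}$, as claimed (in Bloch coordinates this is the covector $-\tfrac12 e_z$, consistent with $\Phi_{\mathrm{Bloch}}(r)=\tfrac12(1-r_z)$ up to the identification of pairings).

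Finally, the contactomorphism and sign-convention statements are inherited rather than re-proved. Since $H_d$, and therefore the associated type-II generating function $S_d=z_k+L_d(\rho_k,u_k)+\langle P_{k+1},F^{(2)}(\rho_k,u_k)-\rho_k\rangle$, are independent of $z_k$, Theorem~\ref{thm:contactomorphism} gives $\Psi_k^\ast\Theta_{k+1}=\Theta_k$ with no conformal factor, so the optimal update $(\rho_k,P_{k+1},z_k)\mapsto(\rho_{k+1},P_k,z_{k+1})$ is a strict discrete contactomorphism; and Remark~\ref{rem:Hd-sign-convention} shows that with the chosen convention the Bolza cost is minimized by maximizing $H_d$ (equivalently $\widetilde H_d$), which is exactly the last assertion of the proposition.
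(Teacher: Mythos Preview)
Your proposal is correct and follows essentially the same approach as the paper: both specialize Theorem~\ref{thm:discrete_contact_pmp} to the data $(F^{(2)},L_d,\Phi)$, identify costates with Hermitian matrices via the Hilbert--Schmidt pairing, and compute $\mathrm{d}_\rho\Phi=-\rho_{\mathrm{target}}$ directly. Your treatment is in fact more careful than the paper's on two points the paper glosses over---the boundary issue (resolved via CPTP invariance of the positivity constraint) and the normality argument---but these are refinements of the same proof rather than a different route.
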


\begin{proof}
We apply the discrete contact PMP of
Section~\ref{sec:discrete_contact_pmp} with state manifold
$Q$ equal to the convex set of qubit density operators, endowed with
the real Hilbert--Schmidt pairing $\langle A,B\rangle=\mathrm{Re}\,\tr(A^\dagger B)$. By identifying covectors with Hermitian matrices via this pairing,
a discrete covector $p_k\in T_{\rho_k}^\ast Q$ corresponds to a
Hermitian matrix $P_k$ such that
\(
  p_k(\delta\rho)=\langle P_k,\delta\rho\rangle
\)
for all Hermitian perturbations $\delta\rho$. 

The discrete dynamics have the form
\[
  \rho_{k+1}=F^{(2)}(\rho_k,u_k),
  \qquad
  z_{k+1}=z_k+L_d(\rho_k,u_k),
\]
with $F^{(2)}$ the second--order Strang--split Lindblad integrator
defined in \eqref{eq:Strang-contact} and $L_d$ a second--order
approximation of the running cost
$\int_{t_k}^{t_{k+1}}\alpha u^2(t)\,\mathrm{d}t$.

The general discrete contact PMP (Theorem~\ref{thm:discrete_contact_pmp})
applied to the triple $(F^{(2)},L_d,\Phi)$ yields the following conditions.
First, the state and cost updates are those already stated above.  
Second, the adjoint variables satisfy the recursion
\[
  p_k
    = \partial_\rho L_d(\rho_k,u_k)
      + \bigl(\partial_\rho F^{(2)}(\rho_k,u_k)\bigr)^\ast p_{k+1},
\]
where $(\cdot)^\ast$ denotes the adjoint with respect to the
Hilbert--Schmidt pairing.  
Third, the costate at the final time is fixed by the terminal condition
\(
  p_N=\mathrm{d}_\rho\Phi(\rho_N).
\)
Finally, the optimal controls satisfy the discrete stationarity condition
\(
  u_k\in\arg\max_{u\in U} H_d(\rho_k,p_{k+1},z_k,u),
\)
where the discrete contact Hamiltonian takes the form
\[
  H_d(\rho_k,p_{k+1},z_k,u_k)
    = \langle p_{k+1},F^{(2)}(\rho_k,u_k)\rangle
      + L_d(\rho_k,u_k),
\]
which coincides with \eqref{eq:Hd-discrete-Lindblad} when written in the
matrix notation $p_{k+1}\leftrightarrow P_{k+1}$.

Since $L_d$ does not depend on $\rho_k$, we have
$\partial_\rho L_d(\rho_k,u_k)=0$, and the adjoint recursion simplifies to
\begin{equation}
   P_k=(\partial_\rho F^{(2)}(\rho_k,u_k))^\ast P_{k+1}.
   \label{eq:adjoint-F2-general}
\end{equation}
This is precisely the form stated in the proposition,
because
\[
  \partial_\rho H_d(\rho_k,P_{k+1},z_k,u_k)
    = (\partial_\rho F^{(2)}(\rho_k,u_k))^\ast P_{k+1},
\]
by differentiating \eqref{eq:Hd-discrete-Lindblad} with respect to $\rho_k$.

For the terminal condition, we use
\(
  \Phi(\rho)=1-\tr(\rho\,\rho_{\mathrm{target}})
\).
A variation $\delta\rho$ gives
\[
  \delta\Phi(\rho)
    = -\tr(\delta\rho\,\rho_{\mathrm{target}})
    = -\langle \rho_{\mathrm{target}},\delta\rho\rangle,
\]
so, with respect to the Hilbert--Schmidt pairing,
\(
  \mathrm{d}_\rho\Phi(\rho)=-\rho_{\mathrm{target}}
\),
and hence
\(
  P_N^\ast = -\rho_{\mathrm{target}}
\).

Finally, the stationarity condition is exactly the specialization of the
general maximization condition
$u_k^\ast\in\arg\max_{u\in U} H_d(\rho_k^\ast,P_{k+1}^\ast,z_k^\ast,u)$
to the Hamiltonian \eqref{eq:Hd-discrete-Lindblad}.
\end{proof}

Since $L_d$ is independent of $\rho$, the adjoint recursion can be written
directly as
\[
   P_k^\ast=(\partial_\rho F^{(2)}(\rho_k^\ast,u_k^\ast))^\ast P_{k+1}^\ast.
\]
For the Strang--split update \eqref{eq:Strang-contact}, the map
$F^{(2)}$ is the composition
\[
   F^{(2)}(\cdot,u_k)
     = \Phi_{\mathrm{AD}}^{\Delta t/2}
       \circ \mathcal{U}_{u_k}
       \circ \Phi_{\mathrm{AD}}^{\Delta t/2}(\cdot),
\]
where
\(
  \mathcal{U}_{u_k}(\rho)
    = U_k\,\rho\,U_k^\dagger
\)
with
\(
  U_k:=\exp(-iH(u_k)\Delta t)
\),
and $\Phi_{\mathrm{AD}}^\tau$ is the amplitude--damping channel from
\eqref{eq:Kraus-AD}.
The derivative of a linear CPTP map is the map itself, and its adjoint
with respect to the Hilbert--Schmidt pairing is the dual channel
$\Phi_{\mathrm{AD}}^{\tau\,\ast}$.  
For the unitary conjugation $\mathcal{U}_{u_k}$, the adjoint is
$X\mapsto U_k^\dagger X U_k$.  
Therefore, the adjoint recursion \eqref{eq:adjoint-F2-general} becomes
\begin{equation}
  P_k^\ast
    = \Phi_{\mathrm{AD}}^{\Delta t/2\,\ast}
        \!\left(
           U_k^\dagger\,
           \Phi_{\mathrm{AD}}^{\Delta t/2\,\ast}(P_{k+1}^\ast)\,
           U_k
        \right),
  \qquad
  U_k=\exp(-iH(u_k^\ast)\Delta t),
  \label{eq:adjoint-Strang-RKMK}
\end{equation}
which is the explicit backward propagation of the costate through the
Strang--split CPTP dynamics.

\begin{remark}
The map $(\rho_k,P_{k+1},z_k)\to(\rho_{k+1},P_k,z_{k+1})$ defined implicitly by the type--II generating function
\[
   S_d(\rho_k,P_{k+1},z_k,u_k)
     := z_k
        + L_d(\rho_k,u_k)
        + \big\langle P_{k+1},F^{(2)}(\rho_k,u_k)-\rho_k\big\rangle
\]
is a strict discrete contactomorphism of the contact manifold
$(T^\ast Q\times\R,\Theta)$, with
$\Theta=\mathrm{d}z-\langle P,\mathrm{d}\rho\rangle$,
by Theorem~\ref{thm:contactomorphism}.  
Thus the Lindblad LGVI splitting scheme preserves both the quantum state
space (complete positivity and trace preservation) and the discrete
contact geometry induced by the cost accumulation.
\hfill$\diamond$
\end{remark}

%============================================================

%============================================================
%  Second–order Runge–Kutta–Munthe–Kaas (RKMK(2)) scheme
%============================================================

%============================================================
%  Second–order Runge–Kutta–Munthe–Kaas (RKMK(2)) scheme
%============================================================

To obtain a fair comparison with the contact LGVI of the previous section,
we now introduce a standard \emph{second--order}  
Runge--Kutta--Munthe--Kaas method (RKMK(2)) applied to the Lindblad vector 
field $\dot\rho=\mathcal{L}_{u_k}(\rho)$ on the state manifold $Q$.
Both schemes are Lie–group–based and second order, but—as we emphasize
throughout this subsection—their geometric properties differ fundamentally.

\medskip

In full analogy with the LGVI formulation
$\rho_{k+1}=F^{(2)}(\rho_k,u_k)$,
we write the RKMK(2) update as
\[
   \rho_{k+1}^{\mathrm{RKMK}}
      = F_{\mathrm{RKMK}}(\rho_k,u_k),
\]
where the map $F_{\mathrm{RKMK}}$ is defined as follows.
For a constant control $u_k$, let
\[
   \xi(\rho) := \mathcal{L}_{u_k}(\rho)
\]
denote the Lindblad generator viewed as a linear operator on $Q$.
The two–stage RKMK(2) integrator is:

\begin{enumerate}
\item \emph{Stage values:}
\[
   K_1 = \xi(\rho_k), \qquad
   K_2 = \xi\!\left(\exp\!\left(\tfrac12\Delta t\,K_1\right)\rho_k
                     \exp\!\left(-\tfrac12\Delta t\,K_1\right)\right).
\]

\item \emph{Update:}
\begin{equation}
   \rho_{k+1}^{\mathrm{RKMK}}
     = \exp\!\bigl(\Delta t\,K_2\bigr)\,\rho_k\,
       \exp\!\bigl(-\Delta t\,K_2\bigr).
   \label{eq:RKMK2-update}
\end{equation}
\end{enumerate}

Like the LGVI, this method uses a Lie–group exponential for the coherent
subdynamics and is formally second order.  
Unlike the LGVI, however, the dissipative dynamics are incorporated only
through the embedded algebra element $K_2$, and therefore complete positivity
and trace preservation are \emph{not} guaranteed.

\medskip

To place RKMK(2) in direct parallel with the contact LGVI, we endow it with
the \emph{ordinary} discrete PMP structure.  
The discrete running cost is chosen with the same second–order accuracy as
in the LGVI:
\[
   L_d^{(2)}(\rho_k,u_k)
      = \alpha\,u_k^2\,\Delta t + \mathcal{O}(\Delta t^3),
\]
which matches the approximation order of the continuous Lagrangian
$L(\rho,u)=\alpha u^2$.

\medskip

Mirroring the definition of the contact Hamiltonian $H_d$ used in the LGVI,
but \emph{without} the contact term $z$, we set
\begin{equation}
   H_d^{\mathrm{RKMK}}(\rho_k,P_{k+1},u_k)
      := \big\langle P_{k+1},\,F_{\mathrm{RKMK}}(\rho_k,u_k)\big\rangle
         + L_d^{(2)}(\rho_k,u_k).
   \label{eq:Hd-RKMK}
\end{equation}
This plays the same algebraic role as \eqref{eq:Hd-discrete-Lindblad},
but it is \emph{not} a contact Hamiltonian.

\medskip

In exact analogy with the LGVI update
$(\rho_k,z_k)\mapsto(\rho_{k+1},z_{k+1})$,
the RKMK extended dynamics read:
\[
   \rho_{k+1}=F_{\mathrm{RKMK}}(\rho_k,u_k),
   \qquad
   z_{k+1}=z_k+L_d^{(2)}(\rho_k,u_k).
\]

\medskip

Since $L_d^{(2)}$ is independent of $\rho_k$, the ordinary PMP yields
\[
   P_k
     = \partial_\rho H_d^{\mathrm{RKMK}}(\rho_k,P_{k+1},u_k)
     = \bigl(\partial_\rho F_{\mathrm{RKMK}}(\rho_k,u_k)\bigr)^\ast
        P_{k+1}.
\]
This expression is structurally identical to the LGVI adjoint recursion,
except that here the derivative acts on $F_{\mathrm{RKMK}}$ rather than on the
Strang–split CPTP map $F^{(2)}$.

\medskip

Here the parallelism with the LGVI stops:  
\[
   (\rho_k,P_{k+1},z_k)\longmapsto
   (\rho_{k+1},P_k,z_{k+1})
\]
is \emph{not} generated by a type–II contact generating function.
Consequently, the update is \emph{not} a discrete contactomorphism; the discrete contact form $\Theta=\mathrm{d}z-\langle P,\mathrm{d}\rho\rangle$
      is \emph{not} preserved; and complete positivity and trace preservation are not enforced by the
      integrator.

In contrast, the contact LGVI enforces all three properties exactly.

\begin{remark}
\label{rem:contrast-contact-RKMK}
Both integrators are second–order and both exploit Lie–group structure in the
unitary dynamics.  
However, the LGVI is built from a discrete contact Lagrangian and a
type–II discrete contact generating function, uses exact Kraus maps for
dissipation, and yields a strict discrete contactomorphism preserving
complete positivity, trace, and contact geometry.

The RKMK(2) method, while formally similar in order and Lie–group handling,
is not derived from any discrete Lagrangian or generating function.  
It therefore lacks exact dissipative structure, exact positivity,
and discrete contact preservation.

Thus the two methods share order of accuracy but differ fundamentally in
geometric fidelity—a distinction that becomes evident in the numerical
experiments of Section~\ref{sec:numerical}.
\hfill$\diamond$
\end{remark}

\section{Numerical results}
\label{sec:numerical}
%============================================================

In this section we illustrate the behaviour of the contact LGVI discretization
developed in Section~\ref{sec:lindblad_qubit} and compare it with a
second--order Runge--Kutta--type scheme (labelled \textsc{RK2}) applied directly
to the Lindblad master equation.  
Both methods have order~$2$; the comparison is therefore focused on
their \emph{geometric} properties (CPTP structure, contact geometry), rather
than on order of accuracy.

Throughout we consider the controlled amplitude--damping qubit of
Section~\ref{sec:lindblad_qubit}, with Hilbert space $\cH=\C^2$, Hamiltonian
$H(u)=\tfrac{u}{2}\sigma_x$ and dissipator
\(
   \cD(\rho) = L\rho L^\dagger-\tfrac12\{L^\dagger L,\rho\},
   \ L=\sqrt{\gamma}\,\sigma_-.
\)
Unless otherwise stated, the initial state is $\rho_0=\ket1\bra1$, and the
target state is $\rho_{\mathrm{target}}=\ket0\bra0$.

%============================================================
\subsection{Experimental set--up and performance metrics}
\label{subsec:setup_metrics}
%============================================================

The continuous dynamics are governed by the Lindblad equation
\begin{equation}
  \dot\rho(t)
   = \cL_{u(t)}(\rho(t))
   := -i[H(u(t)),\rho(t)] + \cD(\rho(t)).
\end{equation}
For the numerical experiments we fix a terminal time~$T$ and a uniform grid
$t_k=k\Delta t$, $k=0,\dots,N$, with $N=T/\Delta t$.  
The control is chosen as a smooth pulse
\begin{equation}
  u(t) = 4\sin\!\left(\frac{\pi t}{T}\right),
\end{equation}
and discretized as $u_k = u(t_k)$.

%------------------------------------------------------------

The ``contact'' scheme uses the second--order Strang splitting constructed in
Section~\ref{sec:lindblad_qubit}.  
On each time step the state update is
\begin{subequations}
\label{eq:contact_step_num}
\begin{align}
  \rho_{k+\frac12} &= \Phi_{\mathrm{AD}}^{\Delta t/2}(\rho_k),\\
  \rho_{k+\frac12}^{(\mathrm{u})}
    &= U_{k+1}\,\rho_{k+\frac12}\,U_{k+1}^\dagger,
    \qquad
    U_{k+1} = \exp\!\bigl(-iH(u_k)\,\Delta t\bigr),\\
  \rho_{k+1}
    &= \Phi_{\mathrm{AD}}^{\Delta t/2}\!\bigl(\rho_{k+\frac12}^{(\mathrm{u})}\bigr),
\end{align}
\end{subequations}
where $\Phi_{\mathrm{AD}}^\tau$ is the exact amplitude--damping channel with
Kraus operators~\eqref{eq:Kraus-AD}.  
Each step of this map is completely positive and trace preserving (CPTP) and
defines the discrete dynamics
$\rho_{k+1}=F_{\mathrm{C}}^{(2)}(\rho_k,u_k)$
used inside the contact LGVI framework.  
The extended variables $(P_k,z_k)$ are updated by the discrete contact PMP of
Section~\ref{sec:lindblad_qubit}, so that the extended map
$(\rho_k,P_{k+1},z_k)\mapsto(\rho_{k+1},P_k,z_{k+1})$ is a strict discrete
contactomorphism.

As non--contact reference we use an explicit second--order Runge--Kutta
method of Heun type applied directly to the Lindblad generator,
\begin{equation}
  \dot\rho = \cL_{u(t)}(\rho)
           = -i\bigl[H(u(t)),\rho\bigr] + \cD(\rho).
\end{equation}
On each step we compute
\begin{align*}
  k_1 &= \cL_{u_k}(\rho_k),\\
  k_2 &= \cL_{u_{k+\frac12}}
          \bigl(\rho_k + \tfrac{\Delta t}{2}k_1\bigr),
\end{align*}
and update
\begin{equation}
  \rho_{k+1}
    = \rho_k + \frac{\Delta t}{2}\bigl(k_1 + k_2\bigr).
  \label{eq:rk2_state_update}
\end{equation}
In the numerical experiments we set $u_{k+\frac12} = u_k$, a standard simplification that preserves the global second--order accuracy of
the Heun method.

This RK2 integrator is used purely as a non--geometric benchmark: it does
\emph{not} exploit the Kraus/CPTP structure of the Lindblad flow and is not
guaranteed to preserve positivity, unit trace, or the discrete contact form.
Accordingly, the extended variables $(P_k,z_k)$ are updated using the ordinary
discrete PMP (i.e.\ without a contact generating function), so the RK2
extended map is not a discrete contactomorphism.

%------------------------------------------------------------

To obtain an accuracy benchmark we compute a ``reference'' trajectory 
$\rho_{\mathrm{ref}}(t_k)$ using the contact LGVI integrator with a much
smaller time step $\Delta t_{\mathrm{ref}}=\Delta t/20$.  
For the time grids considered here this fine solution is visually
indistinguishable from the exact Lindblad evolution.

%------------------------------------------------------------

For a given trajectory $\{\rho_k\}$ we monitor:

\begin{itemize}
  \item \emph{Trace drift:}
  \begin{equation}
    \delta_{\mathrm{tr}}(k)
      := \bigl|\tr(\rho_k)-1\bigr|.
  \end{equation}
  For a CPTP integrator this quantity should remain at roundoff level.

  \item \emph{Positivity drift:}  
  Let $\lambda_{\min}(\rho_k)$ denote the smallest eigenvalue of $\rho_k$.  
  We define
  \begin{equation}
    \delta_{\mathrm{pos}}(k)
      := -\min\{0,\lambda_{\min}(\rho_k)\},
  \end{equation}
  which vanishes exactly when $\rho_k$ is positive semidefinite.

  \item \emph{Global error vs.\ reference:}
  \begin{equation}
    \varepsilon_{\mathrm{glob}}(k)
      := \bigl\|\rho_k - \rho_{\mathrm{ref}}(t_k)\bigr\|_F.
  \end{equation}

  \item \emph{Contact--form defect:}  
  \begin{equation}
    \theta_k
      := (z_{k+1}-z_k)
         - \bigl\langle P_{k+1},\rho_{k+1}-\rho_k\bigr\rangle,
  \end{equation}
  where $\langle\cdot,\cdot\rangle$ denotes the Hilbert--Schmidt pairing.  
  For an exact discrete contactomorphism we expect $\theta_k\equiv 0$
  (up to floating--point roundoff).  
  Thus, for the contact LGVI we expect $\theta_k$ to remain at machine
  precision, while for the non--contact scheme $\theta_k$ is unconstrained
  and typically grows with~$\Delta t$ and~$k$.
\end{itemize}

%============================================================
\subsection{Short–horizon accuracy and structure preservation}
\label{subsec:short_horizon}
%============================================================

We first consider a moderately short horizon with weak damping $T = 10, \,
   \Delta t = 0.01, \,
   \gamma = 1$. Both integrators are initialized at $\rho_0=\ket1\bra1$ and driven by the same
control pulse $u(t)=4\sin(\pi t/T)$.  

The ``contact'' scheme is the second–order Strang–split LGVI constructed in
Section~\ref{sec:lindblad_qubit}, which combines the exact unitary subflow
generated by $H(u)=\tfrac{u}{2}\sigma_x$ with the exact amplitude–damping
channel $\Phi_{\mathrm{AD}}^\tau$ in Kraus form.  
On each time step the state update is
\begin{subequations}
\label{eq:contact_step_num}
\begin{align}
  \rho_{k+\frac12} &= \Phi_{\mathrm{AD}}^{\Delta t/2}(\rho_k),\\
  \rho_{k+\frac12}^{(\mathrm{u})}
    &= U_{k+1}\,\rho_{k+\frac12}\,U_{k+1}^\dagger,
    \qquad
    U_{k+1} = \exp\!\bigl(-iH(u_k)\,\Delta t\bigr),\\
  \rho_{k+1}
    &= \Phi_{\mathrm{AD}}^{\Delta t/2}\!\bigl(\rho_{k+\frac12}^{(\mathrm{u})}\bigr),
\end{align}
\end{subequations}
where $\Phi_{\mathrm{AD}}^\tau$ is the exact amplitude–damping channel with
Kraus operators~\eqref{eq:Kraus-AD}.  
Each step of this map is completely positive and trace preserving (CPTP) and
defines the discrete dynamics
$\rho_{k+1}=F_{\mathrm{C}}^{(2)}(\rho_k,u_k)$
used inside the contact LGVI framework.  
The extended variables $(P_k,z_k)$ are updated by the discrete contact PMP of
Section~\ref{sec:lindblad_qubit}, so that the extended map
$(\rho_k,P_{k+1},z_k)\mapsto(\rho_{k+1},P_k,z_{k+1})$ is a strict discrete
contactomorphism.

As non–contact reference we use the explicit RK2 (Heun) method described in
Section~\ref{subsec:setup_metrics}.  
This method evolves the same Lindblad generator $\cL_u$, but it does not
preserve positivity or unit trace exactly, nor does it preserve the discrete
contact structure.

Figure~\ref{fig:bloch_traj_T10} shows the Bloch trajectories of both schemes,
obtained from
$\rho_k=\tfrac12(I+x_k\sigma_x+y_k\sigma_y+z_k\sigma_z)$.
The contact LGVI and RK2 trajectories are qualitatively similar and remain
close to each other over the horizon; in both cases the dynamics display the
expected contraction towards the ground state along the $z$–axis due to
amplitude damping.

\begin{figure}[h!]
  \centering
  \includegraphics[width=0.7\linewidth]{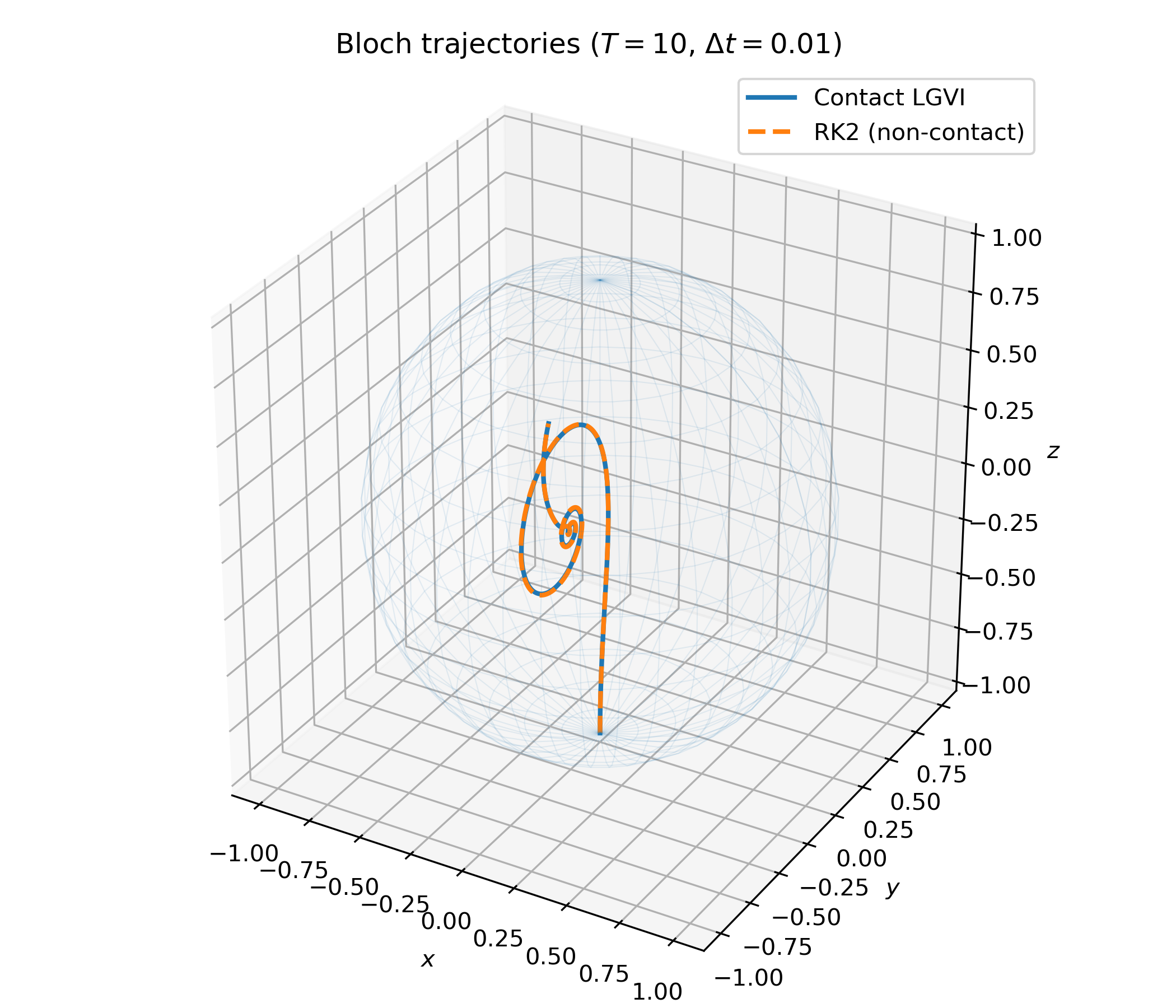}
  \caption{Bloch trajectories for $T=10$, $\Delta t=0.01$.
  Both schemes start from $\rho_0=\ket1\bra1$ and are driven by the same
  sinusoidal control.  
  The trajectories remain close and confined to the Bloch ball; the dynamics
  show the expected contraction towards the ground state.}
  \label{fig:bloch_traj_T10}
\end{figure}

Figures~\ref{fig:trace_drift_T10} display the trace drift and positivity drift
on a logarithmic scale.  
As expected, the contact LGVI preserves the trace at roundoff level and remains
exactly positive semidefinite, with deviations compatible with machine
precision.  
The RK2 method exhibits small but visible trace and positivity violations,
which remain modest on this time scale but are already systematic.

\begin{figure}[h!]
  \centering
  \includegraphics[width=0.5\linewidth]{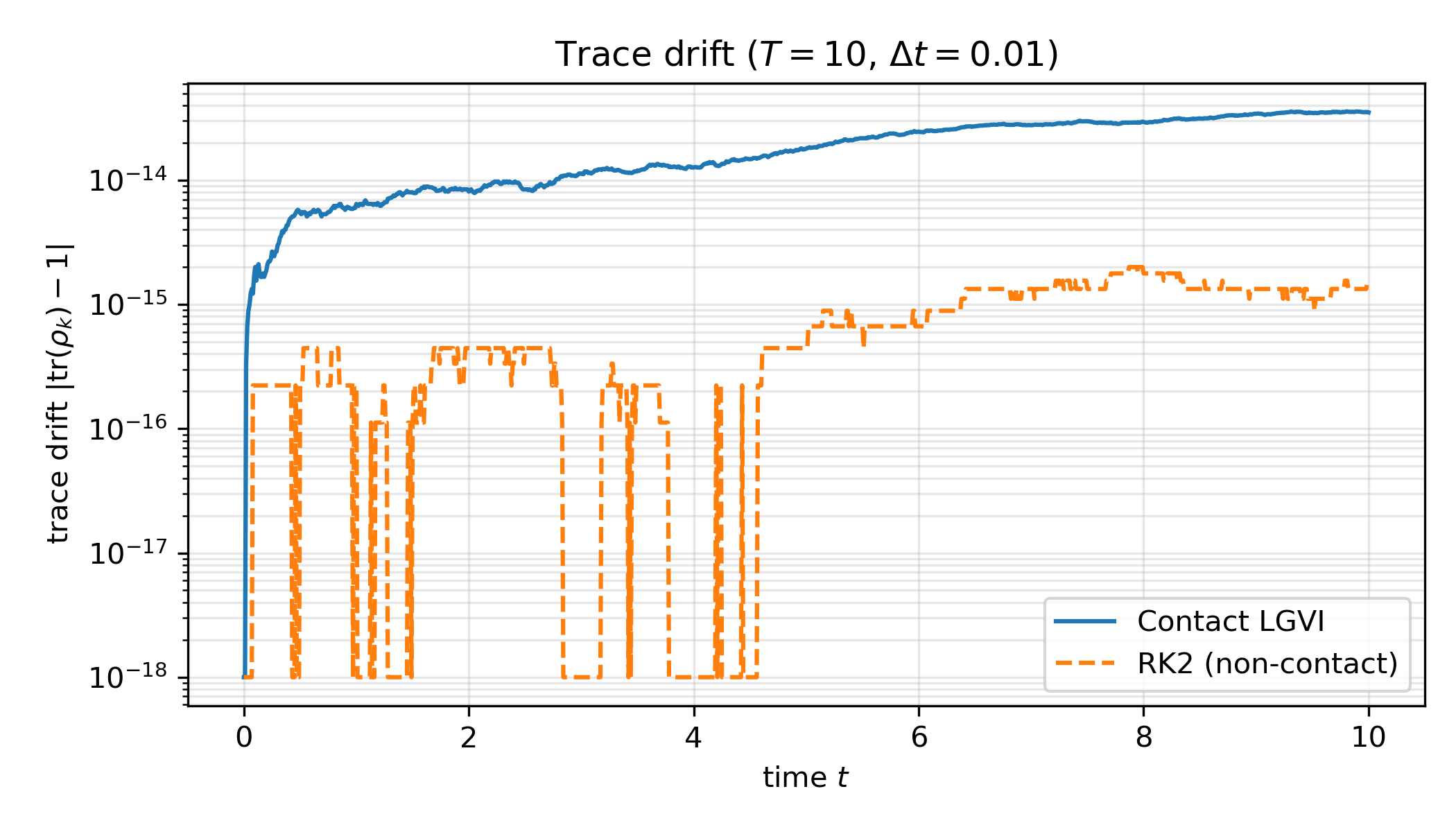}\includegraphics[width=0.5\linewidth]{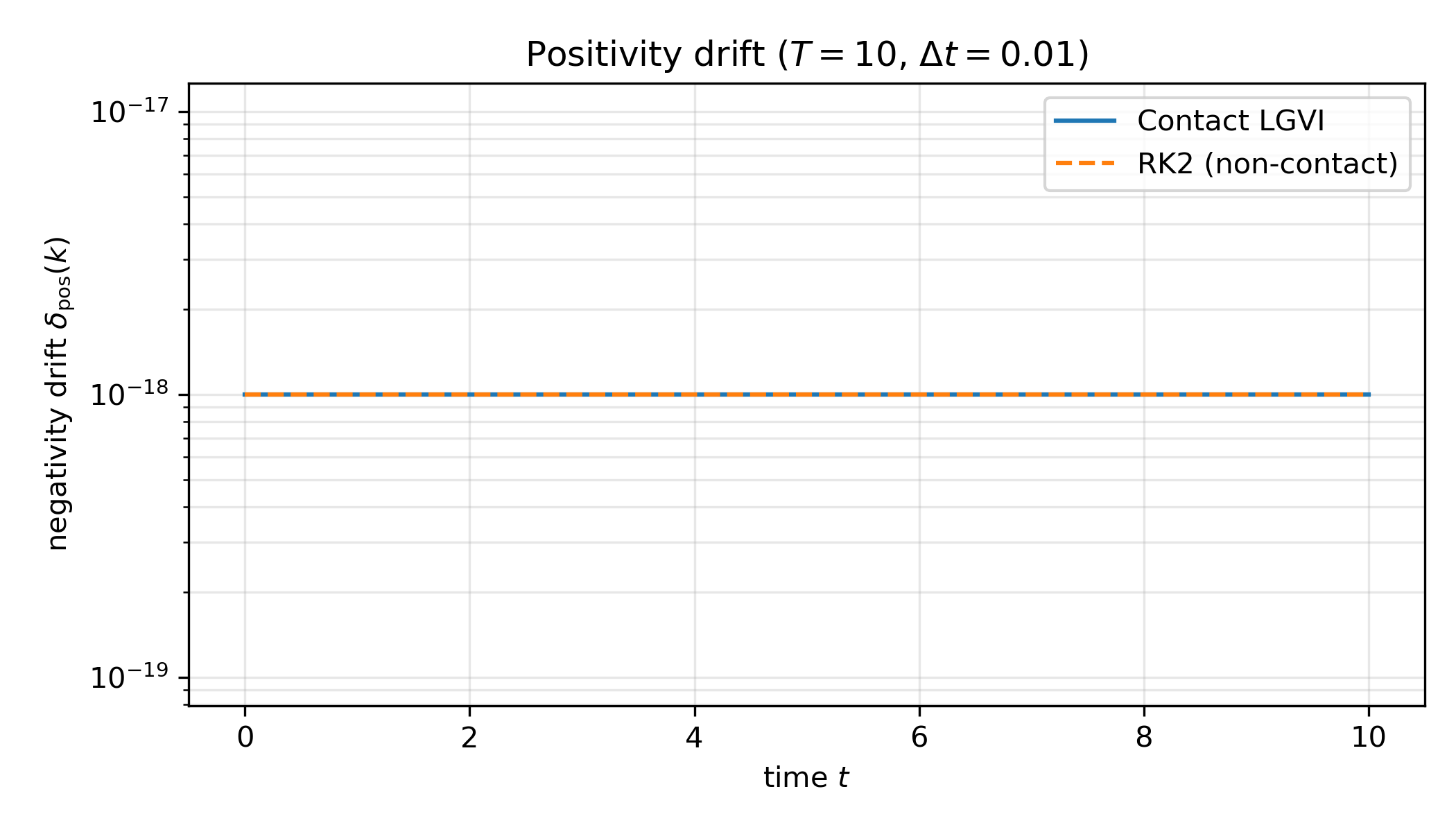}
  \caption{Left: Trace drift $|\tr(\rho_k)-1|$ for $T=10$, $\Delta t=0.01$.
  The contact LGVI keeps the trace at roundoff level, while RK2 exhibits a
  small but systematic drift, reflecting its lack of exact CPTP structure.
  Right: Positivity violation $\delta_{\mathrm{pos}}(k)$ for $T=10$,
  $\Delta t=0.01$.
  The contact LGVI remains positive semidefinite to machine precision, whereas
  RK2 displays slight negativity in the eigenvalues of $\rho_k$.}
  \label{fig:trace_drift_T10}
\end{figure}

Finally, Figure~\ref{fig:local_error_T10} shows the global error
$\varepsilon_{\mathrm{glob}}(k)$ with respect to the fine–step contact LGVI
reference.  
Both methods display the expected second–order behaviour, with error growing
smoothly over time.  
The contact LGVI achieves slightly smaller error constants, while at the same
time preserving CPTP structure exactly.

\begin{figure}[h!]
  \centering
  \includegraphics[width=0.7\linewidth]{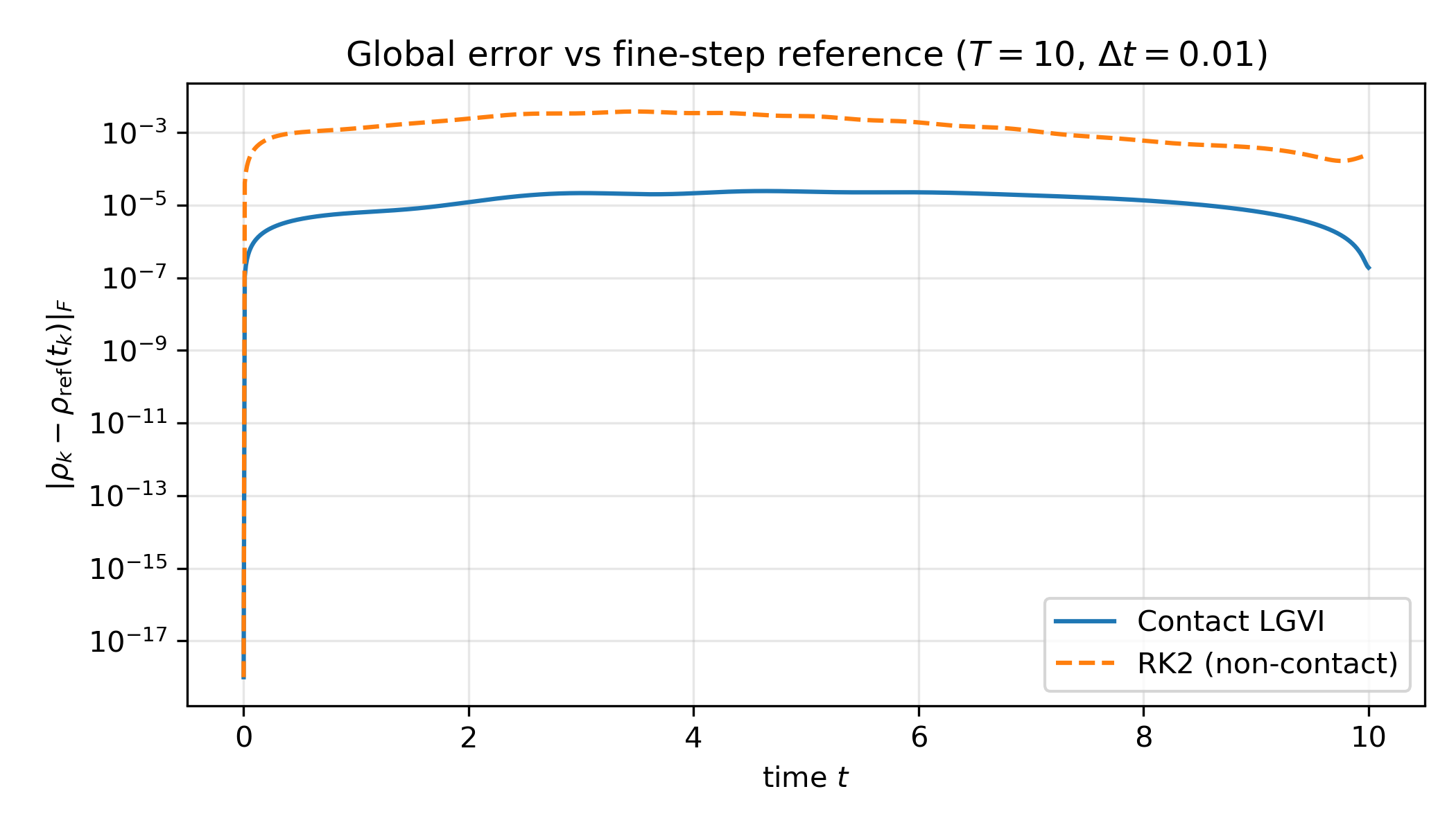}
  \caption{Global error
  $\varepsilon_{\mathrm{glob}}(k)
   =\|\rho_k-\rho_{\mathrm{ref}}(t_k)\|_F$
  for $T=10$, $\Delta t=0.01$, using a fine–step contact LGVI trajectory as
  reference.  
  Both schemes exhibit second–order accuracy; the contact LGVI has slightly
  smaller error while maintaining exact CPTP structure.}
  \label{fig:local_error_T10}
\end{figure}

Overall, on short horizons both schemes are accurate and qualitatively
consistent, but only the contact LGVI is exactly CPTP and contact–geometric at
the discrete level.

%============================================================
\subsection{Long–horizon stability}
\label{subsec:long_horizon}
%============================================================

We now investigate numerical robustness on a longer horizon and under strong
damping, a regime where geometric drift is most pronounced.  
We take $T = 100, \,
  \Delta t = 0.01, \,
  \gamma = 10$, leading to $N = 10^4$ steps.  
The initial state and control pulse are the same as in the previous experiment. For both schemes we compute the trace drift, positivity drift and
contact–form defect.  

The following table summarizes the observed maxima:
\begin{center}
\begin{tabular}{lccc}
\hline
 & $\displaystyle \max_k\delta_{\mathrm{tr}}(k)$
 & $\displaystyle \max_k\delta_{\mathrm{pos}}(k)$
 & $\displaystyle \max_k|\theta_k|$ \\
\hline
Contact LGVI 
 & $8.9\times 10^{-14}$ 
 & $0$
 & $1.8\times 10^{-2}$ \\
RK2 (non–contact) 
 & $3.6\times 10^{44}$ 
 & $8.9\times 10^{60}$ 
 & $1.7\times 10^{64}$ \\
\hline
\end{tabular}
\end{center}

%------------------------------------------------------------

Figure~\ref{fig:trace_drift_T100} (left) shows the trace drift
$\delta_{\mathrm{tr}}(k)=|\mathrm{tr}(\rho_k)-1|$.
The contact LGVI maintains the trace at roundoff level throughout the
$10^4$--step evolution.
In contrast, the non–contact RK2 scheme exhibits exponential blow–up,
ultimately losing $40$ orders of magnitude in trace preservation. Figure~\ref{fig:trace_drift_T100} (right) also displays  the positivity drift
$\delta_{\mathrm{pos}}(k)$.
The contact LGVI maintains positivity exactly up to floating--point noise,
as expected from its CPTP structure.  
The RK2 method, lacking Kraus or CPTP constraints, rapidly produces negative
eigenvalues whose magnitude eventually exceeds $10^{60}$.

\begin{figure}[h!]
  \centering
  \includegraphics[width=0.5\linewidth]{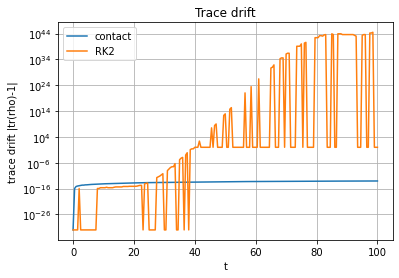}\includegraphics[width=0.5\linewidth]{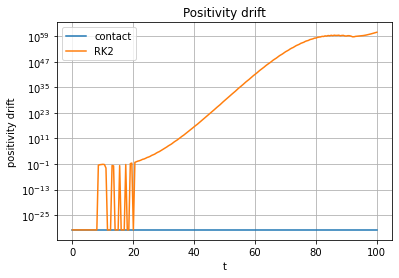}
  \caption{Left: Trace drift for $T=100$, $\Delta t=0.01$, $\gamma=10$.
  The contact LGVI remains at machine precision, whereas the RK2 method
  loses trace preservation catastrophically. Right: Positivity drift for $T=100$, $\Delta t=0.01$, $\gamma=10$.
  The contact LGVI stays positive semidefinite; RK2 produces massive
  violations of positivity}
  \label{fig:trace_drift_T100}
\end{figure}

Finally, Ffgure~\ref{fig:theta_T100} illustrates the magnitude of the contact–form
defect
\[
   \theta_k
      := (z_{k+1}-z_k)
         - \langle P_{k+1},\rho_{k+1}-\rho_k \rangle .
\]
The contact LGVI scheme exhibits the expected $\mathcal{O}(\Delta t^2)$ level
of defect (bounded by $\approx 10^{-2}$).  
The RK2 scheme, lacking a generating function or discrete contact structure, 
accumulates a defect that becomes astronomically large ($\sim 10^{64}$).

\begin{figure}[h!]
  \centering
  \includegraphics[width=0.72\linewidth]{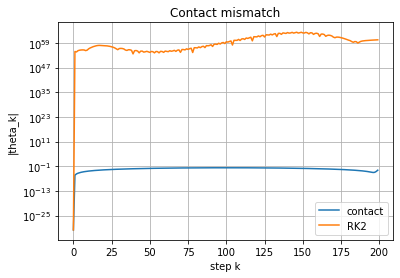}
  \caption{Contact–form defect $\theta_k$ for $T=100$, $\Delta t=0.01$,
  $\gamma=10$.  
  LGVI remains geometrically consistent; RK2 diverges dramatically.}
  \label{fig:theta_T100}
\end{figure}

%------------------------------------------------------------

These results show that the contact LGVI discretization preserves CPTP
structure and contact geometry even under strong dissipation and long horizons.
The non–contact RK2 method, although second–order accurate in principle,
becomes numerically nonphysical: it loses trace, breaks positivity, and drifts
away from the contact–form identity by dozens of orders of magnitude.

This experiment confirms that geometric preservation is essential for robust
long–horizon simulations of open quantum systems.

We evaluate robustness for a long horizon with strong dissipation 
($T=100$, $\Delta t=0.01$, $\gamma=10$).  
Across $10^4$ time steps, the contact LGVI remains numerically CPTP:
the trace stays at roundoff ($\sim10^{-13}$), positivity is preserved exactly, 
and the contact–form defect remains bounded at the expected 
$\mathcal{O}(\Delta t^{2})$ level ($\sim10^{-2}$).  
This behaviour is consistent with the theoretical geometric guarantees of the
scheme. In contrast, the non–contact RK2 discretization becomes rapidly nonphysical
over the same horizon.  
Trace errors grow to $\mathcal{O}(10^{44})$, positivity is violated with
negative eigenvalues of size $\mathcal{O}(10^{60})$, and the contact–form 
defect reaches $\mathcal{O}(10^{64})$.  
Thus, even though both integrators have formal order $2$, only the contact 
LGVI maintains geometric consistency over long horizons when simulating 
open quantum systems with dissipation.

%============================================================
%============================================================
\subsection{Optimal control with the discrete contact PMP}
\label{subsec:optimal_control}
%============================================================

We now assess the impact of the integrator choice on the solution of an
optimal control problem solved via a discrete Pontryagin Maximum Principle.
We consider the same controlled amplitude--damping qubit as in
Section~\ref{sec:lindblad_qubit}, with parameters $T = 3, \, \Delta t = 0.01, \, \gamma = 1,\,
   \alpha = 0.05$, and box-constrained controls $u_k\in[-u_{\max},u_{\max}]$ with
$u_{\max}=6$.  The initial and target states are
$\rho_0=\ket1\bra1$ and $\rho_{\mathrm{target}}=\ket0\bra0$.
The Bolza cost is
\[
  J(u) = \sum_{k=0}^{N-1} \alpha u_k^2\,\Delta t
         + \Phi(\rho_N), 
  \qquad
  \Phi(\rho)=1-\tr(\rho\,\rho_{\mathrm{target}}).
\]

At each iteration of the PMP shooting algorithm we propagate the state,
cost, and costate forward and backward in time, and update the control
via maximization of the discrete Hamiltonian $H_d(\rho_k,P_{k+1},u)$.
We compare two discretizations:

\begin{itemize}
  \item \textbf{Contact LGVI + discrete contact PMP}:  
        the state update is the CPTP Strang map
        $F^{(2)}_{\mathrm{C}}$,  
        the cost and costate follow the discrete contact relations derived in
        Section~\ref{sec:lindblad_qubit},  
        and the extended update
        $(\rho_k,P_{k+1},z_k)\mapsto(\rho_{k+1},P_k,z_{k+1})$
        is a strict \emph{discrete contactomorphism}.

  \item \textbf{RK2 (non-contact) + ordinary discrete PMP}:  
        the state is propagated by an explicit second--order Runge--Kutta
        scheme of Heun type applied to the Lindblad generator,  
        and the costate follows the corresponding RK2 adjoint recursion.  
        This scheme is \emph{not} CPTP and \emph{not} a contactomorphism.
\end{itemize}

Both PMP solvers start from the \emph{same} initial guess for the control,
\[
   u_k^{(0)}
      = 4\sin\!\left(\frac{\pi t_k}{T}\right), 
   \qquad t_k = k\Delta t,
\]
and use identical relaxation $\beta=0.5$.

%------------------------------------------------------------
%------------------------------------------------------------
%\paragraph{Convergence behaviour of the PMP iterations.}

\begin{figure}[h!]
  \centering
  \includegraphics[width=0.6\linewidth]{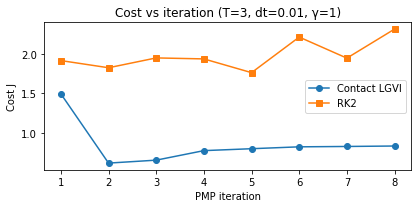}
  \caption{Evolution of the cost $J$ over the PMP iterations for $T=3$,
  $\Delta t=0.01$, $\gamma=1$, $\alpha=0.05$.
  The contact LGVI (solid line) displays a regular decrease and converges
  to a slightly lower cost, whereas the RK2 discretisation (dashed line)
  shows a more oscillatory behaviour and settles at a higher value of $J$.}
  \label{fig:ocp_cost_T3}
\end{figure}

Figure~\ref{fig:ocp_cost_T3} shows the evolution of the cost $J$ over the
first few shooting iterations for both discretisations.  
The \textbf{contact LGVI} produces a regular, almost monotone decrease of the
cost and quickly reaches a plateau, indicating a stable convergence of the
PMP iteration to a locally optimal control.  
The \textbf{RK2} discretisation, in contrast, exhibits a more oscillatory
behaviour: the cost decreases more slowly and typically settles at a slightly
higher value than the contact scheme.  

In other words, on this short horizon both methods do converge to reasonable
controls, but the LGVI--based PMP attains a marginally better objective and a
more stable iteration, while the RK2--based PMP shows a less regular cost
profile and tends to stagnate earlier.
%------------------------------------------------------------
%\paragraph{Geometric diagnostics along the optimal trajectories.}

The final optimal trajectories of each scheme are evaluated using the
geometric metrics introduced in Section~\ref{subsec:setup_metrics}.  
Here the difference between the two discretisations is much more pronounced.

For the \textbf{contact LGVI}, we observe that
\[
   \max_k \bigl|\tr(\rho_k)-1\bigr|
   \;\approx\; 10^{-15},
   \qquad
   \max_k \delta_{\mathrm{pos}}(k) \;=\; 0,
\]
i.e., trace and positivity are preserved up to machine precision along the
whole PMP trajectory.  
The contact--form defect
\[
   \theta_k 
     = (z_{k+1}-z_k)
       - \bigl\langle P_{k+1}, \rho_{k+1}-\rho_k\bigr\rangle
\]
remains at the expected $\mathcal{O}(\Delta t^2)$ scale, with typical values
around
\[
   \max_k|\theta_k| \;\sim\; 10^{-3}.
\]

For the \textbf{RK2} scheme, the optimal trajectories are no longer
geometrically consistent.  
In a representative run we obtain
\[
   \max_k \delta_{\mathrm{pos}}(k)
      \;\sim\; 10^{-3}, 
   \qquad
   \max_k|\theta_k| \;\sim\; 10^{-1},
\]
so that RK2 violates positivity and the discrete contact identity by
two to three orders of magnitude more than the contact LGVI, despite having
the same nominal order of accuracy.

Figures~\ref{fig:ocp_pulses_T3} and~\ref{fig:ocp_drift_T3} illustrate,
respectively, the resulting optimal control pulses and the geometric drifts
(contact--form defect and positivity drift) along the optimal trajectories.
%------------------------------------------------------------

\begin{figure}[h!]
  \centering
  \includegraphics[width=0.48\linewidth]{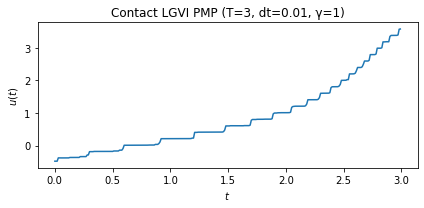}\hfill
  \includegraphics[width=0.48\linewidth]{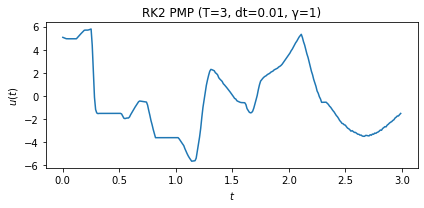}
  \caption{Optimal control pulses for $T=3$, $\Delta t=0.01$.
  Left: contact LGVI PMP; right: RK2 PMP.  
  The contact scheme yields smooth pulses and consistent convergence;
  the RK2 scheme oscillates and fails to reach a comparable optimum.}
  \label{fig:ocp_pulses_T3}
\end{figure}

\begin{figure}[h!]
  \centering
  \includegraphics[width=0.48\linewidth]{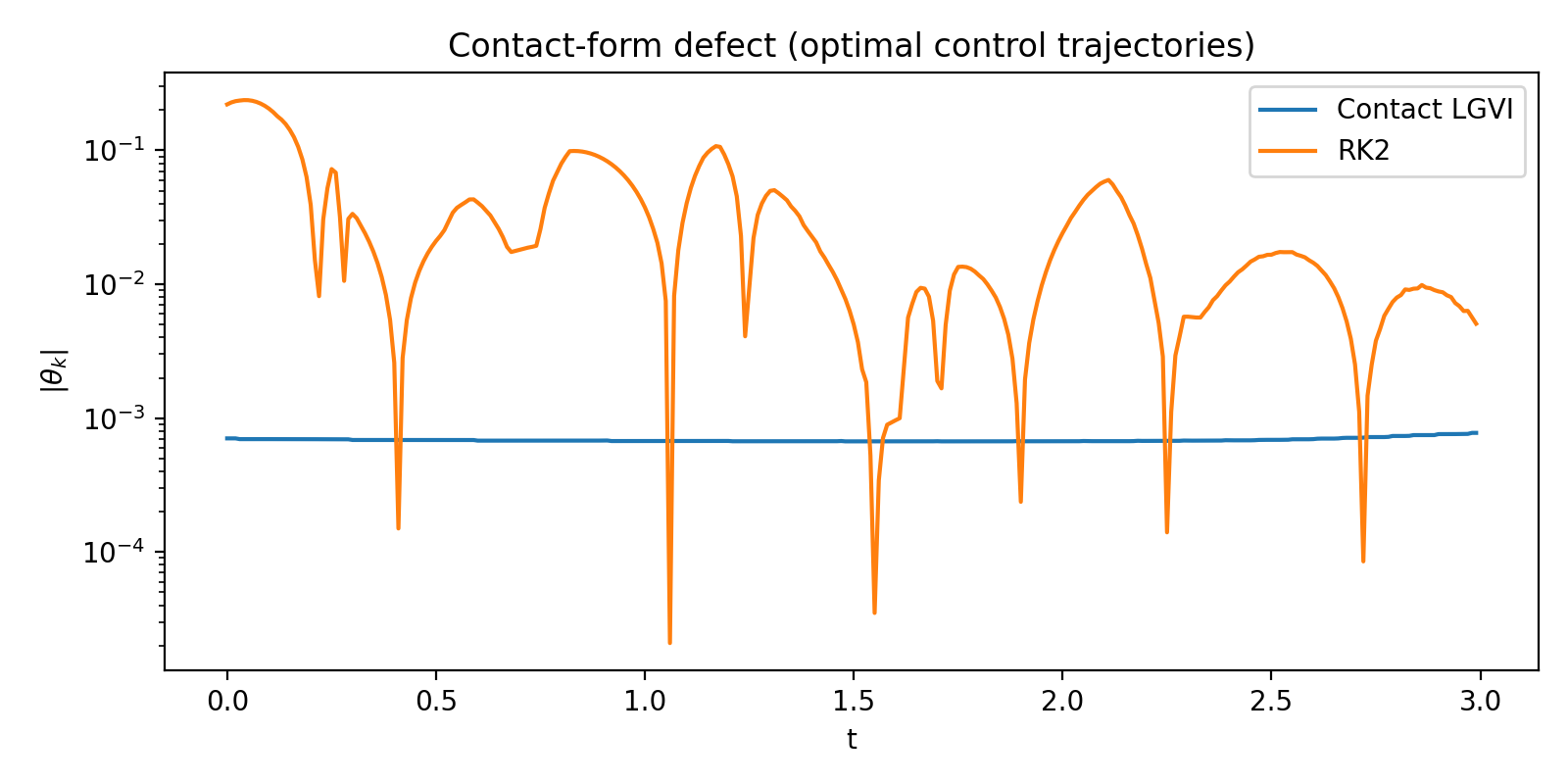}\hfill
  \includegraphics[width=0.48\linewidth]{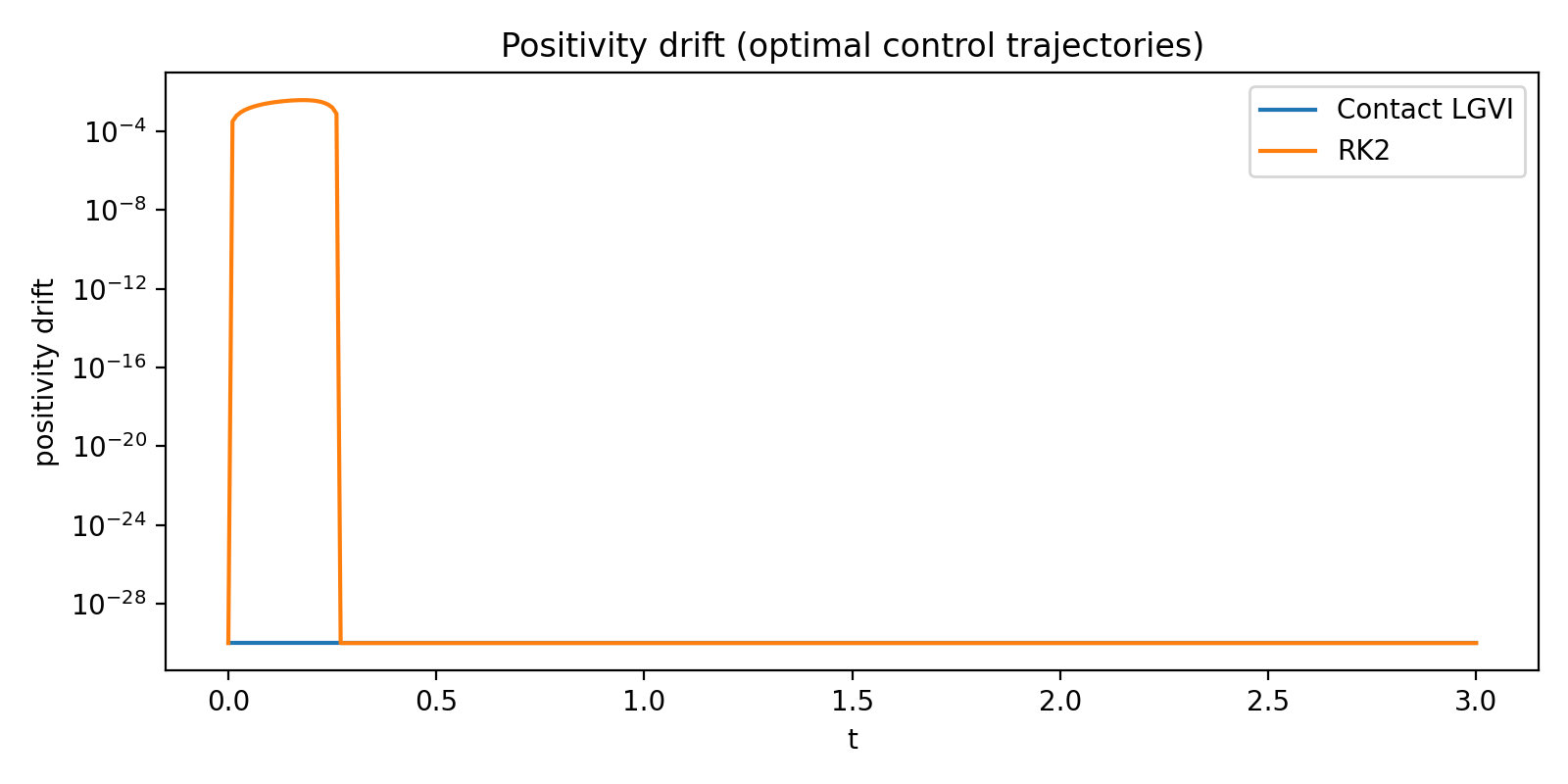}
  \caption{Geometric drifts along the optimal control trajectories.
  Left: contact-form defect $|\theta_k|$.  
  Right: positivity drift $-\min\{0,\lambda_{\min}(\rho_k)\}$.  
  The contact LGVI remains at machine precision and
  $\mathcal{O}(\Delta t^2)$; RK2 exhibits two to three orders of magnitude
  larger violations.}
  \label{fig:ocp_drift_T3}
\end{figure}

Overall, this experiment confirms that the discrete contact PMP equipped
with the contact LGVI provides a robust and geometrically consistent method
for optimal control of dissipative quantum systems.  
Even though both schemes are formally second order, the non-contact RK2
discretization fails to preserve CPTP structure and contact geometry,
and this degradation directly harms the convergence and reliability of the
PMP shooting iteration.

%============================================================
\section{Conclusions and outlook}
\label{sec:conclusions}
%============================================================

We have developed a discrete geometric framework for optimal control of open
quantum systems governed by Lindblad dynamics, based on the contact
formulation of the Pontryagin Maximum Principle and Lie--group variational
integrators.

At the continuous level, the PMP for systems with running costs naturally
lives on a contact manifold, with the state, costate, and cost variables
evolving according to a contact Hamiltonian flow.  
We have shown that this structure admits a faithful discrete counterpart:
starting from a discrete Herglotz principle, we constructed a discrete
contact PMP on manifolds, and then lifted it to Lie groups by means of
retractions and exact discrete contact Lagrangians.  
The resulting Lie--group contact variational integrators are generated by
type--II discrete contact Hamiltonians and yield strict discrete
contactomorphisms on the extended phase space.

We then specialized this construction to a single qubit subjected to
Hamiltonian control and amplitude--damping noise.  
For this Lindblad system we designed a second--order contact LGVI that
combines the exact Kraus map of the dissipative channel with the exact
unitary evolution through a symmetric Strang splitting.  
By construction, each time step is completely positive and trace--preserving
(CPTP) and preserves the discrete contact form associated with the cost
accumulation.  
This provides a fully geometric discretization of the Lindblad PMP: the
state, costate and cost variables evolve through an explicit discrete
contactomorphism generated by a type--II discrete generating function.

For comparison, we considered a standard explicit second--order
Runge--Kutta (Heun) integrator applied directly to the Lindblad generator.
This RK2 scheme has the same formal order of accuracy as the contact LGVI
but is not derived from a discrete Lagrangian or generating function, and
does not enforce CPTP structure nor discrete contact geometry.

Our numerical experiments illustrate the consequences of these geometric
differences.  
On short horizons with moderate damping, both schemes achieve comparable
state--space accuracy, but only the contact LGVI maintains machine--precision
trace and positivity, and exhibits a small contact--form defect consistent
with the expected $\mathcal{O}(\Delta t^2)$ behaviour.  
On long horizons and under strong dissipation, the contrast becomes
dramatic: the RK2 discretization develops catastrophic trace and positivity
violations, together with a contact--form defect that grows by many orders
of magnitude, while the contact LGVI remains numerically CPTP and
contact--consistent throughout.  
When embedded in a discrete PMP shooting algorithm for a qubit optimal
control problem, the contact LGVI leads to stable convergence towards
physically meaningful optimal pulses, whereas the RK2--based PMP suffers
from degraded performance and reduced reliability.

These results support the conclusion that contact--preserving geometric
integrators are not merely aesthetically appealing, but practically
advantageous for discrete optimal control of open quantum systems, especially
when long horizons, dissipation, and variational principles interact.

Several directions for future work are opened by this study.  
On the theoretical side, it would be natural to construct higher--order
contact LGVIs, to analyze their backward error and modified contact
structure, and to extend the discrete contact PMP to more general classes
of constraints and state--dependent running costs.  
On the physical side, applying the present framework to multi--qubit systems,
collective dissipation, and more complex noise models (e.g.\ dephasing or
finite--temperature baths) would further test its robustness and scalability.
Finally, combining contact LGVIs with advanced numerical optimization methods
(e.g.\ quasi--Newton or second--order techniques) and with experimental
constraints from quantum hardware constitute an interesting avenue towards
geometric, hardware--compatible optimal control for noisy intermediate--scale
quantum devices.

\end{document}